\date{\today}
\def\Rset{\mathbb{R}}
\def\pdf{f}
\def\supp{\Omega}
\newcommand{\down}[1]{\mathfrak {D}_{#1}}
\newcommand{\adown}{{\pdf}^\downarrow_{\alpha}}
\newcommand{\up}[1]{\mathfrak {U}_{#1}}
\newcommand{\aup}{{\pdf}^\uparrow_{\alpha}}
\def\Rset{\mathbb{R}}
\def\esssup{\operatorname{esssup}}
\def\argmax{\operatorname*{argmax}}
\def\sign{\operatorname{sign}}
\def\D{\mathcal{D}}
\def\one{\mathbbm{1}}
\newcommand{\minpdf}[3]{g_{#1,#2,#3}}
\def\support{\mathcal{X}}
\def\gauss{g}
\def\pdf{f}
\newcommand{\descort}[2][]{\mathfrak{E}_{#1}\ifthenelse{\isempty{#2}}{}{ {\left[ #2 \right]}}}
\newcommand{\escort}[2][]{\varepsilon_{#1}\ifthenelse{\isempty{#2}}{}{ {\left[ #2 \right]}}}
\newtheorem{theorem}{Theorem}[section]
\newtheorem{corollary}{Corollary}[section]
\newtheorem{lemma}{Lemma}[section]
\newtheorem{definition}{Definition}[section]
\newtheorem{proposition}{Proposition}[section]
\newtheorem{remark}{Remark}[section]
\numberwithin{equation}{section}
\definecolor{rougeG}{rgb}{.76,0,.12}
\definecolor{vertG}{rgb}{.07,.56,.25}
\title{A new pair of transformations and applications to generalized informational inequalities and Hausdorff moment problem	
%\DP{\sout{interchanging moments, Rényi entropy and Fisher information}}
}
\author{ {\bf Razvan Gabriel Iagar\footnotemark[1]\footnote{\textit{e-mail}: razvan.iagar@urjc.es} and David Puertas-Centeno\footnotemark[2]\footnote{\textit{e-mail}: david.puertas@urjc.es}} \\ \\
	Departamento de Matem\'atica Aplicada, \\ Ciencia e Ingeniería de los Materiales y Tecnología Electrónica\\
	Universidad Rey Juan Carlos \\
	C/~Tulip\'an s/n, 28933 M\'ostoles \\
	Madrid, Spain}
\begin{document}
\date{}
\maketitle

\begin{abstract}
We introduce a pair of transformations, which are mutually inverse, acting on rather general classes of probability densities in $\Rset$. These transformations have the property of interchanging the main informational measures such as $p-$moments, Shannon and Rényi entropies, and Fisher information. We thus apply them in order to establish extensions and generalizations of the Stam and moment-entropy inequalities in a mirrored domain of the entropic indexes. Moreover, with the aid of the two transformations we establish formal solutions to the Hausdorff entropic moment problem by connecting them with the solutions of the standard Hausdorff problem. In addition, we introduce a Fisher-like moment problem and relate it to the standard Hausdorff moment problem.
\end{abstract}

\medskip

\textbf{Keywords:} moment-entropy inequality, generalized Stam inequality, Shannon and Rényi entropies, Fisher information, transformations, Hausdorff moment problem.

\section{Introduction}
	
The establishment of informational inequalities involving moments, entropy functionals or Fisher information measures has been an important subject of research in the framework of information theory. In many of the works dealing with such inequalities, they are established under the assumption that the probability density functions are bounded, see for example~\cite{Lutwak05,Bercher12}. The hypothesis of boundedness is usually assumed in order to ensure that the corresponding functionals and measures are well-defined. However, some probability density functions that are interesting both for theoretical and practical purposes, such as the well-known Beta distribution, could be unbounded at the edges of their supports. For such densities, up to our knowledge, the main informational inequalities given, for example, in~\cite{Lutwak05,Bercher12}, are not applicable.

In order to remove this limitation and extend the range of application of some of the most significant informational inequalities, in this work we introduce and analyze a pair of transformations turning bounded and regular functions into unbounded ones. At the level of informational functionals, these transformations map, on the one hand, the moment of the transformed density in the power Rényi entropy of the original probability density function. On the other hand, the Rényi entropy of the transformed density is mapped into a Fisher-like measure of the original one. This quite unexpected interchange of the main informational measures through our newly introduced transformations allows us to establish a further range of validity of the exponents involved in the Stam and moment-entropy inequalities proved by Lutwak~\cite{Lutwak05} and Bercher~\cite{Bercher12}. These new inequalities, extended thus to what we call the \emph{mirrored domain}, are applicable to unbounded probability density functions, and their minimizers are as well unbounded. Concerning the Crámer-Rao inequality, which in the usual cases is a direct consequence of the moment-entropy and Stam inequalities by simply multiplying them (see \cite{Lutwak05}), it appears that there is no counterpart in the mirrored case. %\Ver{Decir algo de Sundman transforms}

Optimal constants and minimizing density functions in the new range of exponents are established as well in the present work, based on the knowledge of optimal constants and minimizers of the classical inequalities mentioned in the previous paragraphs. In particular, we find that the mirrored versions of the informational inequalities in the standard domain are saturated by stretched Gaussian-like probability density functions, but which are (remarkably) divergent at the edges of their supports.

Let us mention here that the extension of the Stam inequality in~\cite{Lutwak05,Bercher12} has been also generalized by Zozor et al.~\cite{Zozor17} to a further domain in which the new minimizing densities are given by a family of less studied special functions called \textit{generalized trigonometric functions}~\cite{Puertas25}, which are connected to a p-Laplacian eigenvalue problem, as studied in~\cite{Shelupsky59, Drabek99, Takeuchi12}. A surprising fact is that, while in the usual domain the extended minimizing densities are essentially given by \textit{generalized cosine functions}, in the mirrored domain they turn to be \textit{generalized sine functions}, which are not directly related to the corresponding generalized cosine (except for the standard sine and cosine functions), in the sense that the generalized sine and cosine functions are only connected through the derivative, but not by a translation of their argument.

As a byproduct of our transformations, we contribute to the study of the \emph{Hausdorff moment problem} by finding that our new transformations relate the formal solution of the entropic Hausdorff problem (for strictly decreasing functions)~\cite{Romera2001} with the formal solutions of the standard Hausdorff problem. Thus, in line with this remark, we establish the equivalence between fixing a finite set of moments and maximizing an entropy functional and fixing a finite set of Rényi (or Tsallis) entropies and maximizing Fisher-like measures.

In order to describe the previous points in a precise mathematical way, we first remind below a number of definitions and establish some notation that will be employed throughout the paper. Let us mention that all the integrals along the paper will be written over $\Rset$ for simplicity, with the convention that, when the integrand is supported on a smaller set, we understand the integral restricted to the (smaller than $\Rset$) support of the integrand.

\medskip

\noindent \textbf{$p-$th absolute moment.} The $p-$th absolute moments of a random variable with probability density function $\pdf$, with $p \geqslant 0$, are defined as
\begin{equation}\label{eq:pabs}
\mu_p[\pdf] = \int_\Rset |x|^p \, \pdf(x) \, dx  = \left\langle  \, |x|^p \, \right\rangle_\pdf,
\end{equation}
when the integral converges. The $p-$typical deviation is then defined, for $p\in(0,\infty)$, as the $p$-th absolute moment to the power $1/p$, and for the extremal cases $p=0$ and $p=\infty$ by taking the corresponding limits, as described below:
\begin{equation*}
\begin{split}
&\sigma_p[\pdf] =\left(\int_\Rset |x|^p \, \pdf(x) \, dx \right)^{\frac{1}{p}}, \quad {\rm for} \ p > 0, \\
&\sigma_0[\pdf] = \lim\limits_{p \to 0} \sigma_p[\pdf]  = \exp\left(\int_\Rset \pdf(x) \, \log|x| \, dx \right), \\
&\sigma_{\infty}[\pdf]=\lim\limits_{p\to\infty}\sigma_p[\pdf]=\esssup\big\{|x| : x\in\Rset, \pdf(x) > 0 \big\}.
\end{split}
\end{equation*}
In other words, the limit $p\to\infty$ of $\sigma_p[\pdf]$ gives the essential supremum of the support of $\pdf$. Note that, when they exist, the variance and the standard deviation are trivially recovered letting $p=2$ for a centered probability density. Although usually $\sigma_p$ is not defined for negative values of the parameter $p$, in this work the range of values $p\in(-\infty,0)$ appears in the previously mentioned mirrored domain of the parameters. We also employ exponential and logarithmic moments in the paper, thus we recall their definitions below. The logarithmic moments are defined as
\begin{equation}\label{eq:sigmaL}
\sigma_p^{(L)}[f]=\int_{\Rset} f(x)|\log |x||^pdx,
\end{equation}
while the exponential moments are given by~\cite{Mnatsakanov2013}
\begin{equation}\label{eq:sigmaE}
\sigma_{p}^{(E)}[f]=\left(\int_{\Rset} e^{-px} f(x)dx\right)^{\frac{1}{p}}=\left\langle e^{-p x}\right\rangle^{\frac 1{p}}_f.
\end{equation}

\medskip

\noindent \textbf{Rényi and Tsallis entropies.} The Rényi and Tsallis entropies of $\lambda-$order, $\lambda\neq1$, of a probability density function $\pdf$ defined on $\Rset$ (or on a Lebesgue measurable subset of $\Rset$) are defined as
\begin{equation*}
R_\lambda[\pdf] = \frac1{1-\lambda} \log\left( \int_\Rset [\pdf(x)]^\lambda \, dx \right), \qquad T_\lambda[\pdf] = \frac1{\lambda-1} \left( 1 - \int_\Rset [\pdf(x)]^\lambda \, dx \right),
\end{equation*}
while for $\lambda=1$ we recover the Shannon entropy
\begin{equation*}
\lim\limits_{\lambda \to 1}R_{\lambda}[\pdf]=\lim\limits_{\lambda \to 1} T_{\lambda}[\pdf]=S[\pdf]=-\int_\Rset \pdf(x) \, \log\pdf(x) \, dx.
\end{equation*}
In the sequel, most of the inequalities we establish involve the quantity called the Rényi entropy power, defined as
$$
N_\lambda[\pdf] = e^{R_{\lambda}[\pdf]} = \left\langle\pdf^{\lambda-1}(x)\right\rangle^\frac1{1-\lambda}_\pdf.
$$
We will also denote by $N[\pdf]=e^{S[\pdf]}$ the power Shannon entropy. Note that the Rényi and Tsallis entropies are one-to-one mapped as follows:
$$
T_{\lambda}[\pdf]=\frac{e^{(1-\lambda) R_{\lambda}[\pdf]} - 1}{1 - \lambda},
$$
and thus, $\argmax{T_\lambda}=\argmax{R_\lambda}.$ Another entropic-like functional which will be useful in this paper has been introduced in~\cite{Nicolas2002}; that is,
\begin{equation}\label{eq:Sp}
\overline S_p[f]=\left[\int_{\Rset}f(x)|\ln\,f(x)|^pdx\right]^{\frac{1}{p}},
\end{equation}

\medskip

\noindent \textbf{$(p,\lambda)-$Fisher information.} An extension of the Fisher information applicable to derivable probability density functions was introduced by Lutwak and Bercher~\cite{Lutwak05, Bercher12, Bercher12a} as follows: given $p>1$ and $\lambda \in \Rset^*$, the $(p,\lambda)-$Fisher information is defined as
\begin{equation}\label{eq:def_FI}
\phi_{p,\lambda}[\pdf]
\: = \: \left(\int_\Rset \left|\pdf(x)^{\lambda-2} \, \frac{d\pdf}{dx}(x)\right|^{p} \pdf(x) \, dx\right)^{\frac1{p\lambda}}
\: = \: \frac1{|\lambda-1|^{\frac1\lambda}} \, \left\langle \left|\frac {d \pdf^{\lambda-1}}{dx}(x)\right|^p \right\rangle_{\!\pdf}^\frac1{p \lambda},
\end{equation}
when $\pdf$ is differentiable on the closure of its support. Observe that the $(2,1)-$Fisher information reduces to the standard Fisher information of a probability density. Another interesting particular case is the $(1,\lambda)-$Fisher information, corresponding to the total variation of $\frac{\pdf^\lambda}{\lambda}$, provided that $\pdf^\lambda$ is a function with bounded variation. Moreover, for any fixed $\lambda\in\Rset^*$, we have
\begin{equation}\label{eq:Linf}
\lim\limits_{p\to\infty}\phi_{p,\lambda}^\lambda=\left\|\left(\frac{\pdf^{\lambda-1}}{\lambda-1}\right)'\right\|_{L^{\infty}(\Rset)},
\end{equation}
provided $\pdf$ is absolutely continuous and the right hand side is finite.
\medskip

\noindent \textbf{Inequalities involving the previous informational quantities.} The \textbf{Stam inequality} and its generalizations with two parameters establish that the product of Rényi entropy power and the $(p,\lambda)-$Fisher information is bounded from below. More precisely, given $p\in[1,+\infty)$ and $\lambda>\frac1{1+p^*}$, the following inequality
\begin{equation}\label{ineq:bip_Stam}
\phi_{p,\lambda}[\pdf] \, N_{\lambda}[\pdf] \: \geqslant \: \phi_{p,\lambda}[\gauss_{p,\lambda}] \, N_{\lambda}[\gauss_{p,\lambda}]\equiv K^{(1)}_{p,\lambda},
\end{equation}
holds true for any absolutely continuous probability density $\pdf$, according to~\cite{Lutwak05, Bercher12a, Zozor17}. In the right hand side of Eq. \eqref{ineq:bip_Stam}, we denote by $\gauss_{p,\lambda}$ the minimizers of the inequality, known as generalized Gaussians \cite{Lutwak05}, $q-$Gaussian~\cite{Bercher12a} or stretched deformed Gaussian~\cite{Zozor17}. These minimizers have the following explicit expression:
\begin{equation}\label{def:g_plambda}
\gauss_{p,\lambda}(x) \, = \, \frac{a_{p,\lambda}}{\exp_\lambda\left( |x|^{p^*} \right)}
\, = \, a_{p,\lambda}\, \exp_{2-\lambda}\left( - |x|^{p^*} \right),
\end{equation}
with $p>1$, $p^*=\frac{p}{p-1}$ the H\"older conjugated of $p$ and $\exp_\lambda$ is the generalized Tsallis exponential
\begin{equation}\label{def:q-exp}
\exp_\lambda(x) = \left( 1 + (1 - \lambda) \, x \right)_+^\frac1{1-\lambda}, \ \ \lambda \ne 1, \qquad \exp_1(x) \: \equiv \: \lim_{\lambda \to 1} \, \exp_\lambda(x) \: = \: \exp(x),
\end{equation}
where $(\cdot)_+ = \max(0,\cdot)$ denotes the positive part. The range of $\lambda$ in these stretched deformed Gaussians is limited to  $\lambda>1-p^*$. Observe that the limit $p \to 1$ entails $p^*\to\infty$ and then $\gauss_{1,\lambda}$ becomes a constant density over a unit length support. In the other limit case, $p \to \infty$, the inequality also holds true taking as definition for $\phi_{p,\lambda}^{\lambda}$ the essential supremum given in \eqref{eq:Linf}, see~\cite{Lutwak04}. In addition, the definition \eqref{def:g_plambda} and the inequality \eqref{ineq:bip_Stam} can be also extended to exponents $p^* \in (0,1)$ (that is, $p \in (-\infty,0)$). Finally, in the special case $p^*=p=0$ and $\lambda>1$, the minimizer of the Stam inequality has the form
$$
\gauss_{0,\lambda} = a_{0,\lambda}(-\log|x|)_+^{\frac1{\lambda-1}}, \quad a_{0,\lambda} = \frac1{2\Gamma\left(\frac{\lambda}{\lambda-1}\right)}.
$$
It is noteworthy to mention that the support of $\gauss_{p,\lambda}$ is
$$
\support_{p,\lambda}=\left\{\begin{array}{ll}\left( - (\lambda-1)^{\frac1{p^*}} \: , \: (\lambda-1)^{\frac1{p^*}}\right), & {\rm for} \ \lambda>1,\\
\Rset, & {\rm for} \ \lambda \leqslant 1,\end{array}\right.
$$
and the normalization constant $a_{p,\lambda}$ is given by
\begin{equation}\label{def_aplambda}
a_{p,\lambda} =  \left\{ \begin{array}{lll}
\displaystyle  \frac{p^*\,|1-\lambda|^\frac1{p^*}}{2\,B\left(\frac1{p^*},\frac\lambda{|1-\lambda|}+\frac{\one_{\Rset_+}(1-\lambda)}{p}\right)},& \text{for}  & \lambda\neq1,\\[7.5mm]
\displaystyle \frac{p^*}{2\Gamma(\frac1{p^*})},&\text{for} &  \lambda=1,
\end{array}\right.
\end{equation}
where $B(\cdot,\cdot)$ denotes the Beta function, $\one_A$ denotes the indicator function of the set $A$, and $\Rset_+=[0,+\infty)$. Let us stress at this point that, due to the radial symmetry, we will restrict throughout the paper for simplicity the definition of $g_{p,\lambda}$ to $x\in[0,\infty)$.

\smallskip

A second inequality, known as the \textbf{moment-entropy inequality}, involves the Rényi power entropy and the moments $\sigma_p$. More precisely, when $p^* \in[0,\infty )$ and $\lambda>\frac1{1+p^*},$ it was proved in~\cite{Lutwak04, Lutwak05, Bercher12} that
\begin{equation}\label{ineq:bip_E-M}
\frac{\sigma_{p^*}[\pdf]}{N_{\lambda}[\pdf]} \, \geqslant \, \frac{\sigma_ {p^*}[\gauss_{p,\lambda}]}{N_{\lambda}[\gauss_{p,\lambda}]}\equiv K^{(0)}_{p,\lambda},
\end{equation}
and the minimizers of Eq. \eqref{ineq:bip_E-M} are the same stretched deformed Gaussian $g_{p,\lambda}$ as for the generalized bi-parametric Stam inequality Eq.~\eqref{ineq:bip_Stam}.

The family of inequalities~\eqref{ineq:bip_Stam} has been extended in the one-dimensional case by decoupling the parameter $\lambda$ of the Rényi entropy power from the second parameter of the $(p,\lambda)-$Fisher information, obtaining thus a family of inequalities with three free parameters. Letting $p \geqslant 1, \; \beta > 0, \; \lambda > 1 - \beta p^*$, the following inequality
\begin{equation}\label{ineq:trip_Stam}
\phi_{p,\beta}[\pdf] \, N_{\lambda}[\pdf] \: \geqslant \: \phi_{p,\beta}[ \minpdf {p}{\beta}{\lambda}] \, N_{\lambda}[ \minpdf {p}{\beta}{\lambda}]\equiv
K^{(1)}_{p,\beta,\lambda}=|1+\beta-\lambda|^{-\frac{1}{\beta}}\bigg(K^{(1)}_{p,\frac{\beta}{1+\beta-\lambda}}\bigg)^{\frac{1}{1+\beta-\lambda}},
\end{equation}
was established in~\cite{Zozor17}, and its minimizers $\minpdf{p}{\beta}{\lambda}$ are given by some special functions.

\section{Up and down transformations: basic properties}\label{sec:defs}

In this section, we introduce our new transformations that are the core of the present work. As we shall see, they are mutually inverse (under some suitable restrictions on the regularity of $\pdf$), consequently leading to a bijection between two spaces of densities, as explained at the end of the section.

%%%%%%%%%%%%%%%%DOWN TRANSFORM
\subsection{The down transformation}\label{sec:down}

We begin with \emph{the down transformation}, which applies to decreasing density functions and involves a change of the independent variable mixing both $\pdf$ and its derivative.
\begin{definition}\label{def:down}
Let $\pdf: \supp\longrightarrow \Rset^+$ be a probability density function with $\supp=(x_i,x_f),$ where $-\infty<x_i<x_f\le\infty$, such that $\pdf'(x)<0,\;\forall x\in\supp.$
%when the support is semiinfinite ($x_f=\infty$), or $\pdf'(x)\gtrless 0,\;\forall x\in\supp$ when $\supp$ is bounded.
Then, for $\alpha\in\Rset,$ we define the transformation $\down{\alpha}[\pdf(x)]$ by
\begin{equation}\label{eq:down}
\down{\alpha}[\pdf(x)](s)=\pdf^\alpha(x(s))|\pdf'(x(s))|^{-1},\qquad s'(x)=\pdf^{1-\alpha}(x) |\pdf'(x)|.
%s(x)=\frac{\pdf^{2-\alpha}(x)}{\alpha-2}.
\end{equation}
\end{definition}

\noindent For simplicity, we use the abbreviated notation
$$
\pdf^{\downarrow}_\alpha\equiv\down{\alpha}[\pdf].
$$	
Notice that we work only with decreasing density functions for simplicity, but Definition \ref{def:down} can be also applied to increasing densities. We list below several interesting properties related to the down transformation. In addition, when $\pdf$ is a symmetric probability density, the transformation can be defined by symmetry.

%%%%%%%%REMARK 1

%
\begin{remark}[Probability density and supports]
$\adown=\down{\alpha}[\pdf]$ is a probability density with respect to the $s$ variable, since
\begin{equation}\label{eq:preser_prob_down}
\adown(s)ds=\pdf(x)dx
\end{equation}
Moreover, the length of the support of $\adown$  is given by
\begin{equation}
L(\supp^\downarrow_\alpha)=W_0[\adown]=\left\{\begin{array}{ll}
\left|\frac{\pdf^{2-\alpha}(x_i) - \pdf^{2-\alpha}(x_f)}{2-\alpha}\right|,& \alpha\neq 2,
\\[2mm]
\left|\ln\left(\frac{f(x_i)}{f(x_f)}\right)\right|,& \alpha= 2,
\end{array}\right.
\end{equation}
which can be finite or infinite.
\end{remark}
%

%%%%%%%%REMARK 2

%
\begin{remark}[Canonical election]\label{rem:can}
Note that the class of transformations $\down{\alpha}$ are defined up to a translation. Without loss of generality, for $\alpha\neq 2$ we consider
\begin{equation}\label{eq:can_change}
s(x)=\frac{\pdf(x)^{2-\alpha}}{\alpha-2},
\end{equation}
while for $\alpha=2$
\begin{equation}\label{eq:can_change_a2}
s(x)=-\ln\,\pdf(x).
\end{equation}
With this election, for $\alpha\neq 2$ we have
\[s_i=s(x_i)=\frac{\pdf(x_i)^{2-\alpha}}{\alpha-2},\quad s_f=s(x_f)=\frac{\pdf(x_f)^{2-\alpha}}{\alpha-2}.\]
It is worth mentioning that, with this election and for $\alpha<2$, the variable $s$ takes only negative values; that is, the quantity $(\alpha-2)s$ is always positive. For $\alpha=2$, we have
\[s_i=s(x_i)=-\ln\,\pdf(x_i),\quad s_f=s(x_f)=-\ln\,\pdf(x_f).\]
\end{remark}
%

%%%%%%%%REMARK 3

%
\begin{remark}[Supports]\label{rem:downsupp}
Note that when $\alpha<2$ the length of the support $\supp^\downarrow_\alpha$ is finite whenever $\pdf$ is a bounded density, and infinite for probability densities such that $\pdf(x)\rightarrow\infty$ as $x\to x_i$. However for $\alpha\geq 2$ the support turns out to be finite when $\pdf(x_f)>0$ and infinite when $\pdf(x_f)=0.$  Let us also observe that when $\pdf$ approaches a uniform density, then $L(\supp^\downarrow_\alpha)\rightarrow0$, that is, $\adown$ tends to be a Dirac's delta.
\end{remark}
%

%%%%%%%%REMARK 4

%
\begin{remark}[Divergence on the border]
In the case that $\pdf'(x)\rightarrow 0$ when $x\rightarrow x_i$, we observe that $\adown(s)\rightarrow \infty$ when $s\rightarrow s_i$. For probability densities such that $ \frac{\pdf(x)^\alpha}{|\pdf'(x)|}\le K$ for some $K>0$, it follows that $\adown(s)$ is bounded.
\end{remark}
%

%%%%%%%%REMARK 5

%
\begin{remark}[Derivatives and composition]\label{rem:downcomposition}
In order to apply the down transformation twice, we need the function obtained after the first transformation to be monotone. After some direct calculations, we find
\begin{equation}\label{eq:der_down}
\begin{split}
\frac{d {\adown}}{ds} =& -\frac{d [\pdf^\alpha(x(s))(\pdf'(x(s)))^{-1}]}{ds}
\\
=&-\left(\alpha \pdf^{\alpha-1}(x) x'(s)-\pdf(x(s))^\alpha [\pdf'(x(s))]^{-2} \pdf''(x(s)) x'(s)\right)
\\
=&\frac{\pdf^{2\alpha-2}(x)}{f'(x)}\left(\alpha  -\frac{\pdf(x)\pdf''(x)}{[\pdf'(x)]^{2}}\right) ,
\end{split}
\end{equation}
from where the sign of the derivative of $\adown$ depends exclusively of ${\rm sign}\left(\alpha -\frac{\pdf(x)\pdf''(x)}{[\pdf'(x)]^{2}}\right).$ Thus, if there is $K$ such that
$$
\frac{\pdf(x)\pdf''(x)}{[\pdf'(x)]^{2}}\leqslant K, \quad {\rm for \ any} \ x\in\Rset,
$$
then one can apply twice the transformation $\down{\alpha}$ for any
$$
\alpha>\sup\limits_{x\in\Rset}\left(\frac{\pdf(x)\pdf''(x)}{[\pdf'(x)]^{2}}\right).
$$
Some examples of noticeable probability density functions satisfying the latter property are the $q-$exponentials (or simply powers) or the standard exponential densities of the form $C(\lambda)e^{-\lambda|x|}$. In both cases, the quantity $\frac{\pdf(x)\pdf''(x)}{[\pdf'(x)]^{2}}$ is constant.
\end{remark}
We split the class $\mathcal{D}$ of derivable and monotone probability densities on $(x_i,x_f)$ into three different sets:
\begin{equation*}
\begin{split}
&\mathcal D_0=\{f\in\mathcal D: \lim\limits_{x\to x_i}f'(x)=0\},\\
&\mathcal D_1=\{f\in\mathcal D: \lim\limits_{x\to x_i}f'(x)=K\neq0\},\\
&\mathcal D_\infty=\{f\in\mathcal D: \lim\limits_{x\to x_i}f'(x)=-\infty\}.	
\end{split}
\end{equation*}
\begin{remark}
As a consequence of Eq. \eqref{eq:der_down}, given $f\in\mathcal D_0$ and assuming that
\begin{equation}\label{eq:lims}
\alpha\notin\left[\liminf\limits_{x\to x_i}\frac{\pdf(x)\pdf''(x)}{[\pdf'(x)]^{2}},\limsup\limits_{x\to x_i}\frac{\pdf(x)\pdf''(x)}{[\pdf'(x)]^{2}}\right],
\end{equation}
we find on the one hand that $\adown\in\mathcal D_\infty$. On the other hand, if $f\in\mathcal D_1$ and Eq. \eqref{eq:lims} is in force, then $\adown\in\mathcal D_1$. Finally, when $\pdf\in\mathcal D_\infty$ and $f(x_i)<\infty$, then $\adown\in\mathcal{D}_0$. If we allow functions such that $f(x)\to\infty$ as $x\to x_i$, in this case $\adown$ can belong to any of these three sets, depending of the value of the transformation parameter $\alpha$.

%Clearly, for $\alpha\leqslant1$, $\adown$ seems to be in $\mathcal D_0$ (OBVIANDO LA CONTRIBUCIÓN DEL PARÉNTESIS, QUE AHORA SE ME ESCAPA). Parece que existe un valor crítico de $\alpha$ para el que $\adown\in\mathcal D_1$ (suponiendo $f\in\mathcal D_\infty$), y después $\down{\alpha>\alpha_c}[f]\in\mathcal D_\infty$
\end{remark}
%
%The following definitions will be also useful throughout the paper:
%\begin{equation*}
%\begin{split}
%&\overline{\mathcal D}_{00}=\{f\in \mathcal D: \lim\limits_{x\to x_f}f(x)=0, \lim\limits_{x\to x_f}f'(x)=0\},\\
%&\overline{\mathcal D}_{10}=\{f\in \mathcal D: \lim\limits_{x\to x_f}f(x)=K>0, \lim\limits_{x\to x_f}f'(x)=0\},\\
%&\overline{\mathcal D}_{01}=\{f\in \mathcal D: \lim\limits_{x\to x_f}f(x)=0, \lim\limits_{x\to x_f}f'(x)=K<0\},\\
%&\overline{\mathcal D}_{11}=\{f\in \mathcal D: \lim\limits_{x\to x_f}f(x)=K>0, \lim\limits_{x\to x_f}f'(x)=K'<0\}.
%\end{split}
%\end{equation*}
The next result studies the behavior of the down transformation with respect to rescalings of functions. To this end, for any $\kappa\in\Rset^+$, we define the transformed density
\begin{equation}\label{def:resc}
f^{[\kappa]}(x)=\kappa f(\kappa x).
\end{equation}

\begin{proposition}[Scaling changes and down transformation]\label{prop:scaling_down}
Let $\alpha\in\Rset\setminus\{2\}$ and $\kappa\in\Rset^+$. Then, for any probability density $f$ satisfying the requirements of Definition \ref{def:down}, we have
	\begin{equation}\label{eq:down_resc}
	\down{\alpha}[f^{[\kappa]}](s)=\left(\down{\alpha}[f](s)\right)^{[\kappa^{\alpha-2}]}.
	\end{equation}
When $\alpha=2$ one finds
	\begin{equation}\label{eq:down_resc_a2}
	\down{2}[f^{[\kappa]}(s)]=\down{2}[f](s+\ln \kappa).
	\end{equation}
\end{proposition}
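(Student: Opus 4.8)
The plan is to work throughout with the canonical change of variables fixed in Remark~\ref{rem:can}, which is essential: since $\down{\alpha}$ is only defined up to a translation of the $s$-variable, the scaling identities \eqref{eq:down_resc}--\eqref{eq:down_resc_a2} can only be pinned down exactly (in particular the translation constant $\ln\kappa$ in the case $\alpha=2$) once a definite representative has been chosen. Accordingly I treat the two cases $\alpha\neq2$ and $\alpha=2$ separately, using \eqref{eq:can_change} in the first and \eqref{eq:can_change_a2} in the second. The only real bookkeeping issue is to keep the two families of variables strictly apart: the variable $s$ canonically attached to $f$ and the variable, which I will call $\tilde s$, canonically attached to the rescaled density $g:=f^{[\kappa]}$.

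For $\alpha\neq2$, I first record the elementary consequences of the definition \eqref{def:resc}, namely $g(x)=\kappa f(\kappa x)$ and hence $g'(x)=\kappa^2 f'(\kappa x)$. The canonical variable for $g$ is then
\[
\tilde s(x)=\frac{g(x)^{2-\alpha}}{\alpha-2}=\kappa^{2-\alpha}\,\frac{f(\kappa x)^{2-\alpha}}{\alpha-2}=\kappa^{2-\alpha}\,s(\kappa x),
\]
so that inverting gives $s(\kappa x)=\kappa^{\alpha-2}\tilde s$. Next I apply the defining formula \eqref{eq:down} to $g$ and extract the scaling factors,
\[
\down{\alpha}[g](\tilde s)=g^\alpha(x)\,|g'(x)|^{-1}=\kappa^{\alpha}f(\kappa x)^\alpha\cdot\kappa^{-2}|f'(\kappa x)|^{-1}=\kappa^{\alpha-2}\,\down{\alpha}[f]\big(s(\kappa x)\big),
\]
where in the last step I used that the right-hand side of \eqref{eq:down} evaluated at the point whose $f$-variable is $\kappa x$ is precisely $\down{\alpha}[f]\big(s(\kappa x)\big)$. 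Substituting $s(\kappa x)=\kappa^{\alpha-2}\tilde s$ yields $\down{\alpha}[g](\tilde s)=\kappa^{\alpha-2}\,\down{\alpha}[f]\big(\kappa^{\alpha-2}\tilde s\big)$, which is exactly $\big(\down{\alpha}[f]\big)^{[\kappa^{\alpha-2}]}(\tilde s)$ by the definition \eqref{def:resc} of the rescaling with parameter $\kappa^{\alpha-2}$. This establishes \eqref{eq:down_resc}.

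The case $\alpha=2$ follows the same three steps but with the canonical choice \eqref{eq:can_change_a2}. Here $\tilde s(x)=-\ln g(x)=-\ln\kappa-\ln f(\kappa x)=s(\kappa x)-\ln\kappa$, so the multiplicative rescaling of the argument is converted by the logarithm into an additive shift, $s(\kappa x)=\tilde s+\ln\kappa$. Applying \eqref{eq:down} with $\alpha=2$, the prefactors cancel completely, $\down{2}[g](\tilde s)=g^2(x)|g'(x)|^{-1}=\kappa^2 f(\kappa x)^2\cdot\kappa^{-2}|f'(\kappa x)|^{-1}=\down{2}[f]\big(s(\kappa x)\big)$, and substituting the variable relation gives $\down{2}[g](\tilde s)=\down{2}[f](\tilde s+\ln\kappa)$, which is \eqref{eq:down_resc_a2}. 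The step I expect to require the most care is not any of the computations, all of which are routine, but the consistent use of the canonical variable: it is what guarantees that the exponent $2-\alpha$ in \eqref{eq:can_change} produces exactly the rescaling power $\kappa^{\alpha-2}$ and, in the borderline case, the precise translation $\ln\kappa$ rather than an unspecified additive constant.
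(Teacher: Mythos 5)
Your proof is correct and follows essentially the same route as the paper's: compute the prefactor $\kappa^{\alpha-2}$ (which collapses to $1$ when $\alpha=2$) directly from the definition, then use the canonical change of variables to identify the relation between the $s$-variable of $f^{[\kappa]}$ and that of $f$, namely $s(\kappa x)=\kappa^{\alpha-2}\tilde s$ for $\alpha\neq2$ and $s(\kappa x)=\tilde s+\ln\kappa$ for $\alpha=2$. The only cosmetic difference is that you phrase this relation via the forward maps $s(\cdot)$ and $\tilde s(\cdot)$, whereas the paper states the equivalent identity $\kappa\,x(s)=x(\kappa^{\alpha-2}s)$ for the inverse maps.
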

\begin{proof}
On the one hand, after simple algebraic manipulations employing Definition~\ref{def:down} and Eq. \eqref{def:resc}, we find that for any $\alpha\in\Rset$
	\begin{equation}\label{eq:proof_scaling}
	\down{\alpha}[f^{[\kappa]}](s)=\kappa^{\alpha-2}\frac{f^\alpha (\kappa\,x(s))}{|f' (\kappa\,x(s))|}.
	\end{equation}
On the other hand, for $\alpha\neq 2$ the change of variable in Eq.~\eqref{eq:can_change} is given by
	$$
	s(x)=\frac{(f^{[\kappa]}(x))^{2-\alpha}}{\alpha-2}=\frac{\kappa^{2-\alpha} f (\kappa x)^{2-\alpha}}{\alpha-2},\quad\text{or equivalently,}\quad \kappa^{\alpha-2}s=\frac{ f (\kappa\, x(s))^{2-\alpha}}{\alpha-2}.
	$$
We derive from the above equation and Eq.~\eqref{eq:can_change} that
	\begin{equation}\label{eq:proof_scaling2}
	\kappa\, x(s)=x(\kappa^{\alpha-2}s).
	\end{equation}
We next substitute Eq.~\eqref{eq:proof_scaling2} in Eq.~\eqref{eq:proof_scaling} to obtain
	\begin{equation*}
	\down{\alpha}[f^{[\kappa]}](s)=\kappa^{\alpha-2}\frac{f^\alpha (x(\kappa^{\alpha-2}s))}{|f' (x(\kappa^{\alpha-2}s))|}=\kappa^{\alpha-2}\down{\alpha}[f](\kappa^{\alpha-2}s)=(\down{\alpha}[f](s))^{[\kappa^{\alpha-2}]},
	\end{equation*}
establishing thus Eq. \eqref{eq:down_resc} and thus completing the proof for $\alpha\in\Rset\setminus\{2\}.$  Fix now $\alpha=2$ and recall that, in this case, the change of variable in Eq.~\eqref{eq:can_change_a2} is given by
	\begin{equation*}
s(x)=-\ln(f^{[\kappa]}(x))=-\ln(\kappa\,f(\kappa\,x))=-\ln\kappa-\ln(f(\kappa\,x)),
	\end{equation*}
or equivalently,
	\begin{equation*}
s+\ln\kappa=-\ln(f(\kappa\,x(s))).
\end{equation*}
The latter gives
\begin{equation}\label{eq:proof_scaling3}
\kappa\,x(s)=x(s+\ln\kappa).
\end{equation}
Finally, we infer from Eq.~\eqref{eq:proof_scaling3} and Eq.~\eqref{eq:proof_scaling} that
$$
\down{2}[f^{[\kappa]}](s)=\frac{f^2 (\kappa\,x(s))}{|f' (\kappa\,x(s))|}=\frac{f^2 (x(s+\ln\kappa))}{|f' (x(s+\ln\kappa))|}=\down{2}[f^{[\kappa]}](s+\ln\kappa),
$$
which closes the proof.
\end{proof}
We end the section dedicated to the exploration of the down transformation by a classification of the tail as $s\to\infty$ of the down transformed density with respect to the parameter $\alpha$.
\begin{proposition}\label{prop:tail_down}[Behavior of the tail]
Let $f:[x_i,\infty)\rightarrow \Rset^+$ be a probability density as in Definition \ref{def:down} such that $f'(x)\sim-Cx^{-\eta-1}$ as $x\to\infty$ for some $C>0$ and $\eta>1$. Then for $\alpha>2$,
\begin{equation}\label{eq:tail_down}
\down{\alpha}[f](s)\sim K(C,\eta,\alpha)s^{-\widetilde \eta_\alpha},\quad \widetilde \eta_\alpha=\frac{\eta(\alpha-1)-1}{\eta(\alpha-2)},
\quad K(C,\eta,\alpha)=\frac{1}{\eta}\left(\frac{\eta}{C}\right)^{-\frac{1}{\eta}}(\alpha-2)^{\widetilde \eta_{\alpha}},
\end{equation}
as $s\to\infty$. When $\alpha=2$ we have
\begin{equation}\label{eq:tail_down_2}
\down{2}[f](s)\sim \frac{C}{\eta^2}e^{-\frac{s(\eta-1)}{\eta}}.
\end{equation}
When $\alpha<2$ the support becomes compact and the function behaves near the border (that is, $s\to 0^-$) as Eq.~\eqref{eq:tail_down}
\end{proposition}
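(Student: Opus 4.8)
The plan is to transport the power-law behaviour of $f$ and $f'$ through the canonical change of variables of Remark~\ref{rem:can}. First I would recover the asymptotics of $f$ itself from that of $f'$: since $f$ is decreasing on $[x_i,\infty)$ and is a density, it vanishes at infinity, so $f(x)=\int_x^\infty |f'(t)|\,dt$, and as $\eta>1$ this integral converges with a pure-power integrand, giving $f(x)\sim \frac{C}{\eta}x^{-\eta}$ as $x\to\infty$. Integrating an asymptotic equivalence is legitimate here thanks to monotonicity, and it is the safe direction (unlike differentiating it); it furnishes explicit leading powers of $x$ for both $f$ and $f'$ to insert into the transformation.

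For $\alpha\neq2$ I would use $s(x)=\frac{f(x)^{2-\alpha}}{\alpha-2}$ from Eq.~\eqref{eq:can_change}. Substituting the asymptotics of $f$ gives $s(x)\sim A\,x^{\eta(\alpha-2)}$ with $A=\frac{1}{\alpha-2}(\frac{C}{\eta})^{2-\alpha}$; since $x\mapsto s(x)$ is a strictly monotone bijection whose leading term is a pure power, this inverts to $x(s)\sim (s/A)^{1/[\eta(\alpha-2)]}$. Feeding $f(x)\sim\frac{C}{\eta}x^{-\eta}$ and $|f'(x)|\sim Cx^{-\eta-1}$ into $\down{\alpha}[f](s)=f(x)^\alpha|f'(x)|^{-1}$ yields $\down{\alpha}[f](s)\sim \frac{C^{\alpha-1}}{\eta^\alpha}x^{-m}$ with $m=\eta(\alpha-1)-1$. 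Composing with the inverted change of variable turns $x^{-m}$ into $|s|^{-m/[\eta(\alpha-2)]}$, i.e. into $|s|^{-\widetilde\eta_\alpha}$ with $\widetilde\eta_\alpha=\frac{\eta(\alpha-1)-1}{\eta(\alpha-2)}$, while collecting the prefactor $\frac{C^{\alpha-1}}{\eta^\alpha}A^{\widetilde\eta_\alpha}$ reproduces $K(C,\eta,\alpha)$. The sign of $\eta(\alpha-2)$ then selects the regime: for $\alpha>2$ one has $A>0$ and $s\to+\infty$ as $x\to\infty$, which is precisely Eq.~\eqref{eq:tail_down}; for $\alpha<2$ one has $A<0$ and $s\to 0^-$ while the $s$-support is a finite interval (in agreement with Remark~\ref{rem:downsupp}), and the identical computation now governs the behaviour near the border $s\to0^-$, giving the final assertion.

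For $\alpha=2$ the change of variable degenerates to $s=-\ln f(x)$ (Eq.~\eqref{eq:can_change_a2}), so that $f(x)=e^{-s}$ exactly and $f(x)^2=e^{-2s}$. From $f(x)\sim\frac{C}{\eta}x^{-\eta}$ I would solve for $x\sim(\frac{C}{\eta})^{1/\eta}e^{s/\eta}$, insert this into $|f'(x)|\sim Cx^{-\eta-1}$, and multiply by $e^{-2s}$ in $\down{2}[f](s)=f(x)^2|f'(x)|^{-1}$. The accumulated powers of $e^{s}$ combine to the exponent $-2+\frac{\eta+1}{\eta}=-\frac{\eta-1}{\eta}$, producing the exponential tail of Eq.~\eqref{eq:tail_down_2}.

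I expect the difficulty to be bookkeeping rather than conceptual. Because the transformation mixes $f$ and $f'$, I must retain both asymptotics to leading order and ensure that the $o(1)$ corrections coming from the two inputs and from the inversion of the change of variables do not contaminate the multiplicative constant. The one step that genuinely requires justification beyond algebra is the inversion $s(x)\sim A x^{\eta(\alpha-2)}\Rightarrow x(s)\sim(s/A)^{1/[\eta(\alpha-2)]}$ with the correct constant: this is valid since $s(\cdot)$ is a strictly monotone bijection with pure-power leading behaviour, but it is exactly where an erroneous constant is most easily introduced, so it is where I would be most careful.
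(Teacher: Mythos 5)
Your proposal is correct and follows essentially the same route as the paper's own proof: recover $f(x)\sim\frac{C}{\eta}x^{-\eta}$ from the hypothesis on $f'$, insert $f$ and $f'$ into $\down{\alpha}[f]=f^\alpha|f'|^{-1}$, invert the canonical change of variable $s(x)=f(x)^{2-\alpha}/(\alpha-2)$ (resp.\ $s=-\ln f$ for $\alpha=2$), and compose. One remark on the prefactor bookkeeping that you rightly flag as delicate but do not carry out: completing it yields $(\alpha-2)^{-\widetilde{\eta}_\alpha}$ rather than $(\alpha-2)^{\widetilde{\eta}_\alpha}$ in $K(C,\eta,\alpha)$, and $\frac{1}{\eta}\left(\frac{C}{\eta}\right)^{1/\eta}$ rather than $\frac{C}{\eta^{2}}$ when $\alpha=2$ (the multiplicative constant in $x(s)\sim(C/\eta)^{1/\eta}e^{s/\eta}$ survives exponentiation and cannot be dropped), so the discrepancies lie in the constants as stated in the proposition, not in your method.
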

\begin{proof}
By an easy consequence of the l'Hospital rule and the fact that $f\in L^1([x_i,\infty))$, we deduce that
$$
f(x)\sim\frac{C}{\eta}x^{-\eta}, \quad {\rm as} \ x\to\infty.
$$
We next derive from Definition ~\ref{def:down} that
\begin{equation}\label{proofeq:39}
\down{\alpha}[f](s)\sim \frac{C^{\alpha-1}}{\eta^{\alpha}}x(s)^{-\eta(\alpha-1)+1},
\end{equation}
while for any $\alpha\neq 2$, Eq.~\eqref{eq:can_change} gives
\begin{equation}\label{proofeq:40}
s(x)\sim\frac{C^{2-\alpha}x^{-\eta(2-\alpha)}}{(\alpha-2)\eta^{2-\alpha}}, \quad {\rm as} \ x\to\infty.
\end{equation}
On the one hand, for $\alpha>2$, we deduce from Eq. \eqref{proofeq:40} that
$$
x(s)\sim\left(\frac{\eta}{C}\right)^{-\frac{1}{\eta}}(\alpha-2)^{-\frac{1}{\eta(2-\alpha)}}s^{-\frac{1}{\eta(2-\alpha)}}, \quad {\rm as} \ s\to\infty,
$$
and thus simple algebraic manipulations lead to Eq.~\eqref{eq:tail_down}. On the other hand, for $\alpha<2$, we recall from Remark \ref{rem:can} that $s_f=0$ is the upper edge of the support of $\down{\alpha}[f]$ and Eq. \eqref{eq:tail_down} holds true as $s\to0$ by similar deductions as above. Finally, for $\alpha=2$, by inverting Eq.~\eqref{eq:can_change_a2} we get
$$
s(x)\sim-\ln(x^{-\eta})=\eta\ln\,x, \quad {\rm or \ equivalently}, \quad x(s)\sim e^{\frac{s}{\eta}},
$$
as $x\to\infty$, respectively $s\to\infty$. This equivalence, together with Eq.~\eqref{proofeq:39}, readily implies Eq.~\eqref{eq:tail_down_2}, completing the proof.
\end{proof}

%%%%%%%%REMARK 6

%%%%%%%%%%%%%%%%%%%%%%%%%%%%%%%%%%%%%%%%%%%%%%%% UP TRANSFORM

\subsection{The up transformation}\label{sec:up}

We introduce now a second transformation that will be called \emph{the up transformation} throughout the paper. In contrast to the down transformation, this new transformation does not require any monotonicity of the probability density function.

%%%%%%%%%%%%%%%%%%%%DEFINITION

\begin{definition}\label{def:up}
Let $\pdf: \supp\longrightarrow \mathbb R^+$ be a probability density function with $\supp=(x_i,x_f)$. For $\alpha\in\Rset\setminus\{2\}$ we introduce the up transformation $\up{\alpha}$ by
\begin{equation}\label{eq:up}
f^{\uparrow}_\alpha(u)=\up{\alpha}[\pdf(x)](u)=|(\alpha-2)x(u)|^\frac{1}{2-\alpha},\quad u'(x)= - |(\alpha-2)x|^\frac{1}{\alpha-2}f(x),
%s(x)=\frac{\pdf^{2-\alpha}(x)}{\alpha-2}.
\end{equation}
while for $\alpha=2$ we set
\begin{equation}\label{eq:up2}
f^{\uparrow}_2(u)=\up{2}[\pdf(x)](u)=e^{-x(u)},\quad u'(x)=-e^xf(x),
%s(x)=\frac{\pdf^{2-\alpha}(x)}{\alpha-2}.
\end{equation}
\end{definition}
We precise here that the definition of $u(x)$ in Eq. \eqref{eq:up} is taken up to a translation; that is, for simplicity, in the sequel we use the primitive
$$
u(x)=\int_x^{x_f}|(\alpha-2)x|^\frac{1}{\alpha-2}f(x)dx, \quad \alpha\neq2,
$$
and the similar one for $\alpha=2$, where $x_f\in\Rset\cup\{\infty\}$ is the upper edge of the support of the domain of $f$, whenever this integral is finite. In the contrary case, we can integrate from $x_i$ to $x$ or employing any intermediate point $x_0$ in $(x_i,x_f)$ if the integral is divergent at both ends.

\medskip

\noindent Observe that, as shown in Remark \ref{rem:can}, when applying the down transformation, $(\alpha-2)s(x)$ is always positive, while for the up transformation, this is not necessarily true for the analogous term $(\alpha-2)x(u)$, justifying the use of absolute values in Eq. \eqref{eq:up}. The next result proves that the up transformation is in fact is the inverse of the down transformation.
\begin{proposition}[Inversion]\label{prop:inv}
Let $\pdf$ a probability density and let $\adown=\down{\alpha}[f]$ and $\aup=\up{\alpha}[f]$ be its $\alpha-$order \textit{down /up transformations}. Then, up to a translation,
\begin{equation}
\pdf=\up{\alpha}[\adown]=\down{\alpha}[\aup].
\end{equation}
i.e., $\down{\alpha}\up{\alpha}=\up{\alpha}\down{\alpha}=\mathbb I,$ where $\mathbb I$ denotes the identity operator (up to a translation). Summarizing,
\begin{equation*}
\down{\alpha}^{-1}=\up{\alpha},\quad \up{\alpha}^{-1}=\down{\alpha}.
\end{equation*}
\end{proposition}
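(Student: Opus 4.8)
The plan is to establish the two pointwise identities $\up{\alpha}[\adown]=\pdf$ and $\down{\alpha}[\aup]=\pdf$ separately; together they show that $\up{\alpha}\down{\alpha}$ and $\down{\alpha}\up{\alpha}$ both act as the identity (up to an affine relabelling of the independent variable), which is precisely the assertion. In each case the verification is a direct substitution: I would insert the explicit form of the inner transform together with the canonical change of variable from Remark~\ref{rem:can}, and then check two things, namely that the resulting function value coincides with $\pdf$ and that the composed change of variable is affine with slope $\pm1$ (so that it is indeed a translation/reflection of $x$).

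For the composition $\up{\alpha}[\adown]$ with $\alpha\neq2$, the decisive observation is the canonical identity $(\alpha-2)s=\pdf(x)^{2-\alpha}$ coming from Eq.~\eqref{eq:can_change}. Feeding $\adown$ into the up transform \eqref{eq:up}, where now the role of the independent variable $x$ is played by $s$, the output is $|(\alpha-2)s|^{1/(2-\alpha)}=\big(\pdf^{2-\alpha}\big)^{1/(2-\alpha)}=\pdf(x)$, so the function value is recovered at once. It then remains to control the variable: I would compute $\tfrac{du}{dx}=u'(s)\,s'(x)$, substituting $u'(s)=-|(\alpha-2)s|^{1/(\alpha-2)}\adown(s)=-\pdf^{\alpha-1}/|\pdf'|$ together with $s'(x)=\pdf^{1-\alpha}|\pdf'|$ from \eqref{eq:down}; the powers of $\pdf$ and the factors $|\pdf'|$ cancel and leave a constant, confirming that $u$ and $x$ differ by an affine map of slope $\pm1$.

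For the reverse composition $\down{\alpha}[\aup]$ I would proceed symmetrically: differentiate $\aup(u)=|(\alpha-2)x(u)|^{1/(2-\alpha)}$ by the chain rule, using $u'(x)=-|(\alpha-2)x|^{1/(\alpha-2)}\pdf(x)$ from \eqref{eq:up}, insert the result into the down formula $h^\alpha/|h'|$ of \eqref{eq:down} with $h=\aup$, and observe that the exponents of $|(\alpha-2)x|$ telescope to zero, leaving exactly $\pdf(x)$; the same bookkeeping applied to $s'(u)\,u'(x)$ again yields a constant slope. The case $\alpha=2$ is handled in parallel using the logarithmic substitution $s=-\ln\pdf$ of \eqref{eq:can_change_a2} together with the exponential maps \eqref{eq:up2}, where now the relation $\pdf=e^{-s}$ plays the role that $(\alpha-2)s=\pdf^{2-\alpha}$ plays above.

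The main obstacle I anticipate is not conceptual but a matter of careful sign and absolute-value bookkeeping. Unlike the down transform, for which Remark~\ref{rem:can} guarantees that $(\alpha-2)s$ is positive, the analogous quantity $(\alpha-2)x$ in the up transform need not keep a fixed sign, so the differentiation of $|(\alpha-2)x|^{1/(2-\alpha)}$ must carry $\sign\big((\alpha-2)x\big)$ throughout. One must also confirm that the composite change of variable is genuinely affine with slope of modulus one — which is what licenses the phrase \emph{up to a translation}, the sign merely encoding a harmless reflection — and, before any of this, that the intermediate densities satisfy the monotonicity and regularity needed for the transforms to be well-defined, which is where the hypotheses of Definitions~\ref{def:down} and~\ref{def:up} and the classes $\mathcal D_0,\mathcal D_1,\mathcal D_\infty$ introduced earlier enter.
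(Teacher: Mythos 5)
Your proposal is correct and follows essentially the same route as the paper: a direct verification that uses the canonical identity $(\alpha-2)s=\pdf^{2-\alpha}$ (resp.\ $\pdf=e^{-s}$ when $\alpha=2$) to recover the function value, plus a chain-rule computation showing the composite reparametrization is affine of unit slope — the paper's treatment of $\down{\alpha}\up{\alpha}$ is literally your second computation, and its treatment of $\up{\alpha}\down{\alpha}$ merely recasts your differential bookkeeping as an integration of $(\pdf^{-1})'$. One point in your favour: your observation that the composite slope is $-1$ (a reflection composed with a translation, under the stated sign conventions) is accurate and more precise than the paper, which absorbs this by silently dropping the minus sign of $u'$ in the converse direction, so that ``up to a translation'' should indeed be read as ``up to an affine reparametrization of modulus-one slope,'' exactly as you note.
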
	
\begin{proof}
Assume first that $\alpha\neq2.$ We infer from Eq.~\eqref{eq:can_change} that
$$
x(s)=f^{-1}\left(\left[(\alpha-2)s\right]^\frac1{2-\alpha}\right),
$$
recalling that $(\alpha-2) s$ is always a positive quantity. Then
\begin{equation}\label{eq:interm1}
\begin{split}
\down{\alpha}[\pdf](s) &=\pdf\left(f^{-1}\left(\left[(\alpha-2)s\right]^\frac1{2-\alpha}\right)\right)^\alpha \left|f'\left(f^{-1}\left(\left[(\alpha-2)s\right]^\frac1{2-\alpha}\right)\right)\right|^{-1}
\\
&=\left[(\alpha-2)s\right]^\frac \alpha{2-\alpha} \left|(f^{-1})'\left(\left[(\alpha-2)s\right]^\frac1{2-\alpha}\right)\right|.
%\\
%&=& \left[(2-\alpha)|s|\right]^\frac \alpha{2-\alpha} \left|(f^{-1})'\left((2-\alpha)^\frac1{2-\alpha} |s|^\frac1{2-\alpha}\right)\right|.
\end{split}
\end{equation}
Letting $u=[(\alpha-2)s]^\frac{1}{2-\alpha}$, we deduce from Eq. \eqref{eq:interm1} that
\begin{equation*}
\left|(f^{-1})'(u)\right|=u^{-\alpha}\adown\left(\frac{u^{2-\alpha}}{\alpha-2}\right),
\end{equation*}
and by integrating the previous equality, we get
\begin{equation}\label{eq:interm2}
f^{-1}(u)=K+\int_u^{u_f} v^{-\alpha}\adown\left(\frac{v^{2-\alpha}}{\alpha-2}\right)dv=K+\int_{\frac{{u}^{2-\alpha}}{\alpha-2}}^{\frac{{u_f}^{2-\alpha}}{\alpha-2}} [(\alpha-2)s]^{\frac1{\alpha-2}}\adown\left(s\right)ds,
%\frac{{u_f}^{2-\alpha}}{\alpha-2}
\end{equation}
We readily derive from Eq. \eqref{eq:interm2} that the probabilty density $\pdf(x)$ satisfies
\begin{equation}
x=K+\int_{\frac{{f(x)}^{2-\alpha}}{\alpha-2}}^{\frac{{u_f}^{2-\alpha}}{\alpha-2}} [(\alpha-2)s]^{\frac1{\alpha-2}}\adown\left(s\right)ds.
\end{equation}
We have thus shown that $\up{\alpha}\down{\alpha}=\mathbb I,$ for any $\alpha\neq 2.$

\smallskip

Conversely, let us start from $\aup(u)$ and compute $\down{\alpha}[\aup]$. We have
\[\frac{d\aup}{du}=-[(\alpha-2)s(u)]^{\frac{\alpha-1}{2-\alpha}} s'(u)=-\frac{[(\alpha-2)s(u)]^\frac{\alpha}{2-\alpha}}{f(s(u))}, \]
from where
\[\frac{d\aup}{du}=-\frac{[\aup(u)]^\alpha}{f(s(u))},\]
hence
\begin{equation}\label{eq:interm3}
f(s)=\frac{[\aup(u(s))]^\alpha}{|\frac{d\aup}{du}(u(s))|}.
\end{equation}
Furthermore
\[u'(s)=[(\alpha-2)s]^{\frac1{\alpha-2}}f(s),\]
from where
\begin{equation}\label{eq:interm4}
s'(u)=\frac1{[(\alpha-2)s(u)]^{\frac1{\alpha-2}}f(s(u))}=\frac{\aup(u)}{f(s(u))}=\aup(u)^{1-\alpha}|(\aup)'(u)|.
\end{equation}
Note that Eqs. \eqref{eq:interm3} and \eqref{eq:interm4} ensure that $\down{\alpha}\up{\alpha}=\mathbb I$, completing the proof when $\alpha\neq 2.$

\smallskip
\noindent
Fix now $\alpha=2$ for the rest of the proof. From Eq.~\eqref{eq:can_change_a2} follows
\begin{equation}\label{eq:proof_a2(1)}
f(x(s))=e^{-s},\quad\text{that is},\quad x(s)=f^{-1}\left(e^{-s}\right),
\end{equation}
thus, applying the definition ~\eqref{def:down} for $\alpha=2$ we get
\begin{equation}\label{eq:proof_a2(2)}
\down{2}[f](s)=\frac{f\left(f^{-1}\left(e^{-s}\right)\right)^2}{f'\left(f^{-1}\left(e^{-s}\right)\right)}=e^{-2s}\left(f^{-1}\right)' \left(e^{-s}\right).
\end{equation}
We perform the change of variable $u=e^{-s}$ in Eq.~\eqref{eq:proof_a2(2)} to deduce
\begin{equation*}
u^{-2}\down{2}[f](-\ln u)=\left(f^{-1}\right)' \left(u\right)
\end{equation*}
and by integration we obtain (up to a translation)
\begin{equation*}
f^{-1}(u)=\int_{u_f}^{u} v^{-2}\down{2}[f](-\ln v)dv=-\int_{-\ln u_f}^{-\ln u} e^w \down{2}[f](w)dw=\int_{s}^{s_f}e^w\down{2}[f](w)dw,
\end{equation*}
which, together with Eq.~\eqref{eq:proof_a2(1)}, implies that $\up{2}\down{2}=\mathbb I.$

\smallskip
\noindent
We next prove the converse implication. Let us start by computing the derivative of the up transformation $\up{2}[f],$ which, together with Eq.~\eqref{eq:up2}, gives
\begin{equation}\label{eq:proof_a2(3)}
\frac{df^{\uparrow}_2(u)}{du}=-e^{-s(u)}s'(u)=\frac{1}{e^{2s(u)}f(s(u))}.
\end{equation}
We infer from the definition of the down transformation and Eq.~\eqref{eq:proof_a2(3)} that
\begin{equation}
\down{2}[f^{\uparrow}_2](s)=\frac{(f^{\uparrow}_2(u(s)))^2}{(f^{\uparrow}_2)'(u(s))}=e^{-2s}e^{2s}f(s)=f(s),
\end{equation}
which concludes the proof.
\end{proof}
%

%%%%%%%%MORE REMARKS....

%
\begin{remark}
From Definition~\ref{def:up} we infer that the length of the support of $\up{\alpha}[f]$, for $\alpha\neq2$, can be finite or infinite and is given by $\int_{x_i}^{x_f} |(\alpha-2)v|^\frac1{\alpha-2}f(v)dv$, independent of the translation and of the convergence or not of this integral. The same occurs for $\alpha=2$.
\end{remark}

We describe in the next two results, similarly as at the end of Section \ref{sec:down}, the behavior of the up transformation with respect to the scaling Eq. \eqref{def:resc} and when applied to probability density function $f$ presenting a suitable algebraic tail as $x\to\infty$. Due to technical difficulties, we have to leave out of these statements the limiting case $\alpha=2$, which will be considered separately.

\begin{proposition}\label{prop:scaling_up}[Scaling changes and up transformation] Let $\alpha\in\Rset\setminus\{2\}$ and $\kappa\in\Rset^+$. Then, for any probability density function $f$, we have
		\begin{equation}\label{eq:scaling_up}
		\up{\alpha}[f^{[\kappa]}](u)=\left(\up{\alpha}[f](u)\right)^{[\kappa^{1/(\alpha-2)}]},
		\end{equation}
recalling that $f^{[\kappa]}$ has been defined in Eq. \eqref{def:resc}.
\end{proposition}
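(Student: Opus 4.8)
The plan is to mirror the proof of Proposition~\ref{prop:scaling_down}: I would first determine how the change-of-variable function $x(u)$ implicit in Definition~\ref{def:up} transforms when $f$ is replaced by $f^{[\kappa]}$, and then substitute this into the explicit formula $\up{\alpha}[f](u)=|(\alpha-2)x(u)|^{1/(2-\alpha)}$ from Eq.~\eqref{eq:up}. Throughout I abbreviate $\mu=\kappa^{1/(\alpha-2)}$, the scaling factor appearing on the right-hand side of Eq.~\eqref{eq:scaling_up}, and I write $x(u)$ for the inverse of the primitive $u$ attached to $f$ and $\tilde x(u)$ for the one attached to $f^{[\kappa]}$.

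The key step is a change-of-variable identity for the primitive. Starting from
$$u(x)=\int_x^{x_f}|(\alpha-2)v|^{\frac1{\alpha-2}}f(v)\,dv,$$
I would compute the analogous primitive $\tilde u$ for $f^{[\kappa]}(v)=\kappa f(\kappa v)$ and apply the substitution $w=\kappa v$. The substitution produces three powers of $\kappa$: a factor $\kappa^{-1/(\alpha-2)}$ pulled out of $|(\alpha-2)(w/\kappa)|^{1/(\alpha-2)}$, a factor $\kappa$ coming from $f^{[\kappa]}$, and a factor $\kappa^{-1}$ from $dv=\kappa^{-1}dw$; these collapse to $\kappa^{-1/(\alpha-2)}$, giving
$$\tilde u(x)=\kappa^{-\frac1{\alpha-2}}u(\kappa x)=\mu^{-1}u(\kappa x).$$
Inverting this relation — if $u=\tilde u(x)$ then $\mu u=u(\kappa x)$, hence $\kappa x=x(\mu u)$ — yields $\tilde x(u)=\kappa^{-1}x(\mu u)$.

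It then remains to substitute $\tilde x(u)=\kappa^{-1}x(\mu u)$ into Eq.~\eqref{eq:up}, obtaining
$$\up{\alpha}[f^{[\kappa]}](u)=\left|(\alpha-2)\kappa^{-1}x(\mu u)\right|^{\frac1{2-\alpha}}=\kappa^{-\frac1{2-\alpha}}\left|(\alpha-2)x(\mu u)\right|^{\frac1{2-\alpha}}=\mu\,\up{\alpha}[f](\mu u),$$
where I used $\kappa^{-1/(2-\alpha)}=\kappa^{1/(\alpha-2)}=\mu$. By the definition of the rescaling in Eq.~\eqref{def:resc}, the right-hand side is exactly $(\up{\alpha}[f])^{[\mu]}(u)$ with $\mu=\kappa^{1/(\alpha-2)}$, which is the claimed identity.

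The only genuine obstacle is the bookkeeping: one must carefully track the three separate powers of $\kappa$ created by the substitution and check that they combine into exactly $\kappa^{-1/(\alpha-2)}$, and then that this matches the exponent $1/(2-\alpha)$ built into $\up{\alpha}$. A minor technical caveat is that the primitive integral may diverge; in that case I would run the identical computation at the level of the differential relation $u'(x)=-|(\alpha-2)x|^{1/(\alpha-2)}f(x)$ rather than the explicit integral, which is legitimate since, as in Definition~\ref{def:up}, the statement is asserted only up to a translation.
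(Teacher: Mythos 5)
Your proof is correct, but it follows a genuinely different route from the paper's. You argue directly: you transform the primitive $u(x)$ under the substitution $w=\kappa v$, obtain $\tilde u(x)=\kappa^{-1/(\alpha-2)}u(\kappa x)$, invert to get $\tilde x(u)=\kappa^{-1}x(\kappa^{1/(\alpha-2)}u)$, and substitute into Eq.~\eqref{eq:up}; the exponent bookkeeping you describe is exactly right and the three powers of $\kappa$ do collapse to $\kappa^{-1/(\alpha-2)}$. The paper instead gives a two-line indirect argument: it applies $\down{\alpha}$ to the right-hand side of Eq.~\eqref{eq:scaling_up}, invokes Proposition~\ref{prop:scaling_down} and the inversion Proposition~\ref{prop:inv} to show that both sides of the claimed identity have the same down transform, namely $f^{[\kappa]}$, and concludes by injectivity of $\down{\alpha}$. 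The paper's route is shorter and reuses established machinery, but it silently assumes that $\down{\alpha}$ can legitimately be applied to (and is injective on) the transformed densities in question, which in principle requires checking monotonicity of $\up{\alpha}[f]$; your direct computation sidesteps that entirely and works for an arbitrary density $f$, exactly as the proposition is stated. Your closing caveat about a possibly divergent primitive --- falling back on the differential relation $u'(x)=-|(\alpha-2)x|^{1/(\alpha-2)}f(x)$ and the fact that the transformation is defined only up to a translation --- is the right way to handle the anchoring of the integral.
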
	
\begin{proof}
We apply the down transformation to the right hand side of Eq. \eqref{eq:scaling_up} and we obtain from Proposition \ref{prop:scaling_down} and Proposition \ref{prop:inv} that
$$
\down{\alpha}\left[\left(\up{\alpha}[f](u)\right)^{[\kappa^{1/(\alpha-2)}]}\right](s)
=\left(\down{\alpha}\left[\up{\alpha}[f]\right](s)\right)^{[\kappa]}=f^{[\kappa]}(x)=\down{\alpha}\left[\up{\alpha}[f^{[\kappa]}](u)\right](s).
$$
The injectivity of the down transformation then leads to Eq. \eqref{eq:scaling_up}, completing the proof.
\end{proof}
\begin{proposition}\label{prop:tail_up}[The behavior of the tail of the up transformed density]
Let $f:[x_i,\infty)\rightarrow \Rset^+$ be a probability density function such that $f(x)\sim Cx^{-\eta},$ as $x\to\infty$, for some $\eta>1$ and $C>0$. Set $\alpha_c=2+\frac{1}{\eta-1}>2$. Then, for any $\alpha\in(2,\alpha_c)$, the transformed density $\up{\alpha}[f]$ has an unbounded support and the tail decays as
\begin{equation}\label{propeq:tail_up}
	\up{\alpha}[f](u)\sim\left|K_1 u\right|^{\frac 1{\eta(\alpha-2)+1-\alpha}}, \quad {\rm as} \ u\to-\infty,
\end{equation}
for some $K_1>0$. When $\alpha=\alpha_c>2$, $\up{\alpha_c}[f]$ has an exponential decay, while for either $\alpha<2$ or $\alpha>\alpha_c$, the support becomes compact. Finally, $\up{\alpha}[f]$ tends to the uniform density in the limit $\alpha\to\infty$.
\end{proposition}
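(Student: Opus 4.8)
The plan is to work directly with the defining integral of the up transformation and extract the asymptotics of $x(u)$, which then controls $\up{\alpha}[f](u)=|(\alpha-2)x(u)|^{1/(2-\alpha)}$. To lighten the notation I would set $\beta=\frac{1}{\alpha-2}$, so that $\up{\alpha}[f](u)=|(\alpha-2)x(u)|^{-\beta}$ and $u'(x)=-|(\alpha-2)x|^{\beta}f(x)$. Since $f(x)\sim Cx^{-\eta}$, the integrand governing $u$ satisfies $|(\alpha-2)x|^{\beta}f(x)\sim|\alpha-2|^{\beta}Cx^{\beta-\eta}$ as $x\to\infty$, and the whole dichotomy in the statement is dictated by comparing $\beta-\eta$ with $-1$: the integral $\int^{\infty}v^{\beta-\eta}\,dv$ converges exactly when $\beta-\eta<-1$, i.e. when $\frac{1}{\alpha-2}<\eta-1$, which for $\alpha>2$ is precisely $\alpha>\alpha_c=2+\frac{1}{\eta-1}$ and holds automatically for $\alpha<2$ (where $\beta<0$).

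First I would treat the main regime $\alpha\in(2,\alpha_c)$. Here $\beta-\eta+1>0$, so the primitive diverges and, integrating the equivalence for $u'$ from an intermediate point, I obtain $u(x)\sim-|\alpha-2|^{\beta}C\,\frac{x^{\beta-\eta+1}}{\beta-\eta+1}\to-\infty$; in particular the support of $\up{\alpha}[f]$ is unbounded towards $u\to-\infty$. Because $u$ is strictly decreasing in $x$ (as $u'<0$), this asymptotic equivalence can be inverted to give $x(u)\sim c\,|u|^{1/(\beta-\eta+1)}$ for an explicit $c>0$, and substituting into $\up{\alpha}[f](u)=|\alpha-2|^{-\beta}x(u)^{-\beta}$ produces a pure power of $|u|$ with exponent $-\frac{\beta}{\beta-\eta+1}$. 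A short computation with $\beta=\frac1{\alpha-2}$ reduces this exponent to $-\frac{1}{1-(\eta-1)(\alpha-2)}=\frac{1}{\eta(\alpha-2)+1-\alpha}$, which is exactly Eq.~\eqref{propeq:tail_up}, the remaining prefactors collecting into a single constant $K_1>0$.

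The boundary and remaining regimes follow by the same mechanism. For $\alpha=\alpha_c$ one has $\beta-\eta=-1$, so the primitive diverges logarithmically, $u(x)\sim-|\alpha-2|^{\beta}C\ln x$, whence $x(u)\sim\exp\bigl(|u|/(|\alpha-2|^{\beta}C)\bigr)$ and $\up{\alpha_c}[f](u)=|\alpha-2|^{-\beta}x(u)^{-\beta}$ decays exponentially in $|u|$. For $\alpha<2$ or $\alpha>\alpha_c$ the tail integral $\int^{\infty}|(\alpha-2)v|^{\beta}f(v)\,dv$ converges, so $u(x)$ tends to a finite limit as $x\to\infty$; since the total variation of $u$ is then finite, the image of $u$ is a bounded interval and the support of $\up{\alpha}[f]$ is compact. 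Finally, for the limit $\alpha\to\infty$ I would observe that $\beta=\frac1{\alpha-2}\to0$, so $(\alpha-2)^{\beta}=(\alpha-2)^{1/(\alpha-2)}\to1$ and $|(\alpha-2)x(u)|^{-\beta}\to1$ pointwise, while the support length $\int_{x_i}^{x_f}|(\alpha-2)v|^{\beta}f(v)\,dv\to\int f=1$; hence $\up{\alpha}[f]$ converges to the density identically equal to $1$ on an interval of length $1$, i.e. the uniform density.

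The step I expect to be the main obstacle is the inversion of the asymptotic relation between $u$ and $x$: one must pass rigorously from $u(x)\sim g(x)$ to $x(u)\sim g^{-1}(u)$, which requires the strict monotonicity of $u$ together with care that the equivalence $\sim$ survives composition with the power (respectively logarithm) and its inverse. Justifying the integration of the asymptotic equivalence of $u'$ is routine once one notes that the tail is non-integrable precisely in the relevant ranges, and the bookkeeping of the constant $K_1$ is elementary algebra once this inversion is secured.
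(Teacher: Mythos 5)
Your proposal is correct and follows essentially the same route as the paper: both compute the asymptotics of $u(x)=-\int^x|(\alpha-2)v|^{1/(\alpha-2)}f(v)\,dv$ using $f(v)\sim Cv^{-\eta}$, classify the three regimes by whether $\frac{1}{\alpha-2}-\eta$ is greater than, equal to, or less than $-1$, invert the resulting power (or logarithmic) equivalence to get $x(u)$, and substitute into $|(\alpha-2)x(u)|^{1/(2-\alpha)}$; your substitution $\beta=\frac{1}{\alpha-2}$ is only a notational convenience and your exponent computation matches Eq.~\eqref{propeq:tail_up}. Your treatment of the limit $\alpha\to\infty$ is in fact slightly more explicit than the paper's, which simply asserts the uniform limit from the formula for the transform.
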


\begin{proof}
Let us fix $\alpha>2$. We infer from the definition of the up transformation that
\begin{equation}\label{eq:interm7}
\up{\alpha}[f](u)=|(\alpha-2)x(u)|^{\frac{1}{2-\alpha}},
\end{equation}
where
\begin{equation}\label{eq:interm5}
\begin{split}
u(x)&=-\int_{x_i}^{x}|(\alpha-2)v|^{\frac{1}{\alpha-2}}f(v)\,dv\simeq-\int_{x_i}^{x}|(\alpha-2)v|^{\frac{1}{\alpha-2}}Cv^{-\eta}\,dv\\
&=-C(\alpha-2)^{\frac{1}{\alpha-2}}\int_{x_i}^{x}v^{\frac{1}{\alpha-2}-\eta}\,dv.
\end{split}
\end{equation}
We now split the analysis into three cases:

$\bullet$ if $\alpha\in(2,\alpha_c)$, we deduce that
$$
\frac{1}{\alpha-2}-\eta\in(-1,\infty),
$$
thus the improper integral in Eq. \eqref{eq:interm5} is infinite and thus the support of $\up{\alpha}[f]$ is unbounded. Moreover, we further infer from Eq. \eqref{eq:interm5} that
\begin{equation}\label{eq:interm6}
u(x)\sim-\frac{C(\alpha-2)^{\frac{\alpha-1}{\alpha-2}}}{\alpha-1-\eta(\alpha-2)}x^{\frac{1}{\alpha-2}-\eta+1}, \quad {\rm as} \ x\to\infty,
\end{equation}
noticing that
$$
\alpha-1-\eta(\alpha-2)>0, \quad {\rm since} \quad 2<\alpha<\alpha_c.
$$
By inverting the equivalence Eq. \eqref{eq:interm6} and thus expressing $x(u)$ in terms of $u$, we are led to the tail of $\up{\alpha}[f]$ as in Eq. \eqref{propeq:tail_up}, with some constant $K_1>0$ that can be made explicit in terms of $\alpha$, $C$ and $\eta$ (we omit here the precise formula for simplicity).

$\bullet$ if $\alpha=\alpha_c$, then $\frac{1}{\alpha-2}-\eta=-1$ and we obtain from Eq. \eqref{eq:interm5} that
$$
u(x)=-C(\eta-1)^{1-\eta}\ln\,\frac{x_i}{x}, \quad {\rm or \ equivalently}, \quad x(u)=x_ie^{-\frac{|u|(\eta-1)^{\eta-1}}{C}},
$$
and we obtain the claimed exponential tail from Eq. \eqref{eq:interm7}.

$\bullet$ if $\alpha\in(\alpha_c,\infty)$, the function $u(x)$ given by the integral in Eq. \eqref{eq:interm5} is now bounded, and thus $\up{\alpha}[f]$ is compactly supported. A similar property holds true also for $\alpha<2$, since $\frac{1}{\alpha-2}-\eta<-\eta<-1$ and thus $u(x)$ defined in Eq. \eqref{eq:interm5} remains bounded. Finally, the uniform (constant) density on a bounded support is easily obtained from Eq. \eqref{eq:interm7} by taking limits as $\alpha\to\infty$.
\end{proof}
We refrain from studying exact expressions of the tail behavior in the case $\alpha=2$ in Proposition \ref{prop:tail_up}, since it leads to very complex expressions in terms of special functions (if any). However, in the next proposition we give some results concerning the tail of $\up2[f]$ for some particular forms of the density $f$.
\begin{proposition}\label{prop:up2_tail}
Let $f_j:[x_i,\infty)\rightarrow \Rset^+$, $j=1,2,3$, be probability density functions such that
$$
f_1(x)\sim C_1e^{-x},\qquad f_2(x)\sim C_2 e^{-\xi x},\qquad f_3(x)\sim C_3x^{-\eta}
$$
as $x\to\infty$, for some $\eta>1,\,\xi\in(0,1)$ and $C_j>0$. Then, the transformed density $\up{2}[f_j]$ has an unbounded support and the following statements hold true
\begin{enumerate}
\item $\up{2}[f_1]$  decay exponentially and, consequently, $\sigma_p[\up{2}[f_1]]<\infty$ for any $p>0.$
\item $\up{2}[f_2]$  decay as $[(1-\xi)|u|+k]^{\frac1{\xi-1}}$ and then $\sigma_p[\up{2}[f_2]]<\infty$ for any $p\in\left(0,\frac{\xi}{1-\xi}\right).$
\item $\up{2}[f_3]$ has no finite moments; that is, $\sigma_p[\up{2}[f_3]]=\infty$ for any $p>0.$
\end{enumerate}
\end{proposition}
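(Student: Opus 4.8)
The plan is to exploit the defining relations of the up transformation at $\alpha=2$, namely $\up{2}[f](u)=e^{-x(u)}$ together with $u'(x)=-e^xf(x)$ from Eq.~\eqref{eq:up2}. Since in all three cases the integrand $e^vf_j(v)$ fails to decay as $v\to\infty$ (it tends to $C_1$, grows like $C_2e^{(1-\xi)v}$, or like $C_3v^{-\eta}e^v$, respectively), the primitive $\int_x^{\infty}e^vf_j(v)\,dv$ diverges, so I would use the alternative normalization $u(x)=-\int_{x_i}^{x}e^vf_j(v)\,dv$ allowed after Definition~\ref{def:up}. This $u$ is admissible, strictly decreasing, and forced to satisfy $u\to-\infty$ as $x\to\infty$, which already establishes that each $\up{2}[f_j]$ has unbounded support (extending to $u\to-\infty$). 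The remaining tail analysis then reduces, in every case, to finding the leading asymptotics of $u(x)$, inverting to obtain $x(u)$, and substituting into $e^{-x(u)}$.

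For statement (1), $e^vf_1(v)\sim C_1$ gives $u(x)\sim -C_1x$, hence $x(u)\sim|u|/C_1$ and $\up{2}[f_1](u)\sim e^{-|u|/C_1}$; the exponential decay makes $\sigma_p$ finite for every $p>0$. For statement (2), $e^vf_2(v)\sim C_2e^{(1-\xi)v}$ integrates to $u(x)\sim -\frac{C_2}{1-\xi}e^{(1-\xi)x}$, and inverting yields $x(u)\sim\frac{1}{1-\xi}\ln\!\big(\frac{(1-\xi)|u|}{C_2}\big)$, whence $\up{2}[f_2](u)\sim\big(\frac{(1-\xi)|u|}{C_2}\big)^{-1/(1-\xi)}$, which matches the claimed form $[(1-\xi)|u|+k]^{1/(\xi-1)}$. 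The moment $\sigma_p$ then reduces to testing convergence at infinity of $\int^{\infty}|u|^{\,p-1/(1-\xi)}\,d|u|$, finite precisely when $p-\frac{1}{1-\xi}<-1$, i.e. $p<\frac{\xi}{1-\xi}$, as stated.

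The delicate case is statement (3), where $e^vf_3(v)\sim C_3v^{-\eta}e^v$. Here I would first establish the asymptotic $\int^{x}v^{-\eta}e^v\,dv\sim x^{-\eta}e^x$ (by l'Hospital, whose ratio equals $1/(1-\eta/x)\to1$, or by one integration by parts leaving a lower-order remainder), so that $u(x)\sim -C_3x^{-\eta}e^x$ and thus $|u|\sim C_3x^{-\eta}e^x$. The crux is inverting this relation: taking logarithms gives $\ln|u|\sim x-\eta\ln x$, so to leading order $x\sim\ln|u|$, and a bootstrap substituting this back produces the refinement $x=\ln|u|+\eta\ln\ln|u|+O(1)$. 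Consequently $\up{2}[f_3](u)=e^{-x(u)}\asymp\frac{1}{|u|(\ln|u|)^{\eta}}$ as $u\to-\infty$. Finally $\sigma_p$ reduces to $\int^{\infty}\frac{|u|^{\,p-1}}{(\ln|u|)^{\eta}}\,d|u|$, which after the substitution $w=\ln|u|$ becomes $\int^{\infty}\frac{e^{pw}}{w^{\eta}}\,dw$; this diverges for every $p>0$, giving $\sigma_p[\up{2}[f_3]]=\infty$.

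The main obstacle is precisely the inversion in case (3): unlike the clean algebraic and exponential inversions of cases (1)--(2), recovering $x(u)$ from $|u|\sim C_3x^{-\eta}e^x$ requires controlling nested logarithmic corrections, and the power $(\ln|u|)^{-\eta}$ in the tail—needed to pin down the slowly varying factor—emerges only from the second-order term $\eta\ln\ln|u|$ in the expansion of $x(u)$. It is worth noting that for the divergence conclusion alone a one-sided lower bound $\up{2}[f_3](u)\gtrsim \frac{c}{|u|(\ln|u|)^{\eta}}$ for large $|u|$ already suffices, so the argument can be kept at the level of asymptotic inequalities rather than exact equivalences, somewhat mitigating this difficulty.
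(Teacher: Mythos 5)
Your proposal is correct, and for statements (1) and (2) it coincides with the paper's own computation: integrate $u'(x)=-e^{x}f_j(x)$ from $x_i$, invert the leading asymptotics of $u(x)$, and substitute into $e^{-x(u)}$. The genuine difference is in statement (3). The paper deliberately avoids inverting $|u|\sim C_3x^{-\eta}e^{x}$: it writes the moment directly as an integral in the original variable, $\int |u|^p\,\up{2}[f_3](u)\,du=\int\bigl|\int_{x_i}^{x}e^{v}f_3(v)\,dv\bigr|^p f_3(x)\,dx$, and then bounds it below by $\int_M^\infty|e^{x}-e^{M}|^p x^{-\eta p-\eta}\,dx=\infty$ using the crude estimate $v^{-\eta}\geqslant x^{-\eta}$ on $[M,x]$. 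You instead carry out the inversion via a bootstrap, obtaining $x(u)=\ln|u|+\eta\ln\ln|u|+O(1)$ and hence the sharp tail $\up{2}[f_3](u)\asymp |u|^{-1}(\ln|u|)^{-\eta}$, from which divergence of $\int^\infty e^{pw}w^{-\eta}\,dw$ follows. Your route is more delicate but yields strictly more information (the exact slowly varying correction to the $|u|^{-1}$ tail, which the paper never identifies); the paper's route is more elementary and robust, since a one-sided comparison in the $x$ variable suffices for the divergence claim. Your own closing remark --- that a lower bound $\up{2}[f_3](u)\gtrsim c\,|u|^{-1}(\ln|u|)^{-\eta}$ already suffices --- is essentially the observation that lets the paper bypass the asymptotic expansion entirely. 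Both arguments are sound; each of your asymptotic steps (the l'Hospital equivalence $\int^x v^{-\eta}e^v\,dv\sim x^{-\eta}e^x$, the two-term expansion of $x(u)$, and the final substitution $w=\ln|u|$) checks out.
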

\begin{proof}
We give a formal proof, assuming that $x$ is taken sufficiently large in the lines below. Starting from $f_1(x)\simeq e^{-x},$ when $\alpha=2$ one finds
$$
\up{2}[f_1]=e^{-x(u)}, \quad u(x)\simeq -\int_{x_i}^{x} e^ve^{-v} dv=x_i-x
$$
from where $\up 2[e^{-x}]\simeq \overline{C}_1e^{u}=\overline{C}_1e^{-|u|}.$

For $f_2\simeq e^{-\xi x}$, $\xi\in(0,1)$, one finds
$$
u(x)\simeq -\int_{x_i}^{x} e^ve^{-\xi v} dv=-\frac{e^{(1-\xi)x}-k}{1-\xi},
$$
that is,
$$
x(u)\simeq
\frac{ \log[(\xi-1)u+k]}{1-\xi},
$$
hence
$$\up 2[e^{-\xi x}]=e^{-x(u)}\simeq[(\xi-1)u+k]^{\frac1{\xi-1}}=\left[(1-\xi)|u|+k\right]^{\frac1{\xi-1}}.$$

Finally, starting from $f_3(x)\simeq x^{-\eta}$, when $\alpha=2$ the change of variable is given by
$$
u(x)\simeq -\int_{x_i}^{x} e^vv^{-\eta } dv,
$$
and then
$$
\int_\Rset |u|^p\; \up{2}[f_3](u) du = \int_\Rset \left|\int_{x_i}^{x} e^vf_3(v)dv\right|^pf_3(x) dx\sim\int_\Rset\left|\int_{x_i}^{x} e^vv^{-\eta } dv\right|^p x^{-\eta} dx.
$$
Now we observe that, for any $p>0$ and $\eta>1$, for $M>0$ sufficiently large we have
\begin{equation*}
\int_{\Rset}\left|\int_{M}^{x} e^vv^{-\eta } dv\right|^p x^{-\eta} dx \geqslant \int_{\Rset}\left|\int_{M}^{x} e^v dv\right|^p x^{-\eta p -\eta} dx  \geqslant \int_{M}^{\infty}\left|e^{x}-e^{M}\right|^p x^{-\eta p -\eta} dx=\infty.
\end{equation*}
A rigorous proof follows along the same lines by splitting the integrals in all the steps above into integrals from $x_i$ to $M$ and from $M$ to $x$ for $M$ sufficiently large.
\end{proof}

\begin{remark}
Note that the up transformation $\up{\alpha}$ with $\alpha\neq 2$ commutes with the scaling, as stated in the in Eq.~\eqref{eq:scaling_up} in Proposition~\ref{prop:scaling_up}. In stark contrast, this fact is not true when $\alpha=2$. Actually, as highlighted in Proposition~\ref{prop:up2_tail}, the effect in the tail behavior of the transform $\up 2$ on densities $f_1$ and $f_2$ is completely different, while $f_1$ and $f_2$ are equal up to a scaling change. More precisely, $\up2[f_2]$ is a heavy-tailed density but $\up2[f_1]$ has exponential decay.
\end{remark}

%%%%%%%%%%%%%%%%%%%%%%%%%%%%%%%%%%%%%%%%%%%U/D AND INFORMATIONAL MEASURES

\section{Up/down transformations and informational measures}\label{sec:measures}

In this section we explore, in a first step, the behavior of the main informational measures of the transformed up/down probability density functions with respect to their non-transformed counterparts. In a second step, we apply these results to connect the Hausdorff moment problem with its entropic (see Romera et al.~\cite{Romera2001}) and Fisher counterparts. Furthermore, some implications in the framework of the MaxEnt approach are investigated.

\subsection{Moments, Shannon and Rényi entropy and Fisher information}

We start with some simple relations between the power R\'enyi entropy, the $p$-typical deviation and the $(p,\lambda)-$Fisher information, that we gather in the next result.

\begin{lemma}\label{lem:MEF}
Let $\pdf$ be a probability density and $\aup$ and $\adown$ its up/down transformations. Then, if $\alpha\in\Rset\setminus\{2\}$, the following equalities hold true:
\begin{equation}\label{eq:ME}
\sigma_p[\adown]=\frac{N_{1+(2-\alpha)p}^{\alpha-2}[\pdf]}{|2-\alpha|},\quad\text{or equivalently,}\quad N_{\lambda}[\aup]=\left(|2-\alpha| \sigma_{\frac{\lambda-1}{2-\alpha}}[\pdf]\right)^{\frac{1}{\alpha-2}},
\end{equation}
and
\begin{equation}\label{eq:EF}
N_{\lambda}[\adown]=\phi_{1-\lambda,2-\alpha}^{2-\alpha}[\pdf],\quad\text{or equivalently,}\quad  \phi_{p,\beta}[f^{\uparrow}_{2-\beta}]=\left(N_{1-p}[\pdf]\right)^{\frac1{\beta}}.
\end{equation}
Finally, for the Shannon entropy we have
\begin{equation}\label{eq:SUD}
S[\adown]=\alpha S[f]+\big \langle \log \left| f'\right|\big \rangle,\qquad S[\aup]=\frac1{2-\alpha}\left\langle \log |x|\right \rangle+\frac{\log|2-\alpha|}{2-\alpha}.
\end{equation}
\end{lemma}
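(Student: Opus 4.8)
The plan is to reduce every identity to two mass-preservation relations combined with the explicit canonical changes of variable, which turn an integral in the transformed variable into an integral of $\pdf$ against $dx$. From Eq.~\eqref{eq:preser_prob_down} we have $\adown(s)\,ds=\pdf(x)\,dx$, and the analogous relation $\aup(u)\,du=\pdf(x)\,dx$ follows from $\down{\alpha}\up{\alpha}=\mathbb I$ (Proposition~\ref{prop:inv}). Together with the canonical choice $s=\pdf^{2-\alpha}/(\alpha-2)$ of Eq.~\eqref{eq:can_change}, which gives $|s|=\pdf^{2-\alpha}/|2-\alpha|$, and the form $\aup=|(\alpha-2)x|^{1/(2-\alpha)}$ of Eq.~\eqref{eq:up}, these are essentially the only inputs required; all that remains is exponent bookkeeping and recognizing the standard functionals.

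For the first equality in \eqref{eq:ME} I would write $\sigma_p[\adown]^p=\int|s|^p\adown(s)\,ds=\int|s|^p\pdf(x)\,dx$ and insert $|s|=\pdf^{2-\alpha}/|2-\alpha|$, obtaining $|2-\alpha|^{-p}\int\pdf^{\,(2-\alpha)p+1}\,dx$; since $\int\pdf^{\lambda}\,dx=N_\lambda[\pdf]^{1-\lambda}$ with $\lambda=1+(2-\alpha)p$, taking $p$-th roots yields the claim. For the first equality in \eqref{eq:EF} I would substitute $\adown=\pdf^\alpha|\pdf'|^{-1}$ and $ds=\pdf^{1-\alpha}|\pdf'|\,dx$ into $\int\adown^\lambda\,ds$ and collect exponents to reach $\int\pdf^{\,\alpha(\lambda-1)+1}|\pdf'|^{1-\lambda}\,dx$, which by the definition \eqref{eq:def_FI} is exactly $\phi_{1-\lambda,2-\alpha}[\pdf]^{(1-\lambda)(2-\alpha)}$; raising to the power $1/(1-\lambda)$ gives $N_\lambda[\adown]=\phi_{1-\lambda,2-\alpha}^{2-\alpha}[\pdf]$. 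The two ``equivalent'' forms then come for free: replacing $\pdf$ by $\aup$ and using $\down{\alpha}[\aup]=\pdf$, then reparametrizing by $\lambda=1+(2-\alpha)p$ and by $(p,\beta)=(1-\lambda,2-\alpha)$ respectively, transposes each identity into its mirrored version.

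For the Shannon relations \eqref{eq:SUD} I would change variables and split the logarithm. Since $\log\adown=\alpha\log\pdf-\log|\pdf'|$, the identity $\adown(s)\,ds=\pdf(x)\,dx$ gives $S[\adown]=-\int(\alpha\log\pdf-\log|\pdf'|)\pdf\,dx=\alpha S[\pdf]+\langle\log|\pdf'|\rangle$. Likewise, $\log\aup=\tfrac1{2-\alpha}\log|(\alpha-2)x|$ together with $\int\pdf\,dx=1$ separates $S[\aup]=-\int\log\aup\,\pdf\,dx$ into the $\langle\log|x|\rangle$-term and the constant $\log|2-\alpha|$-term displayed in \eqref{eq:SUD}. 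As an independent consistency check, letting $\lambda\to1$ in \eqref{eq:ME} and \eqref{eq:EF} and using $\sigma_0[\pdf]=\exp\langle\log|x|\rangle$ (respectively the first-order expansion of $\phi_{p,\beta}$ as $p\to0$, which produces $\tfrac1\beta\langle\log|\pdf^{\beta-2}\pdf'|\rangle$) recovers the two entropy relations.

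The exponent bookkeeping is routine; genuine care is needed in three places. First, signs and absolute values: because $(\alpha-2)s>0$ always but $(\alpha-2)x$ need not be (cf. the comment following Definition~\ref{def:up}), the moduli must be kept as written, and the orientation reversal coming from $x\mapsto u$ being decreasing has to be tracked so that $\aup(u)\,du=\pdf(x)\,dx$ carries the correct sign. Second, one must ensure the integrals involved converge, which is where the regularity and monotonicity hypotheses of Definition~\ref{def:down} enter and make the substitutions legitimate. Third, the case $\alpha=2$ is excluded throughout, and the reparametrizations used for the equivalent forms must be checked to remain in the admissible ranges of $\lambda$, $p$ and $\beta$. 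I expect the orientation/sign tracking in the up-transformation identities to be the main obstacle, everything else being direct computation.
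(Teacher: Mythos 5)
Your approach is the same as the paper's: use $\adown(s)\,ds=\pdf(x)\,dx$ (and its up-counterpart), insert the canonical change of variable, and collect exponents. The computations for both identities in \eqref{eq:ME}, both in \eqref{eq:EF}, and for $S[\adown]$ are correct and match the paper's proof step for step.

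The one place your argument does not close is $S[\aup]$. Your intermediate identity $S[\aup]=-\int\log\aup(u(x))\,\pdf(x)\,dx$ is correct, but carrying it out gives
\begin{equation*}
S[\aup]=-\frac{1}{2-\alpha}\big\langle\log|(\alpha-2)x|\big\rangle
=\frac{1}{\alpha-2}\big\langle\log|x|\big\rangle+\frac{\log|2-\alpha|}{\alpha-2},
\end{equation*}
i.e.\ the \emph{negative} of the right-hand side displayed in \eqref{eq:SUD}; it does not ``separate into the terms displayed'' as you assert. In fact the displayed formula appears to carry a sign slip (the paper's own computation drops a minus when substituting $du=-|(\alpha-2)x|^{1/(\alpha-2)}\pdf\,dx$ without also reversing the orientation of the integral). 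A concrete check: for $\pdf=\one_{(0,1)}$ and $\alpha=3$ one finds $\aup(u)=(1-2u)^{-1/2}$ on $(0,\tfrac12)$, whose Shannon entropy equals $-1$, whereas $\frac{1}{2-\alpha}\langle\log|x|\rangle+\frac{\log|2-\alpha|}{2-\alpha}=+1$. Your own proposed consistency check settles the matter: letting $\lambda\to1$ in the second identity of \eqref{eq:ME} gives $e^{S[\aup]}=\big(|2-\alpha|\,e^{\langle\log|x|\rangle}\big)^{1/(\alpha-2)}$, hence $S[\aup]=\frac{1}{\alpha-2}\langle\log|x|\rangle+\frac{\log|2-\alpha|}{\alpha-2}$. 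So your computation is right where it is explicit, but the final claim of agreement with \eqref{eq:SUD} is the step that fails; you should actually run the limit check you suggested and state the corrected identity rather than asserting it matches the display.
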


Note that the left equality in Eq. \eqref{eq:ME} can be seen as a particular case of \cite[Theorem 2]{Mnatsakanov2008}, but all the other equalities, to the best of our knowledge, are completely new. We give below a full proof for the sake of completeness.
\begin{proof}
We proceed by direct calculation employing Definition \ref{def:down}, taking into account that $\alpha\neq2$. We have
\begin{equation*}
\begin{split}
\sigma_p[\down{\alpha}[\pdf]]&=\left[\int_{\Rset}|s|^p\down{\alpha}[\pdf](s)ds\right]^{\frac{1}{p}}
=\left[\int_{\Rset}|s|^p\frac{f^{\alpha}(x(s))}{|f'(x(s))|}ds\right]^{\frac{1}{p}}\\
&=\frac1{|2-\alpha|}\left[\int_{\mathbb R^+}\pdf(x)^{1+(2-\alpha)p}dx\right]^{\frac{1}{p}}=\frac{1}{|2-\alpha|}N_{1+(2-\alpha)p}^{\alpha-2}[\pdf],
\end{split}
\end{equation*}
from which the first equality in Eq. \eqref{eq:ME} follows. The second equality in Eq. \eqref{eq:ME} is immediately obtained from the previous one by setting $\lambda=1+(2-\alpha)p$ and taking into account that the down and up transformations are mutually inverse, as established in Proposition \ref{prop:inv}. Observe also that $\alpha\neq2$ is equivalent to $\lambda=1+(2-\alpha)p\neq1$, thus the limiting case $\lambda=1$ cannot be attained in \eqref{eq:ME}.

\smallskip

\noindent For the second equality, we have
\begin{equation*}
\begin{split}
N_\lambda[\down{\alpha}[\pdf]]&=\left[\int_{\Rset}[\down{\alpha}[\pdf](s)]^\lambda ds\right]^{\frac{1}{1-\lambda}}=\left[\int_{\Rset}\frac{f^{\alpha\lambda}(x(s))}{|f'(x(s))|^{\lambda}}ds\right]^{\frac{1}{1-\lambda}}\\
&=\left[\int_{\Rset}\pdf^{1+(\lambda-1)\alpha}|f'(x)|^{1-\lambda}dx\right]^{\frac{1}{1-\lambda}}=\phi_{1-\lambda,2-\alpha}^{2-\alpha}[\pdf],
\end{split}
\end{equation*}
which gives the first equality in Eq. \eqref{eq:EF}. The second equality in Eq. \eqref{eq:EF} follows immediately by setting $p=1-\lambda$, $\beta=2-\alpha$ and recalling Proposition \ref{prop:inv}. Notice once again that $\beta\neq0$ since $\alpha\neq2$, thus the Fisher information is well defined.

\smallskip

\noindent In order to establish Eq.~\eqref{eq:SUD}, we proceed by direct substitution of the down and up transformed densities as given in Definitions \ref{def:down} and ~\ref{def:up} in the definition of the Shannon entropy. We have, for the down transformation,

\begin{equation*}
\begin{split}
S[\adown]&=-\int_{\Rset}\adown(s)\log\,\adown(s)\,ds=-\int_{\Rset}\frac{f^{\alpha}(x)}{|f'(x)|}\log\frac{f^{\alpha}(x)}{|f'(x)|}f^{1-\alpha}(x)|f'(x)|\,dx\\
&=-\int_{\Rset}f(x)(\alpha\log\,f(x)-\log|f'(x)|)\,dx=\alpha S[f]+\big \langle \log \left| f'\right|\big \rangle,
\end{split}
\end{equation*}
respectively, for the up transformation,
\begin{equation*}
\begin{split}
S[\aup]&=-\int_{\Rset}\aup(u)\log\aup(u)\,du=\int_{\Rset}|(\alpha-2)x|^{\frac{1}{2-\alpha}}\log\,|(\alpha-2)x|^{\frac{1}{2-\alpha}}|(\alpha-2)x|^{\frac{1}{\alpha-2}}f(x)\,dx\\
&=\int_{\Rset}\frac{f(x)}{2-\alpha}\left(\log|x|+\log|2-\alpha|\right)\,dx=\frac1{2-\alpha}\left\langle \log |x|\right \rangle+\frac{\log|2-\alpha|}{2-\alpha},
\end{split}
\end{equation*}
and the proof is complete.
\end{proof}
\smallskip

\noindent \textbf{The exponent $\alpha=2$.} If we let $\alpha=2$, the change of variable in the previous calculations switches to $s(x)=-\ln\,f(x)$, according to Eq. \eqref{eq:can_change_a2}. Thus, we no longer obtain Eq. \eqref{eq:ME} and Eq. \eqref{eq:EF}. Indeed, direct calculations similar to the ones in the proof of Lemma \ref{lem:MEF} give
\begin{equation}\label{eq:Sp_down2}
\sigma_p[\down{2}[\pdf]]=\left[\int_{\Rset}f(x)|\ln\,f(x)|^pdx\right]^{\frac{1}{p}}=\overline S_p[f],
\end{equation}
where $\overline S_p[f]$ is defined in Eq.~\eqref{eq:Sp}. Moreover,
\begin{equation}\label{eq:Ren_down2}
N_\lambda[\down{2}[\pdf]]=\left[\int_{\Rset}|f^{-2}(x)f'(x)|^{1-\lambda}f(x)\,dx\right]^{\frac{1}{1-\lambda}}=\lim_{\widetilde \lambda\to0}\phi_{1-\lambda,\widetilde \lambda}[f]^{\widetilde\lambda}\equiv F_{1-\lambda,0}[f]^\frac{1}{1-\lambda}.
\end{equation}
For the up transformed density with $\alpha=2$ one finds
\begin{equation}\label{eq:N_up2}
N_{\lambda}[f^{\uparrow}_2]=\left(\int_{\Rset} e^{-(\lambda-1)x} f(x)dx\right)^{\frac{1}{1-\lambda}}=\left\langle e^{-(\lambda-1) x}\right \rangle^{\frac 1{1-\lambda}}_f=(\sigma_{\lambda-1}^{(E)}[f])^{-1}
\end{equation}
where $\sigma_{\lambda-1}^{(E)}$ is defined in Eq.~\eqref{eq:sigmaE}. For the Shannon entropy, we have
\begin{equation}\label{eq:Shanup2}
S[f^{\uparrow}_2]=-\int_{\Rset}f^{\uparrow}_2(u)\ln f^{\uparrow}_2(u)\,du=-\int_{\Rset}f(x)\ln e^{-x}\,dx=\int_{\Rset}f(x)x\,dx=\langle x \rangle_f
\end{equation}
Note that Eq. \eqref{eq:Shanup2} together with the inversion Proposition~\ref{prop:inv} give that
	$$
	S[f]=\langle s\rangle_{f^{\downarrow}_{2}}
	$$
	where $s$ is the variable of the down transform $f^{\downarrow}_{2}.$

\smallskip

Our next goal is to study the effect of the composition of up and down transformations with different exponents. We thus compute in the next result the $p$-typical deviation of a composed application of an up and then down transformation with different parameters to a probability density function $f$. To this end, for a probability density function $f$ we introduce the notation
$$
f^{\uparrow\downarrow}_{\alpha\beta}=\down{\beta}\left[\up{\alpha}[f]\right].
$$
\begin{proposition}[Composed up and down transformations]\label{prop:momentum-shift}
If $\alpha,\beta\in\Rset\setminus\{2\}$ we have
\begin{equation}\label{eq:momentum-shift}
\sigma_{p}[f^{\uparrow\downarrow}_{\alpha\beta}]=\frac{|\alpha-2|^{\frac{\beta-2}{\alpha-2}}}{|\beta-2|}\sigma_{\frac{\beta-2}{\alpha-2}p}^{\frac{\beta-2}{\alpha-2}}[f]
\end{equation}
For $\alpha=2$ and $\beta\neq2$ we have
\begin{equation}\label{eq:momentum-shift(2)}
\sigma_{p}[f^{\uparrow\downarrow}_{2 \beta}]=\frac1{|2-\beta|}(\sigma^{(E)}_{(2-\beta)p})^{2-\beta}.
\end{equation}
Finally, for $\alpha\neq2$ and $\beta=2$ we have
\begin{equation}\label{eq:momentum-shift(3)}
\sigma_{p}[f^{\uparrow\downarrow}_{\alpha 2}]=\frac1{|\alpha-2|}\int_{\Rset} f(x)|\log |(\alpha-2)x||^pdx.
\end{equation}
Note that in the case $\alpha=1$ or $\alpha=3$ we have $\sigma_{p}[f^{\uparrow\downarrow}_{\alpha 2}]=\sigma_{p}^{(L)}[f],$ where $\sigma_p^{(L)}$ is defined in Eq.~\eqref{eq:sigmaL}.
\end{proposition}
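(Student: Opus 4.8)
The plan is to treat all three cases by the same mechanism: since the outer transformation is always a down transformation $\down{\beta}$, I first peel it off using the appropriate moment--entropy identity for the down transformation, and then rewrite the resulting Rényi entropy power of the inner density $g:=\up{\alpha}[f]$ by means of the identities for the up transformation established in Lemma \ref{lem:MEF} (and its $\alpha=2$ companions). The only thing that changes between the three cases is which identity applies at each peeling step, according to whether the relevant parameter equals $2$ or not.

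In the generic case $\alpha,\beta\in\Rset\setminus\{2\}$, I first apply the left identity of \eqref{eq:ME} with $\alpha$ replaced by $\beta$ and $f$ replaced by $g=\up{\alpha}[f]$, which gives
\begin{equation*}
\sigma_p[f^{\uparrow\downarrow}_{\alpha\beta}]=\sigma_p[\down{\beta}[g]]=\frac{N_{1+(2-\beta)p}^{\beta-2}[g]}{|2-\beta|}.
\end{equation*}
Then I use the right identity of \eqref{eq:ME}, namely $N_\lambda[\up{\alpha}[f]]=(|2-\alpha|\,\sigma_{(\lambda-1)/(2-\alpha)}[f])^{1/(\alpha-2)}$, with $\lambda=1+(2-\beta)p$, so that $(\lambda-1)/(2-\alpha)=\tfrac{\beta-2}{\alpha-2}p$. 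Substituting and collecting the exponents of $|\alpha-2|$ and $|\beta-2|$ yields \eqref{eq:momentum-shift}; this step is pure bookkeeping of powers.

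The two boundary cases differ only in which peeling identity is used. For $\alpha=2$, $\beta\neq2$ the outer step is unchanged (still the left identity of \eqref{eq:ME} with parameter $\beta$), but the inner Rényi power must now be evaluated through the $\alpha=2$ relation \eqref{eq:N_up2}, that is $N_\lambda[\up{2}[f]]=(\sigma^{(E)}_{\lambda-1}[f])^{-1}$; taking $\lambda=1+(2-\beta)p$ and raising to the power $\beta-2$ gives \eqref{eq:momentum-shift(2)}. For $\alpha\neq2$, $\beta=2$ the situation is reversed: the outer transformation has parameter $2$, so \eqref{eq:ME} is unavailable and I instead invoke \eqref{eq:Sp_down2}, which reduces the problem to computing $\overline{S}_p[\up{\alpha}[f]]$. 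A direct change of variables in this integral, using $\up{\alpha}[f](u)=|(\alpha-2)x(u)|^{1/(2-\alpha)}$ from \eqref{eq:up} together with the probability-preservation relation $\up{\alpha}[f](u)\,du=f(x)\,dx$ (which follows from the cancellation $\tfrac1{2-\alpha}+\tfrac1{\alpha-2}=0$ of the powers of $|(\alpha-2)x|$ in the Jacobian of Definition \ref{def:up}), turns $|\ln\up{\alpha}[f](u)|^p$ into $|2-\alpha|^{-p}\,|\log|(\alpha-2)x||^p$ and leads to \eqref{eq:momentum-shift(3)}. The concluding remark is then immediate: for $\alpha=1$ or $\alpha=3$ one has $|\alpha-2|=1$ and $|(\alpha-2)x|=|x|$, so the right-hand side of \eqref{eq:momentum-shift(3)} collapses to the logarithmic moment $\sigma_p^{(L)}[f]$ of \eqref{eq:sigmaL}.

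The generic case is essentially automatic once the two identities of Lemma \ref{lem:MEF} are chained, so I expect the only genuine care to be required in the case $\beta=2$, where Lemma \ref{lem:MEF} does not apply and the change of variables must be carried out by hand. The delicate point there is tracking the absolute values $|(\alpha-2)x|$ across the substitution and verifying that the power coming from $\up{\alpha}[f]$ and the power coming from the Jacobian cancel exactly, leaving only the logarithmic factor; one should also keep in mind that the proposition deliberately excludes the doubly degenerate case $\alpha=\beta=2$, so at least one clean identity is always at hand.
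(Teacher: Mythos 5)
Your proposal is correct and follows essentially the same route as the paper: peel off the outer $\down{\beta}$ via Eq.~\eqref{eq:ME} (or Eq.~\eqref{eq:Sp_down2} when $\beta=2$), then evaluate the resulting functional of $\up{\alpha}[f]$ via the right identity of Eq.~\eqref{eq:ME} (or Eq.~\eqref{eq:N_up2} when $\alpha=2$, or a direct change of variables when $\beta=2$). The exponent bookkeeping and the probability-preservation argument for the $\beta=2$ case are both accurate.
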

\begin{proof}
Let $\alpha,\beta\in\Rset\setminus\{2\}$. We obtain as an easy consequence of Eq. \eqref{eq:ME} that
\begin{equation*}
\begin{split}
\sigma_p[f^{\uparrow\downarrow}_{\alpha\beta}]&=\frac{N^{\beta-2}_{1+(2-\beta)p}[\up{\alpha}[f]]}{|2-\beta|}
=\frac{1}{|2-\beta|}\left[|2-\alpha|\sigma_{\frac{(2-\beta)p}{2-\alpha}}\right]^{\frac{\beta-2}{\alpha-2}}\\
&=\frac{1}{|2-\beta|}|2-\alpha|^{\frac{\beta-2}{\alpha-2}}\sigma_{\frac{(2-\beta)p}{2-\alpha}}^{\frac{\beta-2}{\alpha-2}},
\end{split}
\end{equation*}
as stated in Eq. \eqref{eq:momentum-shift}. In the case $\alpha=2$, $\beta\in\Rset\setminus\{2\}$, Eq. \eqref{eq:momentum-shift(2)} stems from an application of Eq. \eqref{eq:N_up2},
$$
\sigma_{p}[f^{\uparrow\downarrow}_{2 \beta}]=\sigma_p[\down{\beta}[f^{\uparrow}_2]]=\frac{N^{\beta-2}_{1+(2-\beta)p}[f^{\uparrow}_2]}{|2-\beta|}
=\frac1{|2-\beta|}
(\sigma^{E}_{(2-\beta)p})^{\beta-2},
$$
while for $\alpha\in\Rset\setminus\{2\}$, Eq. \eqref{eq:momentum-shift(3)} follows from the following calculations employing Eq. \eqref{eq:Sp_down2}
$$
\sigma_{p}[f^{\uparrow\downarrow}_{\alpha 2}]=\sigma_p[\down{2}[f^{\uparrow}_\alpha]]=S_p[f^{\uparrow}_\alpha]
=\frac1{|\alpha-2|}\int_{\Rset} f(x)|\log |(\alpha-2)x||^pdx,
$$
completing the proof.
\end{proof}

\begin{remark}
A direct calculation shows that the moments of the simple transformation
$$
\widetilde f_{\alpha \beta}:=\frac{\alpha-2}{\beta-2}x^{\frac{\alpha-\beta}{\beta-2}}f(x^{\frac{\alpha-2}{\beta-2}})
$$
satisfy $\sigma_p[\widetilde f_{\alpha \beta}]=\sigma_p[f^{\uparrow\downarrow}_{\alpha\beta}]$. Then, if we assume that the set $\sigma_p[\pdf]$ uniquely characterizes the probability density $f$, we deduce that for any pair of numbers $\alpha$ and $\beta$ in $\Rset\setminus\{2\}$, we have
$$
f^{\uparrow\downarrow}_{\alpha\beta}(x)=\frac{\alpha-2}{\beta-2}x^{\frac{\alpha-\beta}{\beta-2}}f(x^{\frac{\alpha-2}{\beta-2}}).
$$
\end{remark}

\subsection{Dealing with the Hausdorff entropic moment problem}
	
A set of numbers $\{m_i\}_{i=1}^\infty$ is said to be a \emph{moment-sequence} when there exists a function (a probability density in the context of this paper) such that $m_i=\widetilde\mu_i[f],$  for $i=1,2,\ldots,$ where
$$
\widetilde \mu_i[f]=\int_\Rset x^i f(x) dx =\langle x^i\rangle_f,\qquad i\in\mathbb N.
$$
The Hamburger, Stieltjes and Hausdorff moment problems consist in determining the conditions under which a sequence of numbers $\{m_i\}_{i=1}^\infty$ uniquely characterizes a probability density, when the respective supports are $\Rset,\Rset^+$ and the interval $(0,1)$ and, in such case, find the corresponding probability density. Although reconstruction methods have been investigated by different formal approaches (see~\cite{Shohat1970, Akhiezer_Moment_Problem}), the numerical reconstruction of a probability density employing a large number of moments is unstable in the Hadamard sense. This fact has motivated a quest for establishing different approaches to tackle with this problem, e.g. formal reconstructions methods by using Pollaczek Polynomials~\cite{Viano1991} or Chrystoffel functions~\cite{Gavriliadis2008,Gavriliadis2009} have been proposed. In an analogous way, the entropic Hausdorff problem consists in determining the conditions under which a set of numbers corresponds with the family of Rényi entropies (or equivalently $L^p$-norms) of a certain decreasing probability density function~\cite{Romera2001}. In the following lines we show that the up/down transformations introduced in this paper give a direct connection between the standard and the entropic moment problems. In fact, similar transformations to the down and up transforms are implicitly used in the proof of the results in~\cite{Romera2001}. We are aware of the fact that the optimal conditions for the uniqueness in the Hausdorff moment problem are still an open problem; however, in the next lines, we begin by assuming that this uniqueness is fulfilled, in order to relate it to other reconstruction problems.

Given $\alpha\neq2$ and a set of values $\{m_i\}_{i=1}^\infty$ which determines certain probability density $f$, the same set of values also determines the rearrangement of another probability density with Rényi entropy power $\{N_\lambda\}_{\lambda\in A},$ where
$$
A=\{1+i(2-\alpha):\;i\in\mathbb N\},\quad N_{1+i(2-\alpha)}=\left(|2-\alpha| m_{i}^\frac 1i[\pdf]\right)^{\frac{1}{\alpha-2}}.
$$
This follows from the fact that the $p$-moment of a probability density essentially corresponds with the power Rényi entropy (or the $L^p$ norm) of the down transformed density, as established in Lemma~\ref{lem:MEF}, together with the invertibility of the down transformation proved in Proposition~\ref{prop:inv}.	

\begin{definition}
We say that a set of numbers $\{N_\lambda\}_{\lambda\in A},$ with $card(A)=card(\mathbb N),$ is a \emph{entropy-sequence} if there is a probability density $f$ such that $N_\lambda=N_\lambda[f],$ for any $\lambda\in A$.
%In an analogous way, a set of numbers $\{\phi_{p,\lambda}\}_{p\in B},$ with $card(B)=card(\mathbb N),$ is a \emph{Fisher-sequence} if there is a probability density $f$ such that $\phi_{p,\lambda}=\phi_{p,\lambda}[f],$ for any $p\in B$.
\end{definition}

With this definition, we are in a position to give a precise statement of our reconstruction theorem.
\begin{theorem}Let $\alpha\in(-\infty,2)$ and a set of real values $\{N_\lambda\}_{\lambda\in A}^\infty$ with $A=\{1+i(2-\alpha):\;i\in\mathbb N\}$. If the set of values $\{m_i\}_{i=1}^\infty$ such that
\begin{equation}\label{eqproof:78}
m_i=\frac{1}{|\alpha-2|^i}N_{1+(2-\alpha)i}^{(\alpha-2)i}
\end{equation}
is a moment-sequence of a probability density $f$ (i.e., $\mu_i[f]=m_i$), then the set $\{N_\lambda\}_{\lambda\in A}^\infty$ is an entropy-sequence that uniquely characterizes the decreasing rearragement $g^*$ of any probability density $g$ with power Rényi entropies $N_\lambda.$ In particular, $g^*=\up{\alpha}[f]$.
\end{theorem}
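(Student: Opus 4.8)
The plan is to establish the claim as a direct consequence of the machinery already developed in the paper, in particular Lemma~\ref{lem:MEF} and the inversion Proposition~\ref{prop:inv}. The core observation is that the second equality in Eq.~\eqref{eq:ME} furnishes an explicit dictionary between the Rényi entropy powers of an up-transformed density and the moments of the original one. First I would rewrite the relation $N_{\lambda}[\aup]=\left(|2-\alpha| \sigma_{\frac{\lambda-1}{2-\alpha}}[\pdf]\right)^{\frac{1}{\alpha-2}}$ with $\lambda=1+(2-\alpha)i$ for $i\in\mathbb N$, so that the order of the typical deviation becomes exactly $i$. Since $\sigma_i[\pdf]=\mu_i[\pdf]^{1/i}$, this produces $N_{1+(2-\alpha)i}[\aup]=\left(|2-\alpha|\,\mu_i[\pdf]^{1/i}\right)^{\frac{1}{\alpha-2}}$, which upon solving for $\mu_i[\pdf]$ reproduces precisely Eq.~\eqref{eqproof:78} once we identify $N_{1+(2-\alpha)i}=N_{1+(2-\alpha)i}[\aup]$. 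Here the hypothesis $\alpha<2$ guarantees $2-\alpha>0$, so the indices $1+i(2-\alpha)$ are genuinely distinct and the set $A$ has the cardinality of $\mathbb N$.

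Next I would set $g^*=\up{\alpha}[f]$, where $f$ is the probability density whose moment-sequence is $\{m_i\}$ as assumed in the statement. By the computation above, the power Rényi entropies of $g^*$ along $A$ coincide with the prescribed values $\{N_\lambda\}_{\lambda\in A}$; this shows at once that the sequence is an entropy-sequence, realized by $g^*$. It remains to argue uniqueness, i.e. that any density $g$ with power Rényi entropies $N_\lambda$ on $A$ has decreasing rearrangement equal to $g^*$. The key step is that the Rényi entropy power (equivalently the $L^\lambda$-norm) of $g$ is invariant under decreasing rearrangement, so without loss of generality we may assume $g$ is decreasing and apply the down transformation. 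Running Lemma~\ref{lem:MEF} in the reverse direction, the values $N_\lambda[g]$ on $A$ determine the moments $\mu_i[\down{\alpha}[g]]=m_i$ via the same formula Eq.~\eqref{eqproof:78}. Since $\{m_i\}$ is assumed (by the standing uniqueness hypothesis on the Hausdorff moment problem) to characterize $f$ uniquely, we conclude $\down{\alpha}[g]=f$, and then applying $\up{\alpha}$ and invoking Proposition~\ref{prop:inv} gives $g=\up{\alpha}[f]=g^*$ up to the translation inherent in the transformation.

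The main obstacle I anticipate is the uniqueness argument, and specifically the passage to the decreasing rearrangement. The down transformation in Definition~\ref{def:down} is only defined for strictly decreasing densities, so to handle an arbitrary $g$ one must first justify replacing $g$ by its decreasing rearrangement $g^*$; this is legitimate precisely because all $L^\lambda$-norms (hence all Rényi entropy powers $N_\lambda$) are rearrangement-invariant, but this invariance should be stated explicitly rather than taken for granted. A secondary subtlety is that the transformations are defined only up to a translation (as emphasized throughout Section~\ref{sec:defs}), so the uniqueness one obtains is uniqueness of the \emph{shape} of $g^*$, consistent with the theorem asserting that $\{N_\lambda\}$ characterizes the decreasing rearrangement. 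Finally, I would note that the whole argument is conditional on the standing assumption that the underlying Hausdorff moment problem for $\{m_i\}$ admits a unique solution, which is explicitly acknowledged in the text as an open problem in general; the theorem transfers that uniqueness, it does not prove it.
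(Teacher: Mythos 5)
Your proposal is correct and follows essentially the same route as the paper's own proof: both define $g^*=\up{\alpha}[f]$, invoke Lemma~\ref{lem:MEF} to translate the prescribed Rényi entropy powers into the moment sequence $\{m_i\}$ of $\down{\alpha}[g^*]$, and then transfer the assumed uniqueness of the Hausdorff moment problem back through Proposition~\ref{prop:inv}. If anything, your write-up is more careful than the paper's, which compresses the uniqueness step into a single sentence, whereas you make explicit the rearrangement invariance of the $L^\lambda$-norms, the reduction to a decreasing $g$, and the translation ambiguity inherent in the transformations.
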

\begin{proof}
Let $\{N_\lambda,\lambda=1+(2-\alpha)i\}_{i=1}^\infty$ be a sequence of numbers such that $m_i$ defined by~\eqref{eqproof:78} is a moment-sequence which uniquely characterizes a probability density $f$; that is, $\mu_i[f]=m_i.$ Introduce now $g^*=\up{\alpha}[f]$. Then, using Lemma~\ref{lem:MEF}, we have that
\begin{equation}
\mu_{i}[\down{\alpha}[g^{*}]]=\frac{1}{|\alpha-2|^i}N_{1+(2-\alpha)i}^{(\alpha-2)i}[g]=m_i.
\end{equation}
Since the set $\{m_i\}_{i=1}^\infty$ determines uniquely the probability density $f$, we deduce that $f=\down{\alpha}[g^*]$. Moreover, if $g$ is a probability density function such that $N_\lambda[g]=N_\lambda=N_\lambda[g^*]$, then the sequence $N_\lambda$ uniquely reconstructs $g^*$, concluding the proof.
\end{proof}

As previously mentioned, when the number of fixed moments is large, the reconstruction problem becomes numerically unstable. The MaxEnt approach has been widely used in order to avoid this problem. It consists in maximizing an entropic functional employing a smaller number of fixed moments. An adaptation of the MaxEnt approach in the framework of the Hausdorff entropic problem is used in \cite{Romera2001}; more precisely, they reconstruct a decreasing probability density by fixing a lower number $N$ of Rényi entropies and maximizing the Fisher information \eqref{eq:def_FI}. In the next theorem, which is now an immediate consequence of the analysis performed in the previous sections, we relate the probability density maximizing Shannon and Rényi entropies with a set of fixed moments with the probability density maximizing Fisher-like functionals when a set of Rényi entropies is fixed.

\begin{theorem}
Let $\{m_i\}_{i=1}^N$ be a set of real numbers and let $\mathcal D$ be the set of probability density functions $f$ such that $\sigma_i[f]=m_i$. For some $\alpha<2$ let $\overline \D_\alpha$ be the set of probability densities $g$ such that $N_{1+(2-\alpha) i}[g]=((2-\alpha) m_i)^\frac1{\alpha-2}$. Let $\displaystyle f_\lambda=\argmax_{f\in\D}[R_\lambda]$, for $\lambda\in\Rset$. Then, for $\lambda\neq1$ we have
$$
\displaystyle \up{\alpha}[f_\lambda]=\argmax_{g\in\overline\D_\alpha}[\phi_{1-\lambda,2-\alpha}[g]],
$$
while for $\lambda=1$ we have
$$
\displaystyle \up{\alpha}[f_1]=\argmax_{g\in\overline\D_\alpha}[S[g]+\big\langle\log (g') \big\rangle].
$$
\end{theorem}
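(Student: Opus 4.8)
The plan is to transport both optimization problems onto one another through the bijection induced by the up transformation, and then convert the maximized functionals using the identities of Lemma~\ref{lem:MEF}. The first step is to check that, for $\alpha<2$, the map $\up{\alpha}$ restricts to a bijection from $\D$ onto $\overline{\D}_\alpha$. Given $f\in\D$, so that $\sigma_i[f]=m_i$, the second identity in Eq.~\eqref{eq:ME} evaluated at $\lambda=1+(2-\alpha)i$ (for which $\frac{\lambda-1}{2-\alpha}=i$ and $|2-\alpha|=2-\alpha$) gives
\begin{equation*}
N_{1+(2-\alpha)i}[\up{\alpha}[f]]=\left((2-\alpha)\,\sigma_i[f]\right)^{\frac1{\alpha-2}}=\left((2-\alpha)m_i\right)^{\frac1{\alpha-2}},
\end{equation*}
whence $\up{\alpha}[f]\in\overline{\D}_\alpha$. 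The first identity in Eq.~\eqref{eq:ME} yields the reverse inclusion $\down{\alpha}[\overline{\D}_\alpha]\subseteq\D$, and since $\down{\alpha}$ and $\up{\alpha}$ are mutually inverse by Proposition~\ref{prop:inv}, this restriction is the claimed bijection.

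For $\lambda\neq1$ I would then use the second identity in Eq.~\eqref{eq:EF} with $p=1-\lambda$ and $\beta=2-\alpha$, for which $f^{\uparrow}_{2-\beta}=\up{\alpha}[f]$ and $1-p=\lambda$, to obtain
\begin{equation*}
\phi_{1-\lambda,2-\alpha}[\up{\alpha}[f]]=\left(N_\lambda[f]\right)^{\frac1{2-\alpha}}=e^{\frac{R_\lambda[f]}{2-\alpha}}.
\end{equation*}
Because $\alpha<2$ forces $\frac1{2-\alpha}>0$, the right-hand side is a strictly increasing function of $R_\lambda[f]$; hence, transporting along the bijection, maximizing $\phi_{1-\lambda,2-\alpha}[g]$ over $g\in\overline{\D}_\alpha$ is equivalent to maximizing $R_\lambda[f]$ over $f\in\D$. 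The latter maximizer being $f_\lambda$ by definition, we conclude that $\argmax_{g\in\overline{\D}_\alpha}\phi_{1-\lambda,2-\alpha}[g]=\up{\alpha}[f_\lambda]$.

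For the borderline case $\lambda=1$, in which $R_1=S$, I would instead use the down-transform Shannon identity in Eq.~\eqref{eq:SUD}. Setting $f=\down{\alpha}[g]$ for $g\in\overline{\D}_\alpha$, that identity reads $S[f]=\alpha S[g]+\langle\log|g'|\rangle$, so maximizing $\alpha S[g]+\langle\log|g'|\rangle$ over $\overline{\D}_\alpha$ coincides, through the bijection, with maximizing $S[f]$ over $\D$; the latter maximizer is $f_1$, giving $\up{\alpha}[f_1]$ as claimed. Here I would point out that the functional arising intrinsically carries the coefficient $\alpha$ in front of $S[g]$, consistently with the limit $\lambda\to1$ of $\log\phi_{1-\lambda,2-\alpha}[g]$, which a Taylor expansion of $\int|g^{-\alpha}g'|^{1-\lambda}g\,dx$ shows to equal $\frac1{2-\alpha}\bigl(\alpha S[g]+\langle\log|g'|\rangle\bigr)$; thus the functional in the statement is most naturally read as $\alpha S[g]+\langle\log(g')\rangle$.

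The computations past the bijection are essentially immediate substitutions, so the genuinely delicate points are two. The first is verifying that $\up{\alpha}$ is onto $\overline{\D}_\alpha$ rather than merely into it, which relies on the invertibility of Proposition~\ref{prop:inv} together with the exact match of the two constraint families through Eq.~\eqref{eq:ME}; implicitly this also requires the densities in $\D$ to be decreasing and regular enough for the transformations to apply. The second, and the one I expect to be the true subtlety, is tracking the correct normalization of the Shannon functional in the case $\lambda=1$, where the factor $\alpha$ produced by Eq.~\eqref{eq:SUD} must be carried along rather than dropped.
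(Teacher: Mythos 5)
Your proposal is correct and is exactly the argument the paper intends: the paper's own proof is the single line that the result ``follows directly from Lemma~\ref{lem:MEF} and the inversion property in Proposition~\ref{prop:inv}'', and you have simply written out those substitutions (Eq.~\eqref{eq:ME} to match the constraint sets, Eq.~\eqref{eq:EF} for $\lambda\neq1$, Eq.~\eqref{eq:SUD} for $\lambda=1$). Your remark about the borderline case is moreover a genuine catch: Eq.~\eqref{eq:SUD} yields $S[f]=\alpha S[g]+\langle\log|g'|\rangle$ for $g=\up{\alpha}[f]$, so the functional in the theorem as printed is missing the coefficient $\alpha$ (and the absolute value on $g'$), and the version you derive --- consistent with the $\lambda\to1$ limit of $\log\phi_{1-\lambda,2-\alpha}[g]$ --- is the one actually supported by the lemma.
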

\begin{proof}
The proof follows directly from Lemma~\ref{lem:MEF} and the inversion property in Proposition~\ref{prop:inv}.
\end{proof}

In the following lines we explore the use of up and down transformations to investigate a Fisher-like moment problem. For a probability density function $f$ allowing for an application of the down transformation twice, we introduce the following notation:
\begin{equation*}\label{prop:downdown}
f^{\downarrow\downarrow}_{\alpha\beta}=\down{\beta}\left[\down{\alpha}[f]\right].
\end{equation*}
We compute in the following statement the $p$-th moments of the density $f^{\downarrow\downarrow}_{\alpha\beta}$.
\begin{proposition}
Let $f$ be a strictly decreasing probability density and let $\alpha,\beta$ be two real numbers such that
$$
\max\left(\frac{\pdf(x)\pdf''(x)}{[\pdf'(x)]^{2}}\right)<\alpha<2 \quad {\rm and} \quad \beta<2.
$$
Then
\begin{equation}\label{eqprop:downdown}
\mu_p[f^{\downarrow\downarrow}_{\alpha\beta}]=\frac{\phi_{r,\xi}^{r\xi}[f]}{|\alpha-2|^{p}},
%=\frac{F_{p(\beta-2),2-\alpha}[f]}{|\alpha-2|^{p}},
\qquad r=p(\beta-2),\quad \xi=2-\alpha.
\end{equation}
\end{proposition}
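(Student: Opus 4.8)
The plan is to compute the $p$-th moment of $f^{\downarrow\downarrow}_{\alpha\beta}=\down{\beta}[\down{\alpha}[f]]$ by treating it as a two-step application and leaning on the identities already packaged in Lemma~\ref{lem:MEF}. Since $\mu_p[g]=\sigma_p^p[g]$ whenever the moment is nonnegative, I would first relate the $p$-th absolute moment to the $p$-typical deviation and then apply Eq.~\eqref{eq:ME} in the ``outer'' transformation $\down{\beta}$. Writing $g=\down{\alpha}[f]$, the left equality in Eq.~\eqref{eq:ME} with parameter $\beta$ gives
\begin{equation*}
\sigma_p[\down{\beta}[g]]=\frac{N_{1+(2-\beta)p}^{\beta-2}[g]}{|2-\beta|},
\end{equation*}
so that $\mu_p[f^{\downarrow\downarrow}_{\alpha\beta}]=\sigma_p^p[\down{\beta}[g]]=|2-\beta|^{-p}N_{1+(2-\beta)p}^{(\beta-2)p}[g]$. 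This reduces the problem to evaluating a single Rényi entropy power of the once-transformed density $g=\adown$.

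Next I would invoke the first equality in Eq.~\eqref{eq:EF}, namely $N_\lambda[\adown]=\phi_{1-\lambda,2-\alpha}^{2-\alpha}[\pdf]$, with $\lambda=1+(2-\beta)p$. Then $1-\lambda=(\beta-2)p$, and substituting yields
\begin{equation*}
N_{1+(2-\beta)p}[\adown]=\phi_{(\beta-2)p+1-1,\,2-\alpha}^{2-\alpha}[f]=\phi_{p(\beta-2),\,2-\alpha}^{2-\alpha}[f],
\end{equation*}
so with the stated abbreviations $r=p(\beta-2)$ and $\xi=2-\alpha$ the inner entropy power is exactly $\phi_{r,\xi}^{\xi}[f]$. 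Combining the two displays and tracking the exponents should produce
\begin{equation*}
\mu_p[f^{\downarrow\downarrow}_{\alpha\beta}]=\frac{1}{|2-\beta|^{p}}\left(\phi_{r,\xi}^{\xi}[f]\right)^{p}=\frac{\phi_{r,\xi}^{p\xi}[f]}{|2-\beta|^p},
\end{equation*}
and the remaining bookkeeping is to confirm that $|2-\beta|^p$ matches the claimed $|\alpha-2|^p$ prefactor and that the exponent $p\xi$ reconciles with the stated $r\xi$; this is where I expect a small discrepancy in the statement's constant or exponent to require careful checking against the definitions.

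The main obstacle is precisely this exponent and constant reconciliation rather than any deep idea. The statement claims a prefactor $|\alpha-2|^{-p}$ and an exponent $r\xi$ in the numerator, whereas the naive composition above suggests $|2-\beta|^{-p}$ and exponent $p\xi$; I would need to verify whether the first down transformation $\down{\alpha}$ introduces an additional scaling factor through its canonical change of variable Eq.~\eqref{eq:can_change} that feeds into the Fisher information of $f$, which could be the source of the $|\alpha-2|$ factor. Concretely, I would re-derive $N_\lambda[\adown]$ from scratch for safety, since the $(p,\lambda)$-Fisher information in Eq.~\eqref{eq:def_FI} carries a $\lambda$-dependent normalizing constant, and the composition may convert the $|2-\beta|$ factor into $|\alpha-2|$ after the dust settles.

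Before finalizing, I would also record the domain hypotheses: the constraint $\max\!\big(\pdf\pdf''/[\pdf']^2\big)<\alpha<2$ is exactly the condition from Remark~\ref{rem:downcomposition} guaranteeing that $\adown$ remains monotone so that $\down{\beta}$ may be applied to it, and $\beta<2$ together with $\alpha<2$ keeps the relevant exponents in the ranges where Lemma~\ref{lem:MEF} was established. Since every step invokes only Lemma~\ref{lem:MEF} and Proposition~\ref{prop:inv}, both already proved, the argument is short once the exponent arithmetic is pinned down; the proof is essentially an exercise in composing the two master identities and simplifying.
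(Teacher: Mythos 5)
Your route is exactly the paper's: apply the left identity of Eq.~\eqref{eq:ME} to the outer transformation $\down{\beta}$ acting on $g=\down{\alpha}[f]$, then convert the resulting R\'enyi entropy power of $g$ into a Fisher-like functional of $f$ via Eq.~\eqref{eq:EF} with $\lambda=1+(2-\beta)p$. The two issues you flag at the end, however, deserve a sharper resolution than ``careful checking.''

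First, the exponent discrepancy is your own arithmetic slip, not a problem with the statement. Your display $\mu_p[f^{\downarrow\downarrow}_{\alpha\beta}]=|2-\beta|^{-p}N_{1+(2-\beta)p}^{(\beta-2)p}[g]$ is correct: the entropy power carries the exponent $(\beta-2)p=r$, not $p$. When you then substitute $N_{1+(2-\beta)p}[g]=\phi_{r,\xi}^{\xi}[f]$, you must raise it to the power $r$, giving $\bigl(\phi_{r,\xi}^{\xi}[f]\bigr)^{r}=\phi_{r,\xi}^{r\xi}[f]$, which is exactly the exponent $r\xi$ claimed in the proposition. Your $\bigl(\phi_{r,\xi}^{\xi}[f]\bigr)^{p}$ silently replaced the exponent $(\beta-2)p$ by $p$.

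Second, on the prefactor you are right to be suspicious, and you should trust your computation: the derivation yields $|2-\beta|^{-p}=|\beta-2|^{-p}$, and no hidden scaling from the canonical change of variable Eq.~\eqref{eq:can_change} converts this into $|\alpha-2|^{-p}$. Indeed the paper's own proof arrives at $\phi_{p(\beta-2),2-\alpha}^{p(\beta-2)(2-\alpha)}[f]\,/\,|\beta-2|^{p}$ and then asserts, inconsistently, that this ``leads to'' the stated formula with $|\alpha-2|^{-p}$; the denominator in the statement appears to be a typo (the two agree only when $\alpha=\beta$). So the correct conclusion is $\mu_p[f^{\downarrow\downarrow}_{\alpha\beta}]=\phi_{r,\xi}^{r\xi}[f]/|\beta-2|^{p}$. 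Your remarks on the hypotheses (monotonicity of $\adown$ via Remark~\ref{rem:downcomposition}, compact support via $\alpha,\beta<2$) are apt and match the paper's discussion.
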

\begin{proof}
Given a strictly decreasing probability density $f$ and $\alpha$ a real number such that $\alpha>\max\left(\frac{\pdf(x)\pdf''(x)}{[\pdf'(x)]^{2}}\right),$ the density $\down{\beta}\left[\down{\alpha}[f]\right]$ is well-defined for any $\beta\in \Rset$ (as discussed in Remark~\ref{rem:downcomposition}). The restriction $\alpha,\beta<2$ is imposed to keep the support compact. Then, it follows from Lemma~\ref{lem:MEF} that
\begin{equation}
\mu_p[\down{\beta}[\down{\alpha}[f]]]=\frac{1}{|\beta-2|^p} N_{1+(2-\beta)p}^{(\beta-2)p}[\down{\beta}[f]]=
\frac{\phi_{p(\beta-2),2-\alpha}^{p(\beta-2)(2-\alpha)}[f]}{|\beta-2|^{p}},
\end{equation}
leading to Eq.~\eqref{eqprop:downdown}.
\end{proof}
In the next theorem we relate the conditions under which a collection of numbers uniquely characterizes a probability density through Fisher-like measures with the Hausdorff moment problem.

\begin{theorem}
Let $\alpha<2, \beta<2$ be a pair of real numbers and $\{F_{p,\beta}\}_{p\in A}^\infty$ be a set of real values with $A=\{i(\alpha-2):\;i\in\mathbb N\}$. Let $\widetilde{\D}$ be the set of probability densities defined as
$$
\widetilde{\D}=\{g: g'<0,\;\text{and}\; gg''<\alpha\,(g')^2,\; \alpha<2\}.
$$
If the set of values $\{m_i\}_{i=1}^\infty$ such that
	\begin{equation}\label{eqproof:77}
	m_i=\frac{1}{|\alpha-2|^i}F_{i(\beta-2),2-\alpha}
	\end{equation}
is a moment-sequence of a probability density $f$ (i.e., $\mu_i[f]=m_i$) that uniquely characterizes $f$, then there is an unique probability density $g\in\widetilde \D$ such that $\phi_{p,\beta}^{p\beta}[g]=F_{p,\beta},$ for any $p\in A$. In particular, $g=\up{\alpha}[\up{\beta}[f]]$.
\end{theorem}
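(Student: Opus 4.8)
The plan is to mirror the structure of the preceding reconstruction theorem almost verbatim, since this is a Fisher-side analogue of the entropic reconstruction result and all the analytic machinery has already been assembled. The key ingredient is the previous proposition, which establishes that the $p$-th moments of the doubly down-transformed density $f^{\downarrow\downarrow}_{\alpha\beta}$ are expressed through a $(p,\lambda)$-Fisher information, namely $\mu_p[f^{\downarrow\downarrow}_{\alpha\beta}]=\phi_{p(\beta-2),2-\alpha}^{p(\beta-2)(2-\alpha)}[f]/|\alpha-2|^p$. Reading this identity in reverse — via the invertibility $\down{\alpha}^{-1}=\up{\alpha}$ from Proposition~\ref{prop:inv} — converts a statement about moments of a down-transformed object into a statement about Fisher-like measures of an up-transformed object, which is exactly the bridge we need.

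First I would introduce the candidate density $g=\up{\alpha}[\up{\beta}[f]]$ and apply the down transformation twice to recover $\down{\beta}[\down{\alpha}[g]]=f$, using the inversion property of Proposition~\ref{prop:inv} together with the hypothesis that $g\in\widetilde\D$ guarantees $g$ and $\down{\alpha}[g]$ are both admissible for the down transformation (this is precisely what the condition $gg''<\alpha(g')^2$ with $\alpha<2$ secures, as discussed in Remark~\ref{rem:downcomposition}). Next I would compute $\mu_i[f]=\mu_i[\down{\beta}[\down{\alpha}[g]]]$ using the doubly-down moment formula from the preceding proposition, which yields $\mu_i[f]=\phi_{i(\beta-2),2-\alpha}^{i(\beta-2)(2-\alpha)}[g]/|\alpha-2|^i$. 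Setting this equal to the hypothesized moment-sequence value $m_i=F_{i(\beta-2),2-\alpha}/|\alpha-2|^i$ from Eq.~\eqref{eqproof:77} forces $\phi_{i(\beta-2),2-\alpha}^{i(\beta-2)(2-\alpha)}[g]=F_{i(\beta-2),2-\alpha}$, which after matching the index set $A=\{i(\alpha-2):i\in\mathbb N\}$ is the desired reconstruction identity $\phi_{p,\beta}^{p\beta}[g]=F_{p,\beta}$ for each $p\in A$.

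Uniqueness would then follow from the assumed uniqueness of $f$ as the solution of the standard Hausdorff moment problem: if two densities $g_1,g_2\in\widetilde\D$ both realized the prescribed Fisher values $\{F_{p,\beta}\}_{p\in A}$, then $\down{\beta}[\down{\alpha}[g_1]]$ and $\down{\beta}[\down{\alpha}[g_2]]$ would both have the moment sequence $\{m_i\}$, hence coincide with $f$, and applying the inverse up transformations twice (invoking Proposition~\ref{prop:inv} once more) would give $g_1=g_2$. I expect the main obstacle to be purely bookkeeping rather than conceptual: one must verify carefully that the index reparametrization is consistent (the exponent $r=p(\beta-2)$ in the proposition must be reconciled with the index $p\in A=\{i(\alpha-2)\}$), and that every intermediate density stays within the admissible class so that each inversion in Proposition~\ref{prop:inv} is legitimate. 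In particular, the subtle point is ensuring $\up{\beta}[f]$ and then $\up{\alpha}[\up{\beta}[f]]$ are monotone and regular enough for the double down transformation to be well-defined, which the definition of $\widetilde\D$ is evidently designed to enforce but which should be checked explicitly against Remark~\ref{rem:downcomposition}.
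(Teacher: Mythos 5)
Your proposal is correct and follows essentially the same route as the paper: both define $g=\up{\alpha}[\up{\beta}[f]]$, invoke the doubly-down moment formula $\mu_p[f^{\downarrow\downarrow}_{\alpha\beta}]=\phi_{p(\beta-2),2-\alpha}^{p(\beta-2)(2-\alpha)}[f]/|\alpha-2|^p$ together with Proposition~\ref{prop:inv} to match the prescribed Fisher values, and derive uniqueness from the fact that $\up{\alpha}\up{\beta}$ is a bijection between the class of all densities and $\widetilde\D$ (your injectivity chain is just a rephrasing of the paper's bijection argument). Your remark about reconciling the index set $A=\{i(\alpha-2)\}$ with the exponent $i(\beta-2)$ is a fair observation — this appears to be a notational slip in the statement rather than a gap in either proof.
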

\begin{proof}
Let us denote by $\D$ the whole set of the probability density functions on $\mathbb R$. Let us suppose that $\{m_i\}_{i=1}^\infty$ is a moment sequence that uniquely characterizes certain probability density $f$ with $\mu_i[f]=m_i$. Then, from Propositions~\ref{prop:downdown} and~\ref{prop:inv}, we find
$$m_i=\mu_i[f]=\frac{\phi_{i(\beta-2),2-\alpha}^{i(\beta-2)(2-\alpha)}[f^{\uparrow\uparrow}_{\beta\alpha}]}{|\alpha-2|^{p}}.$$
Now, we note that one can define $\down{\beta}[\down{\alpha}[g]]$ if and only if $g\in\widetilde \D$ as discussed in Remark~\ref{rem:downcomposition}. Recalling that up transformed densities are well defined for any $f\in\D$, and the fact that $\up{\alpha}$ establishes a bijection between $\D$ and $\up{\alpha}[\D],$ it follows that $\up{\alpha}\up{\beta}$ also establishes a bijection between $\D$ and $\up{\alpha}[\up{\beta}[\D]]$. Then necessarily follows
$$
\up{\alpha}\up{\beta}[\D]=\widetilde \D,\qquad \text{or equivalently,}\qquad \down{\beta}\down{\alpha}[\widetilde \D]=\D,
$$
concluding the proof.
\end{proof}

\section{Stretched Gaussian densities and up and down transforms}

We compute in this section the outcome of the application of the down and up transformations to the minimizers $g_{p,\lambda}$ of the biparametric Stam and entropy-moment inequalities, defined in Eq. \eqref{def:g_plambda}. Together with the interest by themselves, these calculations will be essential for deducing the minimizers to some new, mirrored biparametric inequalities obtained through our transformations and presented in the following section. Note that $g_{p,\lambda}$ are generally defined as symmetric probability densitiy functions on $\Rset$; however in the following lines we will consider only their restriction to $\Rset_+$. The corresponding counterpart of the transformed density can be obtained by symmetry arguments.
\begin{proposition}\label{prop:downg}
	Let $\alpha\in\Rset$ and $\lambda>1-p^*$. We have
	\begin{enumerate}[a)]
		\item For $\lambda\neq 1$ and $\alpha\neq 2,$
		\begin{equation}\label{eq:down_gpl}
		\down{\alpha}[g_{p,\lambda}](s)=\frac{|1-\lambda|^{\frac 1p}}{a_{p,\lambda}^{\frac{\lambda-1}{p ^*}}p ^*}\left[(\alpha-2)s\right]^{\frac{\alpha+\lambda-2}{2-\alpha}}\left|\left[(\alpha-2)s\right]^{\frac{\lambda-1}{2-\alpha}}-a_{p,\lambda}^{\lambda-1}\right|^{-\frac {1}p}.
		\end{equation}
		\item For $\lambda=1$ and $\alpha\neq 2,$
		\begin{equation}\label{eq:down_gpl_l1}
		\down{\alpha}[g_{p,1}](s)=\frac{1}{p^*}[(\alpha-2) s]^{\frac{\alpha-1}{2-\alpha}}\left(\frac{\ln[a_{p,1}^{\alpha-2}(\alpha-2) s]}{\alpha-2}\right)^{-\frac{1}{p}}.
		\end{equation}
		\item For $\lambda\neq 1$ and $\alpha= 2,$
		\begin{equation}\label{eq:down_gpl_a2}
		\down{2}[g_{p,\lambda}]=\frac{|1-\lambda|^\frac 1p a_{p,\lambda}^{\frac 1{1-\lambda}}}{p^*} e^{-\lambda s}\left|a_{p,\lambda}^{-1}e^{(1-\lambda) s}-1\right|^{-\frac 1p}.
		\end{equation}
		\item For $\lambda= 1$ and $\alpha=2,$
		\begin{equation}\label{eq:down_gpl_a2l1}
		\down{2}[g_{p,1}]=\frac{e^{-s}}{p ^* (s+\ln a_{p,1})^\frac 1p}.
		\end{equation}
	\end{enumerate}
\end{proposition}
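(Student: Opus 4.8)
The plan is to prove all four identities by one and the same direct computation, assembled from the explicit shape of $g_{p,\lambda}$, the canonical changes of variable of Remark~\ref{rem:can}, and the defining formula $\down{\alpha}[g_{p,\lambda}](s)=g_{p,\lambda}^{\alpha}(x(s))\,|g_{p,\lambda}'(x(s))|^{-1}$ of Definition~\ref{def:down}. Since $g_{p,\lambda}$ is strictly decreasing on $\Rset_+$ under the standing hypothesis $\lambda>1-p^*$, the down transformation is well defined there, and the four cases in the statement are precisely the four ways of combining $\lambda=1$ or $\lambda\neq1$ (which fixes the shape of $g_{p,\lambda}$) with $\alpha=2$ or $\alpha\neq2$ (which fixes the change of variable).

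First I would record the minimizer explicitly on $\Rset_+$ from Eq.~\eqref{def:g_plambda} and Eq.~\eqref{def:q-exp}: for $\lambda\neq1$, $g_{p,\lambda}(x)=a_{p,\lambda}\,(1+(1-\lambda)x^{p^*})^{-1/(1-\lambda)}$, while for $\lambda=1$ it degenerates to $g_{p,1}(x)=a_{p,1}\,e^{-x^{p^*}}$. A short differentiation then yields the compact relation $|g_{p,\lambda}'(x)|=p^*\,a_{p,\lambda}^{\lambda-1}\,x^{p^*-1}\,g_{p,\lambda}^{2-\lambda}(x)$, with the convention $a_{p,1}^{0}=1$ so that the case $|g_{p,1}'|=p^*x^{p^*-1}g_{p,1}$ is subsumed. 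Substituting this into the definition collapses the whole object to the single reduced expression
\[
\down{\alpha}[g_{p,\lambda}](s)=\frac{g_{p,\lambda}^{\alpha+\lambda-2}(x(s))}{p^*\,a_{p,\lambda}^{\lambda-1}}\,\big(x(s)^{p^*}\big)^{-1/p},
\]
where I have used $x^{p^*-1}=(x^{p^*})^{1/p}$, valid because $p^*-1=p^*/p$.

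The remaining task is purely to read off $g_{p,\lambda}(x(s))$ and $x(s)^{p^*}$ from the change of variable and insert them. For $\alpha\neq2$, Eq.~\eqref{eq:can_change} gives $(\alpha-2)s=g_{p,\lambda}^{2-\alpha}$, hence $g_{p,\lambda}(x(s))=[(\alpha-2)s]^{1/(2-\alpha)}$; for $\alpha=2$, Eq.~\eqref{eq:can_change_a2} gives instead $g_{p,\lambda}(x(s))=e^{-s}$. Feeding this value back into the defining identity of the minimizer supplies $x^{p^*}$ algebraically: for $\lambda\neq1$ from $1+(1-\lambda)x^{p^*}=(a_{p,\lambda}/g_{p,\lambda})^{1-\lambda}$, and for $\lambda=1$ from $x^{p^*}=\ln(a_{p,1}/g_{p,1})$. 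The four resulting pairs of substitutions, plugged into the reduced expression, are meant to produce Eqs.~\eqref{eq:down_gpl}--\eqref{eq:down_gpl_a2l1} respectively.

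I expect no conceptual difficulty: the work is entirely algebraic bookkeeping, and that is where I would be most careful. The recurring simplifications are $p^*-1=p^*/p$ and $1-1/p=1/p^*$; the key structural move is, after computing $x^{p^*}=\big((a_{p,\lambda}/g_{p,\lambda})^{1-\lambda}-1\big)/(1-\lambda)$, to factor out $a_{p,\lambda}^{1-\lambda}$ so that $x^{p^*}=a_{p,\lambda}^{1-\lambda}\big(g_{p,\lambda}^{\lambda-1}-a_{p,\lambda}^{\lambda-1}\big)/(1-\lambda)$, which is exactly what makes the difference inside the absolute value of Eq.~\eqref{eq:down_gpl} appear; one must also track signs so that quantities that may be negative (depending on whether $\lambda\gtrless1$ and on the sign of $(\alpha-2)s$) are correctly rewritten with $|\cdot|$. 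In case (a) the accumulated power of $a_{p,\lambda}$ has to be shown to collapse to $-(\lambda-1)/p^*$; in case (b) the logarithm reorganizes into $\ln[a_{p,1}^{\alpha-2}(\alpha-2)s]/(\alpha-2)$; and in cases (c)--(d) the exponential change of variable $g_{p,\lambda}(x(s))=e^{-s}$ turns the power $[(\alpha-2)s]^{\bullet}$ into the factors $e^{-\lambda s}$ and $e^{(1-\lambda)s}$. Once these reductions are performed the stated formulas follow; the single genuinely fiddly point, which I would verify carefully (and cross-check numerically), is the exact power of $a_{p,\lambda}$ and the precise argument of the absolute value in the exponential case $\alpha=2$, $\lambda\neq1$.
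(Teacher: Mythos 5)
Your proposal follows essentially the same route as the paper: write $g_{p,\lambda}$ explicitly, differentiate, use the canonical change of variable from Remark~\ref{rem:can}, and substitute into Definition~\ref{def:down}; your only organizational difference is to package the four cases into the single reduced formula $\down{\alpha}[g_{p,\lambda}](s)=g_{p,\lambda}^{\alpha+\lambda-2}(x(s))\,(x(s)^{p^*})^{-1/p}/(p^*a_{p,\lambda}^{\lambda-1})$ before specializing, which is a harmless (and slightly cleaner) repackaging. Your algebra checks out in cases (a), (b) and (d), including the collapse of the power of $a_{p,\lambda}$ to $-(\lambda-1)/p^*$ via $1-1/p=1/p^*$. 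One substantive remark: in case (c), the very spot you single out for a numerical cross-check, your method does \emph{not} reproduce Eq.~\eqref{eq:down_gpl_a2} as printed. Since $g_{p,\lambda}=a_{p,\lambda}(1+(1-\lambda)x^{p^*})^{1/(\lambda-1)}$, one has $1+(1-\lambda)x(s)^{p^*}=a_{p,\lambda}^{1-\lambda}e^{(1-\lambda)s}$, whereas the paper's derivation writes $s=-\ln g_{p,\lambda}=\frac{1}{1-\lambda}\ln[a_{p,\lambda}(1+(1-\lambda)x^{p^*})]$, which silently moves $a_{p,\lambda}$ inside the $\frac{1}{\lambda-1}$ power; your computation instead yields $\down{2}[g_{p,\lambda}](s)=\frac{|1-\lambda|^{1/p}a_{p,\lambda}^{1-\lambda}}{p^*}e^{-\lambda s}\bigl|a_{p,\lambda}^{1-\lambda}e^{(1-\lambda)s}-1\bigr|^{-1/p}$, which is the $\alpha\to2$ analogue of Eq.~\eqref{eq:down_gpl} under $[(\alpha-2)s]^{1/(2-\alpha)}\mapsto e^{-s}$ and which passes a direct pointwise check of $g_{p,\lambda}^2(x)/|g_{p,\lambda}'(x)|$ at $s=s(x)$, while the printed formula does not (the two agree only when $\lambda=2$ or $a_{p,\lambda}=1$). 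So your proof is sound and, if carried through as described, would correct the exponents of $a_{p,\lambda}$ in case (c).
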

We observe that the family of functions obtained in Eq. \eqref{eq:down_gpl} corresponds to the probability density functions of the generalized Beta distributions.
\begin{proof}
	(a) Assume first that $\alpha\neq 2$ and $\lambda\neq 1$.  We start by computing $\down{\alpha} [g_{p,\lambda}]$, recalling at this point that $g_{p,\lambda}$ is defined for $x\in(0,\infty)$. On the one hand, its derivative is then given by
	\[|g_{p,\lambda}(x)'|=a_{p,\lambda}p ^*(1+(1-\lambda) x^{p^*})^{\frac{2-\lambda}{\lambda-1}}x^{\frac{1}{p-1}}.\]
	On the other hand, we deduce from Eq.~\eqref{eq:can_change} that
	\[s(x)=\frac{g_{p,\lambda}^{2-\alpha}(x)}{\alpha-2}=\frac{a_{p,\lambda}^{2-\alpha}}{\alpha-2} (1+(1-\lambda) x^{p^*})^{\frac{2-\alpha}{\lambda-1}},\]
	hence, inverting the latter change of variable
	\[(1-\lambda )x(s)^{p^*}=\left[\frac{(\alpha-2)s}{a_{p,\lambda}^{2-\alpha}}\right]^{\frac{\lambda-1}{2-\alpha}}-1.\]
	Note that, in the case $\lambda>1$, the support of $g_{p,\lambda}$ is compact, and one can easily deduce that the right hand side of the latter equation is always negative. We thus find that
	\[x(s)^{p^*}=\frac1{1-\lambda}\left(\left[\frac{(\alpha-2)s}{a_{p,\lambda}^{2-\alpha}}\right]^{\frac{\lambda-1}{2-\alpha}}-1\right)>0.\]
	We next compute $g_{p,\lambda}(x(s))$ and $g'_{p,\lambda}(x(s)).$ We have
	\begin{equation}\label{eq:proof1}
	g_{p,\lambda}(x(s))=a_{p,\lambda}(1+(1-\lambda)x(s)^{p^*})^{\frac1{\lambda-1}}=a_{p,\lambda}\left[\frac{(\alpha-2)s}{a_{p,\lambda}^{2-\alpha}}\right]^{\frac{1}{2-\alpha}}=\left[(\alpha-2)s\right]^{\frac{1}{2-\alpha}}
	\end{equation}
	and
	\begin{equation}\label{eq:proof2}
	\begin{split}
	|g'_{p,\lambda}(x(s))|&=a_{p,\lambda}p ^*(1+(1-\lambda) x(s)^{p^*})^{\frac{2-\lambda}{\lambda-1}}x(s)^{\frac{1}{p-1}}
	\\
	&=a_{p,\lambda}^{\lambda-1}p ^* \left[(\alpha-2)s\right]^{\frac{2-\lambda}{2-\alpha}} \left(\frac1{1-\lambda}\left(\left[\frac{(\alpha-2)s}{a_{p,\lambda}^{2-\alpha}}\right]^{\frac{\lambda-1}{2-\alpha}}-1\right)\right)^\frac 1p
	\\&=\frac{a_{p,\lambda}^{\frac{\lambda-1}{p ^*}}p ^*}{|1-\lambda|^{\frac 1p}} \left[(\alpha-2)s\right]^{\frac{2-\lambda}{2-\alpha}} \left|\left[(\alpha-2)s\right]^{\frac{\lambda-1}{2-\alpha}}-a_{p,\lambda}^{\lambda-1}\right|^\frac 1p.
	\end{split}
	\end{equation}
	Eq. \eqref{eq:down_gpl} then follows immediately from the definition Eq. \eqref{eq:down} of the down transformation and Eqs.~\eqref{eq:proof1} and ~\eqref{eq:proof2}.
	
	\smallskip
	(b) Let now $\lambda=1,\,\alpha\neq2$ and recall that $g_{p,1}(x)=a_{p,1}e^{-x^{p^*}}.$ We infer from the definition of $g_{p,1}$ that
	%\[|g_{p,1}'(x)|=a_{p,1}p^*x^{p^*-1}e^{-x^{p^*}}\]
	%\[s(x)=\frac{a_{p,1}^{2-\alpha}e^{(\alpha-2)x^{p^*}}}{\alpha-2}\]
	\[x(s)^{p^*}=\frac{\ln[a_{p,1}^{\alpha-2}(\alpha-2) s]}{\alpha-2}>0.\]
	We perform analogous calculations as above to find
	\begin{equation*}
	\begin{split}
	\down{\alpha}[g_{p,1}](s)&=\frac{g_{p,1}(x(s))^\alpha}{|g'_{p,1}(x(s))|}=\frac{a_{p,1}^\alpha e^{-\alpha x(s)^{p^*}}}{a_{p,1}p^*x(s)^{p^*-1}e^{-x(s)^{p^*}}}
	\\
	&=\frac{1}{p^*}[(\alpha-2) s]^{\frac{\alpha-1}{2-\alpha}}\left(\frac{\ln[a_{p,1}^{\alpha-2}(\alpha-2) s]}{\alpha-2}\right)^{-\frac{1}{p}},
	\end{split}
	\end{equation*}
	arriving thus at Eq. \eqref{eq:down_gpl_l1}.
	%%%%%%%%%%%%%%%%%%%% ALPHA=2
	\smallskip
	
	(c) Fix now $\alpha=2$ and $\lambda\neq1$, recalling that $s(x)$ is given by Eq.~\eqref{eq:can_change_a2}. In particular, applying  Eq.~\eqref{eq:can_change_a2} to $g_{p,\lambda}$, we obtain
	\[s(x)=-\ln[g_{p,\lambda}(x)]=\frac{\ln[a_{p,\lambda} (1+(1-\lambda)x^{p^*})]}{1-\lambda},\]
	or equivalently
	\[x(s)^{p^*}=\frac1{1-\lambda}\left(a_{p,\lambda}^{-1}e^{(1-\lambda) s}-1\right)>0.\]
	Then, applying the definition of $\down{\alpha}$ transforms, we have
	\begin{equation*}
	\begin{split}
	\down{2}[g_{p,\lambda}]&=\frac{g_{p,\lambda}^2(x(s))}{|g_{p,\lambda}'(x(s))|}=\frac{a_{p,\lambda}^{\frac{2(2-\lambda)}{1-\lambda}} e^{-2s}}{a_{p,\lambda}^{\frac{2\lambda-3}{\lambda-1}}p ^* e^{(\lambda-2)s}\left[\frac1{1-\lambda}\left(a_{p,\lambda}^{-1}e^{(1-\lambda) s}-1\right)\right]^{\frac 1p}}
	\\
	&=\frac{|1-\lambda|^\frac 1p a_{p,\lambda}^{\frac 1{1-\lambda}}}{p^*} e^{-\lambda s}\left|a_{p,\lambda}^{-1}e^{(1-\lambda) s}-1\right|^{-\frac 1p},
	\end{split}
	\end{equation*}
	which is exactly Eq. \eqref{eq:down_gpl_a2}.
	\smallskip
	
	(d) We are finally left with the limiting case $\alpha=2$ and $\lambda=1.$ We find from Eq.~\eqref{eq:can_change_a2} applied to $g_{p,1}$ that
	\begin{equation}
	x(s)^{p^*}=s+\ln(a_{p,1}),
	\end{equation}
	and we further get
	\begin{equation*}
	\down{2}[g_{p,1}]=\frac{g_{p,1}^2(x(s))}{|g_{p,1}'(x(s))|}=\frac{e^{-2s}}{p ^* (s+\ln a_{p,1})^\frac 1p e^{-s}},
	\end{equation*}
	which leads to Eq. \eqref{eq:down_gpl_a2l1}.
\end{proof}

\begin{remark}
	As a consequence of the more general Remarks \ref{rem:can} and \ref{rem:downsupp}, since for the minimizers we always have $x_i=0$, $g_{p,\lambda}(0)=a_{p,\lambda}$ and $g_{p,\lambda}(x)\to0$ as $x\to x_f$ (where $x_f=\infty$ if $\lambda\leq1$ and $x_f\in(0,\infty)$ if $\lambda>1$ is the edge of the compact support in this latter case), we deduce that the support of $\down{\alpha}(g_{p,\lambda})$ is
	$$
	{\rm supp}\,\down{\alpha}(g_{p,\lambda})=\left\{\begin{array}{ll}\left[\frac{a_{p,\lambda}^{2-\alpha}}{\alpha-2},0\right], & {\rm if} \ \alpha<2,\\[3mm]
	\left[-\ln\,a_{p,\lambda},\infty\right), & {\rm if} \ \alpha=2,
	\\[3mm]
	\left[\frac{a_{p,\lambda}^{2-\alpha}}{\alpha-2},\infty\right), & {\rm if} \ \alpha>2.
	\end{array}\right.
	$$
\end{remark}

The next consequence of Proposition \ref{prop:downg} gathers some nice particular cases.
\begin{corollary}\label{cor:down_gpl}
		It follows directly from Eq.~\eqref{eq:down_gpl} that, for $\lambda\in\Rset\setminus\{0,1\}$,
\begin{equation*}
\down{2-\lambda}[g_{p,\lambda}]=g_{1-\lambda,1-p},\quad \text{or equivalently}, \quad \up{2-\lambda}[g_{1-\lambda,1-p}]=g_{p,\lambda}.
\end{equation*}
At the same time, this fact implies that
\begin{equation*}
\down{1+p}[g_{1-\lambda,1-p}]=g_{p,\lambda},\quad \text{or equivalently}, \quad \up{1+p}[g_{p,\lambda}]=g_{1-\lambda,1-p}.
\end{equation*}
We infer from the previous equalities that
\begin{equation*}
(\down{1+p}\down{2-\lambda})[g_{p,\lambda}]=g_{p,\lambda},\quad \text{or equivalently}, \quad (\up{2-\lambda}\up{1+p})[g_{p,\lambda}]=g_{p,\lambda}.
\end{equation*}
	\end{corollary}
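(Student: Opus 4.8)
The plan is to prove the first identity by a direct substitution into Eq.~\eqref{eq:down_gpl} and then to obtain the remaining ones from a symmetry of the parameters together with the inversion Proposition~\ref{prop:inv}. First I would set $\alpha=2-\lambda$ in formula \eqref{eq:down_gpl} of Proposition~\ref{prop:downg}(a), which is legitimate since $\lambda\in\Rset\setminus\{0,1\}$ guarantees both $\lambda\neq1$ and $\alpha=2-\lambda\neq2$. The key algebraic observation is that the exponent $\frac{\alpha+\lambda-2}{2-\alpha}$ of the monomial factor $[(\alpha-2)s]^{\frac{\alpha+\lambda-2}{2-\alpha}}$ vanishes exactly when $\alpha=2-\lambda$, so this factor collapses to $1$ and one is left with
\[
\down{2-\lambda}[g_{p,\lambda}](s)=\frac{|1-\lambda|^{1/p}}{a_{p,\lambda}^{(\lambda-1)/p^*}\,p^*}\,\left|[-\lambda s]^{\frac{\lambda-1}{\lambda}}-a_{p,\lambda}^{\lambda-1}\right|^{-1/p}.
\]

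Second, I would recognize the right-hand side as a (mirrored) stretched Gaussian of the type \eqref{def:g_plambda}. Reading off the exponents is the heart of the matter: the power of the variable inside the bracket is $\frac{\lambda-1}{\lambda}=(1-\lambda)^*$, which identifies the first index as $1-\lambda$, while the outer exponent $-\frac1p$ matches $\frac{1}{(1-p)-1}$, identifying the second index as $1-p$; hence the functional form is precisely that of $g_{1-\lambda,1-p}$ in the variable $s$, after the reflection that turns the negative range of $s$ (see Remark~\ref{rem:can}) into the canonical orientation of Eq.~\eqref{def:g_plambda}.

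For the normalization constant, rather than grinding through the Beta-function identity \eqref{def_aplambda} under the replacement $(p,\lambda)\mapsto(1-\lambda,1-p)$, I would argue that the multiplicative constant must equal $a_{1-\lambda,1-p}$ because the down transformation preserves total mass by Eq.~\eqref{eq:preser_prob_down}: both $\down{2-\lambda}[g_{p,\lambda}]$ and $g_{1-\lambda,1-p}$ are probability densities of identical functional shape and support, so their prefactors necessarily coincide. This establishes $\down{2-\lambda}[g_{p,\lambda}]=g_{1-\lambda,1-p}$, and the equivalent form $\up{2-\lambda}[g_{1-\lambda,1-p}]=g_{p,\lambda}$ follows at once by applying $\up{2-\lambda}=(\down{2-\lambda})^{-1}$ from Proposition~\ref{prop:inv} to both sides.

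The remaining identities then require no new computation. The map $(p,\lambda)\mapsto(1-\lambda,1-p)$ is an involution preserving admissibility, and applying the first identity with $p$ and $\lambda$ replaced by $1-\lambda$ and $1-p$ yields $\down{2-(1-p)}[g_{1-\lambda,1-p}]=g_{p,\lambda}$, i.e. $\down{1+p}[g_{1-\lambda,1-p}]=g_{p,\lambda}$, together with its inverse $\up{1+p}[g_{p,\lambda}]=g_{1-\lambda,1-p}$. Composing the two gives $(\down{1+p}\down{2-\lambda})[g_{p,\lambda}]=\down{1+p}[g_{1-\lambda,1-p}]=g_{p,\lambda}$, and the inverse-composition rule $(\down{1+p}\down{2-\lambda})^{-1}=\up{2-\lambda}\up{1+p}$ from Proposition~\ref{prop:inv} delivers $(\up{2-\lambda}\up{1+p})[g_{p,\lambda}]=g_{p,\lambda}$. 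The main obstacle I anticipate is the identification step: matching the exact functional form and, above all, the normalization constant in the mirrored domain, where the target $g_{1-\lambda,1-p}$ is a density divergent at the edge of its support and the sign/orientation bookkeeping for the negative variable $s$ must be handled with care.
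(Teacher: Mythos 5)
Your overall route --- substitute $\alpha=2-\lambda$ into Eq.~\eqref{eq:down_gpl} so that the monomial prefactor collapses, identify the result as $g_{1-\lambda,1-p}$, and then derive the remaining displays from the involution $(p,\lambda)\mapsto(1-\lambda,1-p)$ together with Proposition~\ref{prop:inv} --- is exactly the paper's (the corollary is asserted there to follow ``directly'' from Eq.~\eqref{eq:down_gpl}), and your reduction of the second and third displays to the first is clean and correct.

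The genuine gap is the identification step, which you yourself call ``the heart of the matter'' but then do not actually carry out. After setting $\alpha=2-\lambda$ the surviving bracket is
\begin{equation*}
\left|\left[(\alpha-2)s\right]^{\frac{\lambda-1}{\lambda}}-a_{p,\lambda}^{\lambda-1}\right|^{-\frac1p}
=\left|\,|\lambda|^{(1-\lambda)^*}|s|^{(1-\lambda)^*}-a_{p,\lambda}^{\lambda-1}\right|^{-\frac1p},
\end{equation*}
whereas $g_{1-\lambda,1-p}(s)=a_{1-\lambda,1-p}\left(1+p|s|^{(1-\lambda)^*}\right)_+^{-1/p}$. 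Two affine functions of $t=|s|^{(1-\lambda)^*}$ are proportional only if their coefficient pairs are, which here forces $p\,a_{p,\lambda}^{\lambda-1}=-|\lambda|^{(1-\lambda)^*}$ --- an identity that is generically false. So the two densities do \emph{not} have ``identical functional shape'': at best they agree after a dilation $s\mapsto\kappa s$ with a generically non-unit $\kappa$, i.e.\ only modulo the rescaling of Eq.~\eqref{def:resc}. A concrete check: $(p,\lambda)=(-1,2)$ is a fixed point of the involution, so the claim reads $\down{0}[g_{-1,2}]=g_{-1,2}$; direct computation gives $\down{0}[g_{-1,2}](s)=\tfrac43\bigl(1-\tfrac{2\sqrt2}{3}|s|^{1/2}\bigr)$ versus $g_{-1,2}(s)=\tfrac32\bigl(1-|s|^{1/2}\bigr)$, already different at $s=0$; in fact $\down{0}[g_{-1,2}]=(g_{-1,2})^{[8/9]}$. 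Your mass-conservation argument cannot repair this, because it presupposes that the shapes and supports already coincide, which is precisely what fails. Worse, in the portion of the stated range with $p>0$ (e.g.\ $\lambda=-1$, $p=3/2$, where both densities are admissible), $\down{2-\lambda}[g_{p,\lambda}]$ has an integrable power singularity at the finite edge of its support while $g_{1-\lambda,1-p}$ is bounded, so no translation or dilation can match them. To close the gap you must either restrict the parameters and weaken the conclusion to equality up to the rescaling $(\cdot)^{[\kappa]}$, or supply the missing constraint on $a_{p,\lambda}$ --- which does not hold.
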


The next result gathers the effect of the up transformation on the family of minimizers $g_{p,\lambda}$. We restrict the statement, in this case, to the parameters for which the outcome of the up transformation can be expressed in terms of well-established functions (such as the generalized trigonometric or hyperbolic ones), that is, the range $\lambda\neq1$ and $\alpha\in\Rset\setminus[1,2]$. A discussion of the remaining cases is included afterwards.

%%%%%%%%%%%%%%%%%%%%%%%%%%%%%%%%%%%%%%%%%%%%% UP g_{p,\lambda}
\begin{proposition}\label{prop:upg}
	Let $\alpha\in\Rset$ and $\lambda>1-p^*$. Then
	\begin{enumerate}[a)]
		\item For $\lambda>1$ and $\alpha\in\Rset\setminus[1,2]$ we have
		\begin{equation}\label{eq:up_gpl(>)}
		\up{\alpha}[g_{p,\lambda}](s)=\frac{|\alpha-2|^{\frac{1}{2-\alpha}}}{|\lambda-1|^{\frac{1}{(1-\alpha)p^*}}}
		\left|\sin_{1-\lambda,\frac{\alpha-2}{\alpha-1}p^*}\left(s_0-\frac{s}{C(\alpha,p,\lambda)}\right)\right|^{\frac{1}{1-\alpha}}.
		\end{equation}
		\item For $\lambda<1$ and $\alpha\in\Rset\setminus[1,2]$ we have
		\begin{equation}\label{eq:up_gpl(<)}
		\up{\alpha}[g_{p,\lambda}](s)=\frac{|\alpha-2|^{\frac{1}{2-\alpha}}}{|\lambda-1|^{\frac{1}{(1-\alpha)p^*}}}
		\left|\sinh_{1-\lambda,\frac{\alpha-2}{\alpha-1}p^*}\left(s_0-\frac{s}{C(\alpha,p,\lambda)}\right)\right|^{\frac{1}{1-\alpha}}.
		\end{equation}
	\end{enumerate}
\end{proposition}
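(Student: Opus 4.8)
The plan is to compute $\up{\alpha}[g_{p,\lambda}]$ directly from Definition~\ref{def:up}, exploiting the explicit form of the minimizer $g_{p,\lambda}(x)=a_{p,\lambda}\big(1+(1-\lambda)x^{p^*}\big)^{1/(\lambda-1)}$ on $x\geqslant0$, and to recognize the resulting primitive as a generalized arcsine (resp. arcsinh) integral, whose inversion produces the generalized sine (resp. hyperbolic sine). Throughout, recall that $\sin_{a,b}$ and $\sinh_{a,b}$ denote the inverses, on a suitable interval, of $z\mapsto\int_0^z(1-t^{b})^{-1/a}\,dt$ and $z\mapsto\int_0^z(1+t^{b})^{-1/a}\,dt$, respectively.

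First I would write out the defining relations $\up{\alpha}[g_{p,\lambda}](u)=|(\alpha-2)x(u)|^{1/(2-\alpha)}$ and $u'(x)=-|(\alpha-2)x|^{1/(\alpha-2)}g_{p,\lambda}(x)$, integrate the latter from $x$ to $x_f$, and substitute the explicit $g_{p,\lambda}$ to obtain
\begin{equation*}
u(x)=a_{p,\lambda}\,|\alpha-2|^{1/(\alpha-2)}\int_x^{x_f}v^{1/(\alpha-2)}\big(1+(1-\lambda)v^{p^*}\big)^{1/(\lambda-1)}\,dv.
\end{equation*}
Passing to the natural variable $y=|(\alpha-2)v|^{1/(2-\alpha)}$ of the transformed density (so that $\up{\alpha}[g_{p,\lambda}]=y$) essentially collapses the stray powers of $|\alpha-2|$, and the further change $w=y^{1-\alpha}$ --- legitimate precisely because $\alpha\neq1$ --- reduces the integrand to the form $(1+c\,w^{b})^{1/(\lambda-1)}$ with $b=\frac{\alpha-2}{\alpha-1}p^*$ and $c=(1-\lambda)|\alpha-2|^{-p^*}$. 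These are exactly the two indices appearing in the statement: the power index $b=\frac{\alpha-2}{\alpha-1}p^*$, and the exponent index, since $1/(\lambda-1)=-1/(1-\lambda)$ identifies $a=1-\lambda$.

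After absorbing $|c|$ by the rescaling $\tilde w=|c|^{1/b}w$, the primitive becomes a constant multiple of $\int\big(1+\sign(1-\lambda)\,\tilde w^{b}\big)^{-1/(1-\lambda)}\,d\tilde w$. For $\lambda>1$ one has $\sign(1-\lambda)=-1$ and $g_{p,\lambda}$ is compactly supported, so this is $\int_0^{\tilde w}(1-t^{b})^{-1/(1-\lambda)}\,dt$, whose inverse is $\sin_{1-\lambda,b}$; for $\lambda<1$ the sign is $+1$ and the support is all of $\Rset$ (an improper integral), giving $\int_0^{\tilde w}(1+t^{b})^{-1/(1-\lambda)}\,dt$, whose inverse is $\sinh_{1-\lambda,b}$. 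Inverting then expresses $\tilde w$, hence $w=|c|^{-1/b}\tilde w$ and $y=w^{1/(1-\alpha)}$, as a power of $\sin_{1-\lambda,b}$ (resp. $\sinh_{1-\lambda,b}$) evaluated at an affine function $s_0-s/C(\alpha,p,\lambda)$ of the transformed variable; the slope $1/C$ is, up to sign, the reciprocal of the constant $a_{p,\lambda}|1-\alpha|^{-1}|c|^{-1/b}$ accumulated in front of the integral, while $s_0$ is fixed by the boundary value $u=0$ at $x=x_f$. Collecting the powers of $|\alpha-2|$ and of $|c|=|1-\lambda|\,|\alpha-2|^{-p^*}$ into the prefactor then yields Eqs.~\eqref{eq:up_gpl(>)} and~\eqref{eq:up_gpl(<)}. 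The hypothesis $\alpha\in\Rset\setminus[1,2]$ is precisely what forces $b=\frac{\alpha-2}{\alpha-1}p^*>0$, so that the generalized sine and hyperbolic sine are the standard ones; the band $(1,2)$ and the case $\lambda=1$ are deferred to the subsequent discussion.

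The conceptually delicate step is the identification of the primitive with the generalized (hyperbolic) arcsine: it hinges on choosing the composite substitution $y=|(\alpha-2)v|^{1/(2-\alpha)}$, $w=y^{1-\alpha}$, which simultaneously produces the correct power index $b=\frac{\alpha-2}{\alpha-1}p^*$ and linearizes the relation between $w$ and the sine. Everything else is careful bookkeeping --- tracking the absolute values coming from the possibly negative sign of $(\alpha-2)x$ in the two regimes $\alpha<1$ and $\alpha>2$, the sign of $1-\lambda$ selecting $\sin$ versus $\sinh$, and the multiplicative constants assembling into $C(\alpha,p,\lambda)$ and the stated prefactor. An alternative, purely verificatory route would be to take Eqs.~\eqref{eq:up_gpl(>)}--\eqref{eq:up_gpl(<)} as an ansatz and apply $\down{\alpha}$ to it, using the first-order differential identity satisfied by $\sin_{a,b}$ and $\sinh_{a,b}$ together with the inversion Proposition~\ref{prop:inv} to check that $g_{p,\lambda}$ is recovered; this avoids inverting an integral but requires the defining ODE of the generalized trigonometric functions.
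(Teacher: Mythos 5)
Your proposal is correct and takes essentially the same route as the paper: the authors likewise substitute the explicit form of $g_{p,\lambda}$ into the primitive defining $u(x)$, perform the single power substitution $w=\big(|\lambda-1|^{1/p^*}v\big)^{(\alpha-1)/(\alpha-2)}$ (which is exactly the composite of your $y=|(\alpha-2)v|^{1/(2-\alpha)}$, $w=y^{1-\alpha}$ and the rescaling $\tilde w=|c|^{1/b}w$), identify the resulting integral with $\arcsin_{1-\lambda,\frac{\alpha-2}{\alpha-1}p^*}$ (resp.\ its hyperbolic analogue when $\sign(1-\lambda)$ flips), and invert. The only cosmetic difference is that the paper integrates from an interior point $x_0\in(0,\infty)$ rather than from $x_f$, precisely to avoid a possibly divergent endpoint when $\lambda<1$; since the change of variable is defined only up to a translation, this does not affect the argument.
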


\begin{proof}
We set $x_0\in(0,\infty)$ as integration endpoint throughout the proof to avoid problems with the definition in the change of variable in Eqs.~\eqref{eq:up} and \eqref{eq:up2}, recalling that the change of variables in the up transformation is defined up to a translation.
	
	(a) Assume first that $\lambda>1$ and $\alpha\in\Rset\setminus[1,2].$ The change of variables of the up transform~\eqref{eq:up} applied to the family of densities $g_{p,\lambda}$ is defined by
	\begin{equation}\label{eq:cambio}
	s(x)=\int_x^{x_0}|(\alpha-2)v|^\frac{1}{\alpha-2} g_{p,\lambda}(v)dv=a_{p,\lambda}|\alpha-2|^\frac{1}{\alpha-2}\int_x^{x_0}|v|^\frac{1}{\alpha-2} (1+(1-\lambda)v^{p^*})_+^{\frac{1}{\lambda-1}}dv.
	\end{equation}
We perform the change of variables
	
	$$w=\left(|\lambda-1|^{\frac1{p^*}} v\right)^{\frac{\alpha-1}{\alpha-2}}$$
	and we obtain after straightforward calculations
	\begin{equation}\label{eq:sin_gen}
	\begin{split}
	s(x)&=C(\alpha,p,\lambda)\int_{\xi(x)}^{\xi(x_0)} (1-w^{\frac{\alpha-2}{\alpha-1}p^*})^\frac{1}{\lambda-1}dw
	\\
	&=C(\alpha,p,\lambda)\left[\arcsin_{1-\lambda,\frac{\alpha-2}{\alpha-1}p^*}(\xi(x_0))-\arcsin_{1-\lambda,\frac{\alpha-2}{\alpha-1}p^*}(\xi(x))\right],
	\end{split}
	\end{equation}
	where
	\[C(\alpha,p,\lambda)=\frac{a_{p,\lambda}|\alpha-2|^{\frac1{\alpha-2}} (\alpha-2)}{|\lambda-1|^{\frac{\alpha-1}{p^*(\alpha-2)}}(\alpha-1)},\qquad \xi(x)=|\lambda-1|^{\frac1{p^*}}x^{\frac{\alpha-1}{\alpha-2}},\]
	and $\arcsin_{a,b}$ is the generalized arcsine function (see for example \cite{Drabek99, Yin19}), which is well defined since $\frac{\alpha-2}{\alpha-1}p^*>0$. Then, by inverting Eq.~\eqref{eq:sin_gen}, we arrive to
	$$
	x(s)=\left[\frac{\sin_{1-\lambda,\frac{\alpha-2}{\alpha-1}p^*}\left(s_0-\frac{s}{C(\alpha,p,\lambda)}\right)}{|\lambda-1|^ {\frac{1}{p^*}}}\right]^{\frac{\alpha-2}{\alpha-1}}, \quad
	s_0=\arcsin_{1-\lambda,\frac{\alpha-2}{\alpha-1}p^*}(\xi(x_0)),
	$$
	which immediately implies Eq. \eqref{eq:up_gpl(>)}.
	
	\smallskip
	
	(b) Let us notice that, for $\lambda<1$, the calculations leading to Eq. \eqref{eq:sin_gen} are completely analogous, with the single difference than in the integrand contained in the right hand side of the expression of $s(x)$ we have $1+w^{\frac{\alpha-2}{\alpha-1}p^*}$ (that is, the minus sign changes into a plus sign), which leads to a closed expression similar to Eq. \eqref{eq:up_gpl(>)} but with the hyperbolic counterpart of the generalized sine function (see for example \cite{Yin19}).
\end{proof}

\begin{remark}
	Taking into account that the definition of $\arcsin_{v,w}$, we observe that
	$$
	\sin_{v,1}(x)=1-\left(1-\frac{v-1}v x\right)^\frac{v}{v-1}.
	$$
	Thus, if we let $\alpha=1+p$, which is equivalent to
	$$
	\frac{\alpha-2}{\alpha-1}p^*=1,
	$$
	in Eq.~\eqref{eq:sin_gen}, we recover the stretched Gaussian densities $g_{1-\lambda,1-p}$ (as mentioned in Corollary~\ref{cor:down_gpl}) up to a translation.
\end{remark}

\noindent \textbf{Discussion of the case $\alpha\in[1,2]$, $\lambda\neq1$.} Let us point out here that, for $\alpha\in(1,2)$ and $\lambda\neq1,$ Eq.~\eqref{eq:cambio} still makes sense, despite the negativity of $\frac{\alpha-2}{\alpha-1}p^*$, while the change of variable leading to Eq. \eqref{eq:sin_gen} is no longer well defined due to the fact that the generalized \textit{arcsine} function is defined only for positive values of its parameters. Thus, we cannot write the transform in terms of well-established special or generalized functions. In the limiting case $\alpha=1$ we have to introduce the change of variable $w=\ln\,v$, which gives
\begin{equation}\label{eq:a1}
s(x)=a_{p,\lambda}\int_{\ln\,x}^{\ln\,x_0}(1+(1-\lambda)e^{p^*w})_+^{\frac{1}{\lambda-1}}dw.
\end{equation}
In both cases $\alpha=1$ or $\alpha\in(1,2)$, by definition, the up transform is given by
\begin{equation}
\up{\alpha}[g_{p,\lambda}](s)=|(\alpha-2)x(s)|^\frac{1}{2-\alpha}
\end{equation}
and it can still be applied starting from Eq. \eqref{eq:a1}, respectively Eq. \eqref{eq:cambio}, but without a "nice" expression as the ones given in the statement of Prop.~\eqref{prop:upg}. Finally, for $\alpha=2$ and $\lambda\neq1$, one finds
\begin{equation}
\up{2}[g_{p,\lambda}](s)=e^{-x(s)},\quad s(x)=\int_x^{x_0}e^vg_{p,\lambda}(v)dv.
\end{equation}

Let us now complete our analysis with the case $\lambda=1$. We have
\begin{proposition}\label{prop:upl1}
	Let $\alpha\in\Rset\setminus[1,2]$ and $\lambda=1$. Then
	\begin{equation}\label{eq:upl1}
	\up{\alpha}[g_{p,1}](s)=|\alpha-2|^{\frac{1}{\alpha-2}}\left|\Gamma^{-1}\left(\frac{\alpha-1}{(\alpha-2)p^*},\frac{s+s_0}{K}\right)\right|^{\frac{1}{p^*(\alpha-2)}},
	\end{equation}
	where
	\begin{equation}\label{eq:notupl1}
	K=\frac{a_{p,1}}{p^*}|\alpha-2|^{\frac{1}{\alpha-2}}, \quad s_0=K\Gamma\left(\frac{\alpha-1}{(\alpha-2)p^*},x_0^{p^*}\right).
	\end{equation}
\end{proposition}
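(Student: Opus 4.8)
The plan is to follow the very same scheme as the proof of Proposition~\ref{prop:upg}, now specialized to $\lambda=1$, where the minimizer collapses to the genuine stretched Gaussian $g_{p,1}(x)=a_{p,1}e^{-x^{p^*}}$ (recall that $\exp_1=\exp$ in Eq.~\eqref{def:q-exp}), regarded as a density on $(0,\infty)$. The only structural change is that the generalized-trigonometric integral that appeared for $\lambda\neq1$ is now replaced by an incomplete-Gamma integral. First I would write down the change of variable defining the up transform from Eq.~\eqref{eq:up}, fixing an interior endpoint $x_0\in(0,\infty)$ exactly as in Proposition~\ref{prop:upg}, namely
\begin{equation*}
s(x)=\int_x^{x_0}|(\alpha-2)v|^{\frac{1}{\alpha-2}}g_{p,1}(v)\,dv
=a_{p,1}\,|\alpha-2|^{\frac{1}{\alpha-2}}\int_x^{x_0}v^{\frac{1}{\alpha-2}}e^{-v^{p^*}}\,dv,
\end{equation*}
where I used $v>0$ to factor $|(\alpha-2)v|^{1/(\alpha-2)}$.

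Next I would perform the substitution $w=v^{p^*}$, so that $dv=\frac1{p^*}w^{\frac1{p^*}-1}\,dw$ and $v^{1/(\alpha-2)}=w^{1/(p^*(\alpha-2))}$. Collecting the powers of $w$ collapses the exponent in the integrand to $a-1$ with
\begin{equation*}
a=\frac{\alpha-1}{(\alpha-2)p^*},
\end{equation*}
which is precisely the first argument of $\Gamma^{-1}$ appearing in Eq.~\eqref{eq:upl1}. The integral thus becomes $\frac1{p^*}\int_{x^{p^*}}^{x_0^{p^*}}w^{a-1}e^{-w}\,dw=\frac1{p^*}\bigl[\Gamma(a,x^{p^*})-\Gamma(a,x_0^{p^*})\bigr]$, with $\Gamma(a,\cdot)$ the upper incomplete Gamma function. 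Reading off the prefactor gives exactly the constant $K=\frac{a_{p,1}}{p^*}|\alpha-2|^{1/(\alpha-2)}$ of Eq.~\eqref{eq:notupl1}, and absorbing the endpoint term yields $s(x)=K\,\Gamma(a,x^{p^*})-s_0$ with $s_0=K\,\Gamma(a,x_0^{p^*})$, matching Eq.~\eqref{eq:notupl1}.

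This is where the hypothesis $\alpha\in\Rset\setminus[1,2]$ enters decisively. Since the overarching assumption $\lambda>1-p^*$ forces $p^*>0$ when $\lambda=1$, the sign of $a$ is governed solely by $\frac{\alpha-1}{\alpha-2}$, so that $a>0$ holds \emph{if and only if} $\alpha\in(-\infty,1)\cup(2,\infty)$. This condition is exactly what guarantees both the convergence of the integral near $v=0$ and that $z\mapsto\Gamma(a,z)$ is a strictly decreasing bijection from $(0,\infty)$ onto $(0,\Gamma(a))$, whence its inverse $\Gamma^{-1}(a,\cdot)$ is well defined and smooth on the relevant range. Inverting then produces $x(s)^{p^*}=\Gamma^{-1}\!\bigl(a,(s+s_0)/K\bigr)$, and substituting $x(s)$ into the defining formula $\up{\alpha}[g_{p,1}](s)=|(\alpha-2)x(s)|^{1/(2-\alpha)}$ delivers the claimed expression in Eq.~\eqref{eq:upl1}. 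I expect the main obstacle to be not any single computation but the careful bookkeeping of the fractional exponents through the substitution, together with the justification that the incomplete-Gamma inverse is genuinely legitimate; once $a>0$ is secured and the constants $K,s_0$ are matched as above, the remaining algebra is routine and entirely parallel to Proposition~\ref{prop:upg}.
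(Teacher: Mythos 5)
Your proposal is correct and follows essentially the same route as the paper's proof: the same change of variable $s(x)=\int_x^{x_0}|(\alpha-2)v|^{1/(\alpha-2)}g_{p,1}(v)\,dv$, the same substitution $w=v^{p^*}$ collapsing the integrand to $w^{a-1}e^{-w}$ with $a=\frac{\alpha-1}{(\alpha-2)p^*}$, the same identification of $K$ and $s_0$, and the same inversion of the incomplete Gamma function. Your remark that $\alpha\in\Rset\setminus[1,2]$ is precisely the condition $a>0$ making $\Gamma(a,\cdot)$ a well-defined invertible map is a slightly more explicit version of the paper's own (brief) justification of the same restriction.
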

\begin{proof}
	Recalling that $g_{p,1}(x)=a_{p,1}e^{x^{p^*}}$, we obtain after some algebraic manipulations that
	\begin{equation}\label{eq:proofupl1}
	\begin{split}
	s(x)&=\int_x^{x_0}|(\alpha-2)v|^{\frac{1}{\alpha-2}}g_{p,1}(v)dv=a_{p,1}|\alpha-2|^{\frac{1}{\alpha-2}}\int_x^{x_0}v^{\frac{1}{\alpha-2}}e^{-v^{p^*}}dv\\
	&=\frac{a_{p,1}}{p^*}|\alpha-2|^{\frac{1}{\alpha-2}}\int_{x^{p^*}}^{x_0^{p^*}}w^{\frac{p-\alpha+1}{(\alpha-2)p}}e^{-w}dw\\
	&=\frac{a_{p,1}}{p^*}|\alpha-2|^{\frac{1}{\alpha-2}}\left[\Gamma\left(\frac{\alpha-1}{(\alpha-2)p^*},x^{p^*}\right)-\Gamma\left(\frac{\alpha-1}{(\alpha-2)p^*},x_0^{p^*}\right)\right],
	\end{split}
	\end{equation}
	where the notation involving the Gamma function is justified in the range where the Gamma function is defined, which corresponds to $\alpha\in\Rset\setminus[1,2]$, as assumed. Recalling the notation introduced in Eq. \eqref{eq:notupl1}, we further obtain from the latter identity that
	$$
	\Gamma\left(\frac{\alpha-1}{(\alpha-2)p^*},x^{p^*}\right)=\frac{s+s_0}{K},
	$$
	which, by inverting the incomplete Gamma function and taking powers $1/p^*$, readily leads to Eq. \eqref{eq:upl1}.
\end{proof}

\noindent \textbf{The range $\alpha\in[1,2)$ and $\lambda=1$.} In this case, the Gamma function is no longer well defined and thus we cannot introduce the notation Eq. \eqref{eq:notupl1} and express the result as in Eq. \eqref{eq:upl1}. However, the calculations leading to Eq. \eqref{eq:proofupl1} are still correct and we can proceed by inverting $s(x)$, but without reaching a closed expression in terms of Gamma functions.

\smallskip

\noindent \textbf{The limiting case $\alpha=2$ and $\lambda=1$.} In this case, we obtain
\begin{equation}
\up{2}[g_{p,\lambda}](s)=e^{-x(s)},\quad s(x)=a_{p,\lambda}\int_x^{x_0} e^{v-v^{p^*}}dv,
\end{equation}
which allows us to define the up transformation, but it apparently cannot be expressed in terms of well-established special functions, as we did in Proposition \ref{prop:upl1}.

\section{Mirrored moment-entropy and Stam like inequalities}

Once introduced the up and down transformations and explored their basic properties related to the supports and monotonicity of the transformed functions, we study in this section the effect of their application in connection with the Stam inequality, the moment-entropy inequality and their minimizers. We thus establish new informational inequalities, holding true in ranges of exponents that are not taken into account in the corresponding classical inequalities and whose minimizers are quite different with respect to the functions $g_{p,\lambda}$ defined in Eq. \eqref{def:g_plambda}. In order to simplify the statements of the forthcoming theorems, we introduce the following notation.

\begin{definition}\label{def:gpbl}
Given $p,\beta,\lambda$ such that
\begin{equation}\label{eq:cond_mirrored_gpbl}
\sign(\lambda-1-p\beta)=\sign(\lambda-1)=\sign (\beta)\neq 0,\quad {\rm if} \ p<1
\end{equation}
and
\begin{equation}\label{eq:cond_mirrored_gpbl2}
\sign\left(\frac{1-\lambda}{p}+\beta\right)=\sign\left(1- \lambda+\beta\right)\neq0,\quad {\rm if} \ p\geqslant1,
\end{equation}
we define the following probability densities
\begin{equation}\label{def:tilde_gpbl}
\widetilde g_{p,\beta,\lambda} =\left\{\begin{array}{ll}\overline  g_{p,\beta,\lambda}=\up{2-\beta}[g_{\frac{\lambda-1}{\lambda-1-\beta},1-p}], &  p<1,
\\[2mm]
g_{p,\beta,\lambda}, &  p\geqslant 1.\end{array}\right.
\end{equation}
\end{definition}
\begin{remark}
Note that the conditions $p^*>0$ and $\lambda>1-p^*$ needed for the integrability of $g_{p,\lambda}$ turn into Eq.~\eqref{eq:cond_mirrored_gpbl} by identifying $p=\frac{\lambda-1}{\lambda-\beta-1}$ and $\lambda=1-p$. The condition~\eqref{eq:cond_mirrored_gpbl2} appears in the definition of $g_{p,\beta,\lambda}$ (see~\cite[Section 4.1]{Puertas25}).
\end{remark}

\subsection{Extended Stam inequalities}

The goal of this section is to state and prove a generalization of the Stam inequality Eq. \eqref{ineq:bip_Stam}. This generalization acts in two directions: a first extension, still with $p\geq1$, involves three parameters, while in a second step we also extend the range of the parameters of the first generalized inequality to include exponents $p<1$.
\begin{theorem}[Generalized Stam inequalities in an extended domain]\label{th:gener_Stam}
Let $p\geqslant 1,$ and $\beta$ be such that
\begin{equation}\label{eq:sign_cond1}
\sign\left(p^*\beta + \lambda-1\right)=\sign\left(\beta+1-\lambda\right)\neq0.
\end{equation}
Then, the following generalized Stam inequality holds true for $f: \Rset\mapsto\Rset^+$ absolutely continuous if $1+\beta-\lambda>0$ or for $f:(x_i,x_f)\mapsto \Rset^+$ absolutely continuous on $(x_i,x_f)$ if $1+\beta-\lambda<0$:
\begin{equation}\label{ineq:trip_Stam_extended}
	\left(\phi_{p,\beta}[\pdf] \, N_{\lambda}[\pdf] \right)^{\theta(\beta,\lambda)}\: \geqslant \:	\left(\phi_{p,\beta}[\widetilde g_{p,\beta,\lambda}] \, N_{\lambda}[\widetilde g_{p,\beta,\lambda}] \right)^{\theta(\beta,\lambda)}\equiv \kappa^{(1)}_{p,\beta,\lambda},
\end{equation}
where $\theta(\beta,\lambda)=1+\beta-\lambda$ and $\widetilde g_{p,\beta,\lambda}$ is defined in Eq. \eqref{def:tilde_gpbl}. Moreover, for $p<1$ the inequality ~\eqref{ineq:trip_Stam_extended} still holds true with $\theta(\beta,\lambda)=-\beta$, provided that $f:\Rset \mapsto\Rset^+$ is continuously differentiable with $f'<0$ and the domain of its exponents is limited by the conditions
\begin{equation}\label{eq:sign_cond2}
\sign(p^*\beta+\lambda-1)=\sign(\beta+\lambda-1)\neq0,\qquad \sign(\lambda-1)=\sign(\beta)\neq0.
\end{equation}

The optimal lower bound in the generalized Stam inequality Eq. \eqref{ineq:trip_Stam_extended} is given by
	\begin{equation}\label{eq:opt_const_gener_Stam}
	\kappa^{(1)}_{p,\beta,\lambda}=\left\{\begin{array}{ll}
	\bigg(K^{(1)}_{p,\beta,\lambda}\bigg)^{1+\beta-\lambda}
	=|1+\beta-\lambda|^{-\frac{1+\beta-\lambda}{\beta}}K^{(1)}_{p,\frac{\beta}{1+\beta-\lambda}}, & {\rm for} \ p\geqslant1,\\[5mm] |\beta|K^{(0)}_{\frac{\lambda-1}{\lambda-1-\beta},1-p}, & \rm{for} \ p<1,\end{array}\right.
	\end{equation}
	where $K^{(0)}_{p,\lambda},$ $K^{(1)}_{p,\lambda}$  and $K^{(1)}_{p,\beta,\lambda}$ are the optimal constants of the classical inequalities in Eqs. \eqref{ineq:bip_E-M}, \eqref{ineq:bip_Stam} and Eq.  \eqref{ineq:trip_Stam}.

%In the mirrored domain, when $p<1,$ the conditions turns to $\sign(\lambda-1)=\sign(\beta)\neq0$  \Ver{(($\beta\neq0$ viene de $\alpha\neq 2.$ Ver por separado $\alpha=2$))}
	%\begin{equation}
	%f
	%\end{equation}
	\end{theorem}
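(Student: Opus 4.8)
The plan is to split the proof into the two regimes $p\geqslant 1$ and $p<1$ singled out in the optimizer Eq.~\eqref{def:tilde_gpbl} and the constant Eq.~\eqref{eq:opt_const_gener_Stam}. For $p\geqslant 1$ the inequality will be obtained by raising the classical three-parameter Stam inequality Eq.~\eqref{ineq:trip_Stam} to the power $\theta(\beta,\lambda)=1+\beta-\lambda$, whereas for $p<1$ the down transformation $\down{2-\beta}$, combined with Lemma~\ref{lem:MEF} and the inversion Proposition~\ref{prop:inv}, will convert the Fisher--entropy product into a moment-to-entropy ratio to which the moment-entropy inequality Eq.~\eqref{ineq:bip_E-M} applies.

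I would treat the regime $p<1$ first, since it is the one where the transformations act decisively. Writing $g=\down{2-\beta}[f]$ (so that $f=\up{2-\beta}[g]$ by Proposition~\ref{prop:inv}) and specializing Lemma~\ref{lem:MEF} to $\alpha=2-\beta$ (hence $2-\alpha=\beta$), the right-hand equalities in Eqs.~\eqref{eq:ME} and~\eqref{eq:EF} give
\begin{equation*}
\phi_{p,\beta}[f]=\left(N_{1-p}[g]\right)^{\frac1\beta},\qquad N_\lambda[f]=\left(|\beta|\,\sigma_{\frac{\lambda-1}{\beta}}[g]\right)^{-\frac1\beta}.
\end{equation*}
Multiplying these and raising to the power $\theta(\beta,\lambda)=-\beta$ collapses the product to
\begin{equation*}
\left(\phi_{p,\beta}[f]\,N_\lambda[f]\right)^{-\beta}=|\beta|\,\frac{\sigma_{\frac{\lambda-1}{\beta}}[g]}{N_{1-p}[g]}.
\end{equation*}
At this point the moment-entropy inequality Eq.~\eqref{ineq:bip_E-M}, applied to $g$ with moment order $(p')^*=\frac{\lambda-1}{\beta}$ and entropic index $1-p$ (so that $p'=\frac{\lambda-1}{\lambda-1-\beta}$), yields $\sigma_{(\lambda-1)/\beta}[g]/N_{1-p}[g]\geqslant K^{(0)}_{p',1-p}$; multiplying by $|\beta|$ turns the previous display into $\left(\phi_{p,\beta}[f]\,N_\lambda[f]\right)^{-\beta}\geqslant|\beta|\,K^{(0)}_{\frac{\lambda-1}{\lambda-1-\beta},1-p}=\kappa^{(1)}_{p,\beta,\lambda}$, which is the assertion for $p<1$. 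Equality is attained precisely when $g$ is the moment-entropy optimizer $g_{p',1-p}$, i.e. when $f=\up{2-\beta}[g_{(\lambda-1)/(\lambda-1-\beta),\,1-p}]=\overline g_{p,\beta,\lambda}$, matching Eq.~\eqref{def:tilde_gpbl}.

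For the regime $p\geqslant 1$ the plan is to invoke the three-parameter Stam inequality Eq.~\eqref{ineq:trip_Stam} and raise both sides to $\theta(\beta,\lambda)=1+\beta-\lambda$; the constant relation already recorded in Eq.~\eqref{ineq:trip_Stam} shows that $(K^{(1)}_{p,\beta,\lambda})^{1+\beta-\lambda}$ coincides with the top branch of Eq.~\eqref{eq:opt_const_gener_Stam}, the optimizer being $g_{p,\beta,\lambda}=\widetilde g_{p,\beta,\lambda}$. When $1+\beta-\lambda>0$ this raising preserves the direction of the inequality, so that sub-case is immediate. The delicate point, which I expect to be the main obstacle, is the mirrored sub-case $1+\beta-\lambda<0$ allowed by Eq.~\eqref{eq:sign_cond1}: there $\theta<0$, so to keep the direction ``$\geqslant$'' in Eq.~\eqref{ineq:trip_Stam_extended} one must first establish the reversed (upper-bound) form $\phi_{p,\beta}[f]\,N_\lambda[f]\leqslant K^{(1)}_{p,\beta,\lambda}$, with $g_{p,\beta,\lambda}$ now acting as a maximizer, on the extended, compactly supported domain cut out by Eq.~\eqref{eq:sign_cond1}.

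Finally, throughout both regimes there is a bookkeeping task that must be carried out carefully: translating the sign conditions Eqs.~\eqref{eq:sign_cond1}--\eqref{eq:sign_cond2} into the precise admissibility ranges of the classical inequalities invoked. Concretely, for $p<1$ one has to verify that $(p')^*=\frac{\lambda-1}{\beta}\geqslant 0$ and $1-p>\frac{1}{1+(p')^*}$ --- which is exactly where $\sign(\lambda-1)=\sign(\beta)$ and $\sign(p^*\beta+\lambda-1)=\sign(\beta+\lambda-1)$ from Eq.~\eqref{eq:sign_cond2} enter --- so that Eq.~\eqref{ineq:bip_E-M} and the integrability of $g_{p',1-p}$ are guaranteed; for $p\geqslant 1$ one has to check that Eq.~\eqref{eq:sign_cond1} forces $\frac{\beta}{1+\beta-\lambda}>\frac{1}{1+p^*}$, so that the constant $K^{(1)}_{p,\beta/(1+\beta-\lambda)}$ and the optimizer $g_{p,\beta,\lambda}$ are well defined. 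These are routine sign manipulations, but they are what pins down the exact domain of validity asserted in the statement.
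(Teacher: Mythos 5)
Your treatment of the case $p<1$ is correct and is essentially the paper's own argument written in the inverse direction: the paper applies the moment-entropy inequality \eqref{ineq:bip_E-M} to $\down{\alpha}[f]$ and translates via Lemma \ref{lem:MEF}, whereas you set $g=\down{2-\beta}[f]$ and translate the up-transform identities; the resulting identification of the constant $|\beta|K^{(0)}_{\frac{\lambda-1}{\lambda-1-\beta},1-p}$ and of the minimizer $\up{2-\beta}[g_{\frac{\lambda-1}{\lambda-1-\beta},1-p}]$ matches \eqref{eq:opt_const_gener_Stam} and \eqref{def:tilde_gpbl}. The sub-case $p\geqslant1$ with $1+\beta-\lambda>0$ is also fine, since there \eqref{eq:sign_cond1} forces $\beta>0$ and $\lambda>1-\beta p^*$, so raising \eqref{ineq:trip_Stam} to the positive power $\theta$ is legitimate.

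However, there is a genuine gap in the sub-case $p\geqslant1$ with $1+\beta-\lambda<0$, which the theorem explicitly covers (this is the branch with $f:(x_i,x_f)\mapsto\Rset^+$). You correctly observe that for $\theta<0$ the claim is equivalent to the reversed bound $\phi_{p,\beta}[f]\,N_{\lambda}[f]\leqslant K^{(1)}_{p,\beta,\lambda}$, but you only name this as ``the main obstacle'' and do not prove it; raising \eqref{ineq:trip_Stam} to a negative power cannot produce it. The paper closes this gap by not using \eqref{ineq:trip_Stam} at all: it applies the two-parameter Stam inequality \eqref{ineq:bip_Stam} to the differential-escort transform $\mathfrak{E}_{\alpha^{-1}}[f]$ and uses the identity
\begin{equation*}
N_\lambda[\pdf]\,\phi_{p,\lambda}[\pdf]
=|\alpha|^{\frac1{\lambda}}\left(N_{1+\alpha(\lambda-1)}[\mathfrak{E}_{\alpha^{-1}}[\pdf]]\;\phi_{p,\alpha\lambda}[\mathfrak{E}_{\alpha^{-1}}[\pdf]]\right)^{\alpha}\geqslant K^{(1)}_{p,\lambda},
\end{equation*}
so that, after the substitution $\overline\lambda=1+\alpha(\lambda-1)$, $\overline\beta=\alpha\lambda$, $\alpha=1+\overline\beta-\overline\lambda$, the product already appears raised to the power $\theta=\alpha$ with the inequality pointing in the correct direction for \emph{either} sign of $\alpha$; for $\alpha<0$ the transformed densities are exactly those divergent at the edges of a bounded support, which is why the hypothesis on $f$ changes in that branch. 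Without this (or some other proof of the reversed upper bound with $g_{p,\beta,\lambda}$ as maximizer), your argument does not establish the full range of \eqref{eq:sign_cond1}.
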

\begin{proof}
Assume first $p\geqslant1$. As discussed in the Introduction, given $\lambda$ such that
\begin{equation}\label{eq:init_cond}
\lambda>\frac1{1+p^*},
\end{equation}
and any absolutely continuous probability density function $\pdf$, the two-parameter Stam inequality Eq. \eqref{ineq:bip_Stam} is fulfilled, see for example \cite{Lutwak05,Bercher12}. Recalling the differential-escort transformations \cite{Zozor17,Puertas19} and applying Eq. \eqref{ineq:bip_Stam} to a differential-escort transformed density, we obtain, for any $\alpha\neq0$, that
\begin{equation}\label{eq:StamEscort}
\begin{split}
N_\lambda[\pdf]\phi_{p,\lambda}[\pdf]&=N_\lambda[\mathfrak E_\alpha\,\mathfrak E_{\alpha^{-1}}[\pdf]]\phi_{p,\lambda}[\mathfrak E_\alpha\mathfrak E_{\alpha^{-1}}[\pdf]]\\
&=|\alpha|^\frac1{\lambda}\left(N_{1+\alpha(\lambda-1)}[\mathfrak E_{\alpha^{-1}}[f]]\phi_{p,\alpha\lambda}[\mathfrak E_{\alpha^{-1}}[f]]\right)^\alpha\geqslant K^{(1)}_{p,\lambda}.
\end{split}
\end{equation}
Introducing the following notation
\begin{equation}\label{eq:not_bar}
\overline \lambda=1+\alpha(\lambda-1),\quad \overline\beta=\alpha\lambda,
\end{equation}
which is equivalent to
$$
\alpha=1+\overline\beta-\overline\lambda\neq0, \quad \lambda=\frac{\overline\beta}{1+\overline\beta-\overline\lambda},
$$
we infer from Eq. \eqref{eq:StamEscort} that

\begin{equation}\label{ineq:gener_Stam(1)}
\left(N_{\overline\lambda}[\overline\pdf]\phi_{p,\overline{\beta}}[\overline\pdf]\right)^{1+\overline\beta-\overline\lambda} \geqslant|\alpha|^{-\frac{1}{\lambda}}K^{(1)}_{p,\lambda}
=|1+\overline\beta-\overline\lambda|^{\frac{\overline\lambda-\overline\beta-1}{\overline\beta}}K^{(1)}_{p,\frac{\overline\beta}{1+\overline\beta-\overline\lambda}},
\end{equation}
where $\overline f=\mathfrak E_{\alpha^{-1}}[f]=\mathfrak E_{\frac{1}{1+\overline \beta-\overline{\lambda}}}[f].$
Note that for $\alpha>0$, $\overline f$ remains absolutely continuous, while for $\alpha<0$ the density $\overline f$ is divergent in the borders.

In order to remain in the framework of the inequality Eq. \eqref{ineq:bip_Stam}, we have to check the condition~\eqref{eq:init_cond}. In particular, the second of these conditions is fulfilled if
\begin{equation*}
\frac{\overline\beta}{1+\overline\beta-\overline\lambda}-\frac1{1+p^*}>0,
\end{equation*}
that is,
\begin{equation*}
\frac{\overline\beta p^*-1+\overline\lambda}{1+\overline\beta-\overline\lambda}>0,
\end{equation*}
which is equivalent to Eq.~\eqref{eq:sign_cond1}.

Let us now consider the mirrored case $p<1$. We infer from the generalized moment-entropy inequality Eq. \eqref{ineq:bip_E-M}, which we apply to a density obtained via a down transformation with $\alpha\in\Rset\setminus\{2\}$, the following inequality
\begin{equation}\label{eq:EMdown}
\frac{\sigma_{p^*}[\down{\alpha}[\pdf]]}{N_\lambda[\down{\alpha}[\pdf]]}\geqslant K^{(0)}_{p,\lambda},\quad p^*>0,\quad \lambda>\frac{1}{1+p^*}, \quad \lambda\neq1.
\end{equation}
Combining Eq. \eqref{eq:EMdown} with the equalities in Lemma \ref{lem:MEF}, we obtain the following generalized Stam inequality:
\begin{equation}\label{ineq:gener_Stam}
\left(N_{1+(2-\alpha)p^*}[\pdf]\phi_{1-\lambda,2-\alpha}[\pdf]\right)^{\alpha-2}\geqslant |2-\alpha|K^{(0)}_{p,\lambda}.
\end{equation}
Introducing the notation
\begin{equation}\label{eq:not_tilde}
\widetilde{\lambda}=1+(2-\alpha)p^*\neq 1,\quad \widetilde p=1-\lambda,\quad \widetilde\beta=2-\alpha\neq 0,
\end{equation}
the inequality Eq. \eqref{ineq:gener_Stam} writes in the equivalent form
\begin{equation}\label{ineq:gener_Stam(2)}
\left(N_{\widetilde \lambda}[\pdf]\phi_{\widetilde{p},\widetilde \beta}[\pdf]\right)^{-\widetilde\beta}\geqslant |\widetilde\beta|K^{(0)}_{p,\lambda}=|\widetilde\beta|K^{(0)}_{\frac{\widetilde \lambda-1}{\widetilde \lambda-1-\widetilde \beta},1-\widetilde p}.
\end{equation}
%
%where the minimizing densities are given by
%\begin{equation*}
%\widetilde g_{\widetilde p,\widetilde \beta,\widetilde \lambda}=\up{\alpha}[g_{p,\lambda}]
%\end{equation*}
%
To determine the domain of the parameters $(\widetilde{p}, \widetilde\lambda, \widetilde\beta)$, taking into account that $p^*>0$, we get
\[\sign(\widetilde\lambda-1)=\sign(2-\alpha)=\sign(\widetilde\beta)\neq0,\]
which is the second condition in Eq.~\eqref{eq:sign_cond2}. Moreover, the condition in Eq. \eqref{eq:init_cond} is equivalent, in the new notation Eq. \eqref{eq:not_tilde} (and taking into account that $p^*=\frac{\widetilde \lambda-1}{\widetilde \beta}$), to
\[\widetilde p<1-\frac{1}{1+p^*}=\frac{p^*}{1+p^*}=\frac{\widetilde \lambda-1}{\widetilde \beta+\widetilde \lambda-1}<1,\]
that is,
\[\frac{(\widetilde \lambda-1)(1-\widetilde p)-\widetilde p \widetilde \beta}{\widetilde \beta+\widetilde \lambda-1}>0.\]
By dividing by $1-\widetilde p>0$ in previous inequality, we are left with
\[\frac{\widetilde \lambda-1+\widetilde p^* \widetilde \beta}{\widetilde \beta+\widetilde \lambda-1}>0,\]
which is equivalent to the first condition in Eq.~\eqref{eq:sign_cond2}.

The inequalities~\eqref{eq:EMdown} and~\eqref{ineq:gener_Stam} reach their minimum value when $\down{\alpha}[f]=g_{p,\lambda},$  or equivalently, $f=\up{\alpha}[g_{p,\lambda}].$ Taking into account the notaton in Eq.~\eqref{eq:not_tilde}, one finds after simple algebraic manipulations that the minimizing densities of the inequality~\eqref{ineq:gener_Stam(2)} are given by the densities $\widetilde g_{p,\beta,\lambda}$ defined in Eq~\eqref{def:tilde_gpbl}. Finally, the expression of the optimal constant $K^{(1)}_{p,\beta,\lambda}$ in both ranges $p\geq1$ and $p<1$ follows readily from Eqs. \eqref{ineq:gener_Stam(1)}, respectively \eqref{ineq:gener_Stam(2)}.
\end{proof}
%%
%\begin{equation}
%\sign(\widetilde \lambda-1+\widetilde p^* \widetilde \beta)=\sign(\widetilde \beta+\widetilde \lambda-1)
%\end{equation}
%

%In fact, we can express Eq.~\eqref{eq:sign_cond2} as
%%
%\begin{equation}\label{eq:sign_cond3}
%\sign(\widetilde \lambda-1+\widetilde p^* \widetilde \beta)=\sign(\beta)
%\end{equation}
%
\begin{remark}
In order to deduce a similar inequality in the case $\alpha=2$, one needs $\beta=0$ (see Eq.~\eqref{eq:EF}), which is not allowed in the definition of $\phi_{p,\beta}$ (see \cite{Lutwak05}). However, for $p\geqslant1$ and $\lambda>\frac{1}{1+p^*}$, we apply Eqs. \eqref{eq:Sp_down2} and \eqref{eq:Ren_down2} to deduce that
\begin{equation}\label{ineq:a2}
\frac{\sigma_{p^*}[\down2[f]]}{N_{\lambda}[\down2[f]]}=
\overline{S}_{p^*}[f]F_{1-\lambda,0}[f]^{\frac{1}{\lambda-1}}\geqslant K^{(0)}_{p,\lambda}.
\end{equation}
\end{remark}

\begin{remark}
We observe that, if we restrict ourselves to $p<0$, which gives $p^*\in(0,1)$, then Eq. \eqref{eq:not_tilde} gives
\[0<p^*=\frac{\widetilde \lambda-1}{\widetilde \beta}-1=\frac{\widetilde \lambda-1-\widetilde \beta}{\widetilde \beta},\]
that is,
\[\sign(\widetilde \beta)=\sign(\widetilde \lambda-1-\widetilde \beta).\]
Then, the condition Eq. \eqref{eq:sign_cond2} rewrites exactly as the opposite of Eq.~\eqref{eq:sign_cond1}, more precisely
\[\sign(\widetilde \lambda-1+\widetilde p^* \widetilde \beta)=\sign(\widetilde \lambda-1-\widetilde \beta).\]
\end{remark}

\begin{remark}
If we take $\alpha=2+\frac{1}{p^*-1}=p+1$, we infer from Eq. \eqref{ineq:gener_Stam(2)} that
\begin{equation}\label{eq:genStam_part}
|p^*-1|\left(N_{\frac{1}{1-p^*}}[\pdf]\phi_{1-\lambda,\frac{1}{1-p^*}}[\pdf]\right)^{\frac{1}{p^*-1}}\geqslant K^{(0)}_{p,\lambda}.
\end{equation}
The minimizing densities of the inequality Eq. \eqref{eq:genStam_part} are then given by $g_{1-\lambda,1-p},$ hence we see that
\[\up{1+p}[g_{p,\lambda}]=g_{1-\lambda,1-p}.\]
A remarkable feature of the previous equality is that, if we apply the up transformation $\up{2-\lambda}$ to $g_{1-\lambda,1-p}$, we recover the original stretched Gaussian density $g_{p,\lambda}.$
\end{remark}

We end this section by introducing a new sharp inequality dealing with Rényi and Shannon entropies, which is obtained through a similar approach, but applied to the special case $\lambda=1$. In this case, we have:
\begin{proposition}\label{prop:Shannon}
	Let $\alpha\in\Rset\setminus\{2\}$, $p\geqslant 1$ and $f$ be a derivable and decreasing probability density function. Then we have
	\begin{equation}\label{ineq:newLMC}
	N_{1+(2-\alpha)p}^{\alpha-2}[f]e^{-\alpha S[f]}\ge \mathfrak K e^{\left\langle\log\left|f'\right|\right\rangle},\qquad \mathfrak K=|2-\alpha| K^{(0)}_{p,\lambda}.
	\end{equation}
Moreover, the equality only holds when $f=\up{\alpha}[g_{p,1}],$ given in Proposition~\ref{prop:upl1}.
\end{proposition}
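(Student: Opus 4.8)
The plan is to specialize the argument used for the mirrored range $p<1$ in Theorem~\ref{th:gener_Stam} to the degenerate value $\lambda=1$. The only structural change is that, since $N_1=N=e^{S}$, the Rényi factor produced by the down transformation is now a Shannon power; consequently the Fisher identity Eq.~\eqref{eq:EF}, which in any case is unavailable here because $\phi_{p,\beta}$ is not defined at the degenerate parameter, must be replaced by the Shannon identity Eq.~\eqref{eq:SUD}. This is precisely why the case $\lambda=1$ deserves a separate statement.

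Concretely, I would apply the moment-entropy inequality Eq.~\eqref{ineq:bip_E-M} with $\lambda=1$ to the down-transformed density $g=\down{\alpha}[f]$. This is legitimate: $f$ is decreasing, so $\adown$ is a genuine probability density (see the Remark following Definition~\ref{def:down}), and the admissibility condition $\lambda=1>\tfrac{1}{1+p^{*}}$ holds automatically since $p\geqslant1$ forces $p^{*}>0$ (with $p=1$ treated as the limiting case $p^{*}\to\infty$). The inequality then reads $\sigma_{p^{*}}[\adown]\geqslant K^{(0)}_{p,1}\,N[\adown]$.

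Next I would substitute the two identities of Lemma~\ref{lem:MEF}, valid because $\alpha\neq2$ so that the canonical change of variable Eq.~\eqref{eq:can_change} applies: Eq.~\eqref{eq:ME} with moment index $p^{*}$ gives $\sigma_{p^{*}}[\adown]=|2-\alpha|^{-1}N_{1+(2-\alpha)p^{*}}^{\alpha-2}[f]$, while Eq.~\eqref{eq:SUD} gives $N[\adown]=e^{S[\adown]}=\exp\bigl(\alpha S[f]+\langle\log|f'|\rangle\bigr)$. Inserting these, multiplying through by $|2-\alpha|>0$ and dividing by the positive factor $e^{\alpha S[f]}$, one is led to $N_{1+(2-\alpha)p^{*}}^{\alpha-2}[f]\,e^{-\alpha S[f]}\geqslant|2-\alpha|\,K^{(0)}_{p,1}\,e^{\langle\log|f'|\rangle}$, which is Eq.~\eqref{ineq:newLMC} with $\mathfrak K=|2-\alpha|K^{(0)}_{p,1}$; the Rényi index emerging from the moment-entropy inequality is the Hölder conjugate $1+(2-\alpha)p^{*}$ of the one recorded in the statement. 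Since every manipulation multiplies or divides by a strictly positive quantity, the direction of the inequality is preserved throughout, and no case analysis on the sign of $\alpha-2$ is needed.

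For the equality case I would recall that the unique minimizer of Eq.~\eqref{ineq:bip_E-M} is the stretched Gaussian $g_{p,1}$, so that equality in Eq.~\eqref{ineq:newLMC} forces $\down{\alpha}[f]=g_{p,1}$; applying the inversion Proposition~\ref{prop:inv} then yields $f=\up{\alpha}[g_{p,1}]$, whose explicit expression is the one computed in Proposition~\ref{prop:upl1}. I expect the only genuinely delicate point to be the bookkeeping of the hypotheses (decreasing $f$, $\alpha\neq2$, and $p^{*}>0$) together with the conceptual observation that the Shannon reformulation is \emph{forced} here, precisely because the $(p,\beta)$-Fisher information degenerates at $\lambda=1$; once these are in place, the statement follows by direct substitution into Lemma~\ref{lem:MEF}.
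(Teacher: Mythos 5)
Your proposal is correct and follows essentially the same route as the paper: apply the moment-entropy inequality at $\lambda=1$ to $\down{\alpha}[f]$ and substitute the identities of Lemma~\ref{lem:MEF}, with the equality case handled by $\down{\alpha}[f]=g_{p,1}$ and Proposition~\ref{prop:inv}. Your remark that the Rényi index produced this way is $1+(2-\alpha)p^{*}$ rather than the $1+(2-\alpha)p$ written in the statement (and that the constant should read $K^{(0)}_{p,1}$) correctly identifies what appear to be notational slips in the proposition itself, not a flaw in your argument.
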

\begin{proof}
For $p\geqslant1$, we start from
$$
\frac{\sigma_{p^*}[\adown]}{e^{S[\adown]}}\ge K^{(0)}_{p,1}
$$
and the inequality \eqref{ineq:newLMC} follows directly by substitution from Lemma \ref{lem:MEF}. The equality in the latter inequality only holds true when $\adown=g_{p,1}.$ We leave the easy details to the reader.
\end{proof}

Note that we do not obtain a Stam-like inequality, since the Fisher information is not present, but we instead arrive to a different functional involving the derivative of the density. Actually, Eq. \eqref{ineq:newLMC} reminds of the inequality
$$
N_\beta[f]e^{-S[f]}\ge1, \quad \beta<1,
$$
optimized by the uniform density, but Eq.~\eqref{ineq:newLMC} is only valid for derivable probabilities with $f'<0$.

\subsection{Mirrored domain for the moment-entropy inequality}

We recall here that, when $p^*>0$ and $\lambda>\frac1{1+p^*}$, the moment-entropy inequality~\eqref{ineq:bip_E-M} is fulfilled, according to \cite{Lutwak05}. Our next theorem extends the range of this inequality to negative values of $\lambda$, with the price of taking a power and provided that the involved quantities are well defined.
\begin{theorem}[Entropy-momentum inequalities in an extended domain]\label{th:gener_EM}
Let $\lambda<0$ and $p$ be such that
\begin{equation}\label{eqtheo:conds_EM}
\sign\left(\frac{\lambda-1}{ \lambda}+ p^*\right)=\sign\left(1-p^*\right).
\end{equation}
Then, for any continuously differentiable density function $f$, the following mirrored moment-entropy inequality holds true:
\begin{equation}\label{ineq:bip_E-M_mirrored}
\left(\frac{\sigma_{p^*}[\pdf]}{|p^*-1|N_{\lambda}[\pdf]}\right)^{p^*-1}\geqslant\left(\frac{\sigma_{p^*}[\overline g_{p,\lambda}]}{|p^*-1|N_{\lambda}[\overline g_{p,\lambda}]}\right)^{p^*-1}\equiv  \kappa^{(0)}_{p,\lambda},
\end{equation}
where $\overline g_{p,\lambda}=g_{1-\lambda,1-p}$ and $\kappa^{(0)}_{p,\lambda}= K^{(1)}_{1-\lambda,\frac 1{1-p}}.$
\end{theorem}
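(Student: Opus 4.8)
The plan is to read \eqref{ineq:bip_E-M_mirrored} as the precise dual of the $p<1$ branch of Theorem~\ref{th:gener_Stam}. There, the classical moment-entropy inequality is pushed through a \emph{down} transformation to manufacture a Stam-type bound; here I would instead push the classical biparametric Stam inequality \eqref{ineq:bip_Stam} through an \emph{up} transformation to manufacture a moment-entropy bound for $f$. Thus I start from $\phi_{P,\Lambda}[g]\,N_{\Lambda}[g]\geqslant K^{(1)}_{P,\Lambda}$ and apply it to $g=\up{\alpha}[f]$, transporting both functionals back onto $f$ by Lemma~\ref{lem:MEF}.

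The mechanism is that the two up-transform identities of Lemma~\ref{lem:MEF} can be invoked simultaneously only when the Fisher index equals the Rényi index. Indeed \eqref{eq:EF} gives $\phi_{P,\beta}[\up{2-\beta}[f]]=(N_{1-P}[f])^{1/\beta}$ and \eqref{eq:ME} gives $N_{\Lambda}[\up{\alpha}[f]]=(|2-\alpha|\,\sigma_{(\Lambda-1)/(2-\alpha)}[f])^{1/(\alpha-2)}$, and these are compatible precisely for $\alpha=2-\Lambda$. With that single choice the Stam product collapses onto $f$ and reads
\[
\left(\frac{N_{1-P}[f]}{|\Lambda|\,\sigma_{(\Lambda-1)/\Lambda}[f]}\right)^{1/\Lambda}\geqslant K^{(1)}_{P,\Lambda}.
\]
I would then put $P=1-\lambda$, so that $N_{1-P}=N_{\lambda}$, and fix $\Lambda$ by demanding $(\Lambda-1)/\Lambda=p^*$, i.e.\ $\Lambda=1/(1-p^*)$. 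The conjugate identities $1-p^*=1/(1-p)$ and $p^*-1=1/(p-1)$ --- equivalently $1/\Lambda=-(p^*-1)$ and $|\Lambda|=1/|p^*-1|$ --- then rewrite the left-hand side as exactly $\big(\sigma_{p^*}[f]/(|p^*-1|\,N_{\lambda}[f])\big)^{p^*-1}$, which is \eqref{ineq:bip_E-M_mirrored}.

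The constant and the minimizer I would read off from the equality case. Equality in \eqref{ineq:bip_Stam} forces $\up{\alpha}[f]=g_{P,\Lambda}$, hence $f=\down{\alpha}[g_{P,\Lambda}]$ with $\alpha=2-\Lambda$; Corollary~\ref{cor:down_gpl} evaluates this preimage explicitly as a stretched Gaussian, thereby furnishing both the minimizer $\overline g_{p,\lambda}$ and the identification of the optimal bound $\kappa^{(0)}_{p,\lambda}$ with the underlying Stam constant $K^{(1)}_{P,\Lambda}$.

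The step I expect to be the real obstacle is not the algebra but pinning down the admissible range, i.e.\ verifying that $(P,\Lambda)=(1-\lambda,\,1/(1-p^*))$ lies in the domain of \eqref{ineq:bip_Stam}. The condition $P\geqslant 1$ is automatic from $\lambda<0$, so everything reduces to $\Lambda>1/(1+P^*)$ and its equivalence with \eqref{eqtheo:conds_EM}. This is a sign computation: with $P^*=(\lambda-1)/\lambda$ one has $1+P^*=(2\lambda-1)/\lambda>0$ for $\lambda<0$, and since $\Lambda(1-p^*)=1$ the inequality $\Lambda>1/(1+P^*)$ is equivalent to $\sign\big(\Lambda(p^*+P^*)\big)=\sign(1+P^*)$, which, using $\sign(\Lambda)=\sign(1-p^*)$, collapses to $\sign\big((\lambda-1)/\lambda+p^*\big)=\sign(1-p^*)$, precisely \eqref{eqtheo:conds_EM}. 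A final routine check is that $\up{\alpha}[f]$ is an admissible (absolutely continuous, possibly unbounded) density for \eqref{ineq:bip_Stam} and that all transformed functionals are finite, so that the Lemma~\ref{lem:MEF} equalities are legitimate; the excluded indices $\alpha=2$ (that is $\Lambda=0$) and $\lambda=1$ are ruled out just as in Lemma~\ref{lem:MEF}, consistently with the hypothesis $\lambda<0$.
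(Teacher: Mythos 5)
Your proposal is correct and follows essentially the same route as the paper: both proofs apply a Stam inequality to $\up{\alpha}[f]$ with the single compatible choice $\alpha=2-\Lambda$ (equivalently $\beta=\lambda$), transport the two functionals back onto $f$ via the up-transform identities \eqref{eq:ME} and \eqref{eq:EF} of Lemma~\ref{lem:MEF}, and reduce the admissibility condition to the sign statement \eqref{eqtheo:conds_EM} by exactly the computation you perform. The only (immaterial) difference is that the paper first passes through the triparametric extended Stam inequality of Theorem~\ref{th:gener_Stam} and then specializes $\beta=\lambda$, which collapses to your direct application of the biparametric inequality \eqref{ineq:bip_Stam} with $(P,\Lambda)=(1-\lambda,\,1/(1-p^*))$; your resulting constant $K^{(1)}_{1-\lambda,\,1-p}$ is what the paper's own derivation yields as well.
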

\begin{proof}
Let us recall, as a starting point for our next inequalities, that for any $\lambda\neq1$, $\alpha\neq2$ and $\beta\neq0$, Lemma \ref{lem:MEF} ensures that
$$
N_{\lambda}[\aup]=\left(|2-\alpha| \sigma_{\frac{\lambda-1}{2-\alpha}}[\pdf]\right)^{\frac{1}{\alpha-2}},\quad
\phi_{p,\beta}[f^{\uparrow}_{2-\beta}]=\left(N_{1-p}[\pdf]\right)^{\frac1{\beta}}.
$$
Particularizing the previous equalities for $\beta=2-\alpha\neq0$, we find
\begin{equation}\label{eqproof:45}
N_{\lambda}[\aup]\phi_{p,2-\alpha}[f^{\uparrow}_{\alpha}]
=\left(|2-\alpha|\sigma_{\frac{\lambda-1}{2-\alpha}}[\pdf]\right)^{\frac{1}{\alpha-2}}\left(N_{1-p}[\pdf]\right)^{\frac1{2-\alpha}}
=\left(\frac{|2-\alpha|\sigma_{\frac{\lambda-1}{2-\alpha}}[\pdf]}{N_{1-p}[\pdf]}\right)^{\frac{1}{\alpha-2}}.
\end{equation}
Let us introduce the notation
\begin{equation}\label{eqproof:99}
\widehat p^*=\frac{\lambda-1}{2-\alpha}=\frac{\lambda-1}{\beta},\quad\text{and}\quad \widehat\lambda=1-p.
\end{equation}
In this notation and recalling that $\beta=2-\alpha$, we deduce on the one hand from Eq.~\eqref{eq:sign_cond1} that
$$
\sign\left((2-\alpha)\left(p^*+\frac{\lambda-1}{2-\alpha}\right)\right)=\sign\left((2-\alpha)\left(1+\frac{1-\lambda}{2-\alpha}\right)\right),
$$
or equivalently,
$$
\sign(\widehat p^*+(1-\widehat\lambda)^*)=\sign(1-\widehat p^*),
$$
leading to Eq. \eqref{eqtheo:conds_EM}. On the other hand, starting from Eq.~\eqref{eqproof:45} and taking into account the generalized Stam inequality Eq.~\eqref{ineq:trip_Stam_extended}, we obtain that
\begin{equation}\label{eqproof:10}
\left(\frac{|\beta|\sigma_{\frac{\lambda-1}{\beta}}[\pdf]}{N_{1-p}[\pdf]}\right)^{-\frac{\theta(\beta,\lambda)}{\beta}}
\geqslant (K^{(1)}_{p,\beta,\lambda})^{1+\beta-\lambda}.
\end{equation}
Since, by Eq. \eqref{eqproof:99}, we have
$$
\widehat p^*-1=\frac{\lambda-1-\beta}{\beta}=-\frac{\theta(\beta,\lambda)}{\beta},
$$
we obtain from Eq.~\eqref{eqproof:10} that
	\begin{equation}\label{eqproof:100}
		\left(\frac{|\beta|\sigma_{\widehat p ^*}[\pdf]}{N_{\widehat \lambda}[\pdf]}\right)^{\widehat p ^*-1}
		\geqslant (K^{(1)}_{p,\beta,\lambda})^{1+\beta-\lambda}.
	\end{equation}
Observe that the quantity (see Eq.~\eqref{ineq:trip_Stam})
$$ \frac1{|\beta|}(K^{(1)}_{p,\beta,\lambda})^{\frac{1+\beta-\lambda}{\widehat p ^*-1}}=\frac1{|\beta|}(K^{(1)}_{p,\beta,\lambda})^{-\beta}
=\left|\frac{\beta+1-\lambda}{\beta}\right| \bigg(K^{(1)}_{p,\frac{\beta}{\beta+1-\lambda}}\bigg)^{\frac{-\beta}{\beta+1-\lambda}}
$$
only depends on $\widehat p^*$ and $\widehat\lambda$ . We next choose $\beta=\lambda$, hence $|\beta|=\frac{1}{|\widehat p^*-1|}$ and $K^{(1)}_{p,\beta,\lambda}=K^{(1)}_{p,\lambda}$ and then
Eq.~\eqref{ineq:bip_E-M_mirrored} follows readily from Eq. \eqref{eqproof:100}, together with its optimal constant. Indeed, the claimed expression of the optimal constant $\kappa^{(0)}_{p,\lambda}$ follows from the optimal constant given in Eq. \eqref{eq:opt_const_gener_Stam} together with the notation for $\theta(\beta,\lambda)$ in the statement of Theorem \ref{th:gener_Stam} and the notation in Eq. \eqref{eqproof:99}. Moreover, the minimizing densities $f_{min}$ of the inequality~\eqref{ineq:bip_E-M_mirrored} satisfy
\begin{equation}\label{eqproof:88}
(f_{min})^\uparrow_{2-\beta}=g_{p,\beta,\lambda}.
\end{equation}
This follows from the fact that, according to Eq.~\eqref{eqproof:45}, $(f_{min})^\uparrow_\alpha$ are the minimizing densities of the triparametric Stam inequality~\eqref{ineq:trip_Stam} with $\beta=2-\alpha$. Recalling the choice $\beta=\lambda$, one obtains from Proposition~\ref{prop:inv} and Corollary~\ref{cor:down_gpl} that
\begin{equation}
f_{min}=\down{2-\lambda}[g_{p,\lambda}]=g_{1-\lambda,1-p},
\end{equation}
completing the proof.
\end{proof}

%\begin{proposition}\label{prop:fmin}
%The minimizing densities $f_{min}=\widehat g_{\widehat p,\widehat \lambda}$ for the new inequality Eq. \eqref{ineq:bip_E-M_hat} can be expressed as
%\[\overline g_{p,\lambda} = \down{\alpha}[g_{p,2-\alpha,\lambda}].\]
%\end{proposition}
%\begin{proof}
%Starting from Eq.~\eqref{eqproof:88} the proof trivially follows.
%\end{proof}

\begin{corollary}
Given $p,\lambda,\beta$ and $\alpha$ real numbers, provided the densities are well-defined, we have
\[\down{\alpha}[g_{p,2-\alpha,\lambda}]=\down{\beta}[g_{p,2-\beta,\lambda}].\]
or equivalently
\[g_{p,\beta,\lambda}=(\up{2-\beta}\down{2-\alpha})[g_{p,\alpha,\lambda}]=(\up{2-\beta}\down{2-\lambda})[g_{p,\lambda}].\]
\end{corollary}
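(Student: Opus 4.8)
The plan is to reduce the whole statement to the single identity established inside the proof of Theorem~\ref{th:gener_EM}, namely that the common object
\begin{equation}\label{eq:cor_h}
h:=\down{2-\lambda}[g_{p,\lambda}]=g_{1-\lambda,1-p}
\end{equation}
(cf. Corollary~\ref{cor:down_gpl}) is lifted, through the up transformation, exactly onto the triparametric minimizers: for every admissible value of the middle parameter one has $\up{\alpha}[h]=g_{p,2-\alpha,\lambda}$, equivalently
\begin{equation}\label{eq:cor_lift}
g_{p,\beta,\lambda}=\up{2-\beta}[h].
\end{equation}
This is precisely Eq.~\eqref{eqproof:88} combined with the value of $f_{min}$ computed there, and it already delivers the last equality in the ``equivalently'' chain. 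So the first thing I would do is simply record Eqs.~\eqref{eq:cor_h}--\eqref{eq:cor_lift}; no new computation is required for them.

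The decisive step is then to apply the down transformation $\down{2-\beta}$ to Eq.~\eqref{eq:cor_lift} and invoke the inversion property $\down{2-\beta}\up{2-\beta}=\mathbb I$ from Proposition~\ref{prop:inv}, which gives
\begin{equation}\label{eq:cor_back}
\down{2-\beta}[g_{p,\beta,\lambda}]=h.
\end{equation}
Since the right-hand side of Eq.~\eqref{eq:cor_back} is $h$, it is manifestly independent of the middle parameter, and the first asserted equality follows by pure re-indexing: writing $\gamma=2-\alpha$ identifies $\down{\alpha}[g_{p,2-\alpha,\lambda}]$ with $\down{2-\gamma}[g_{p,\gamma,\lambda}]=h$, while the same reading with $2-\beta$ in place of $\gamma$ yields $\down{\beta}[g_{p,2-\beta,\lambda}]=h$; comparing the two expressions gives $\down{\alpha}[g_{p,2-\alpha,\lambda}]=\down{\beta}[g_{p,2-\beta,\lambda}]$.

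For the remaining equality $g_{p,\beta,\lambda}=(\up{2-\beta}\down{2-\alpha})[g_{p,\alpha,\lambda}]$ I would proceed identically: Eq.~\eqref{eq:cor_lift} read with $\alpha$ in place of $\beta$ gives $g_{p,\alpha,\lambda}=\up{2-\alpha}[h]$, so Proposition~\ref{prop:inv} yields $\down{2-\alpha}[g_{p,\alpha,\lambda}]=h$, and applying $\up{2-\beta}$ together with Eq.~\eqref{eq:cor_lift} returns $\up{2-\beta}\down{2-\alpha}[g_{p,\alpha,\lambda}]=\up{2-\beta}[h]=g_{p,\beta,\lambda}$. I do not expect a genuine obstacle here, since the whole content is the mutual invertibility of the up/down pair applied to the invariant density $h$; the only point demanding care is the bookkeeping of the shift between $\alpha$ and $2-\alpha$ in the middle slot, and checking, under the standing hypothesis ``provided the densities are well-defined'', that the monotonicity and support requirements of Definitions~\ref{def:down} and~\ref{def:up} are met so that each transformation and its inverse may be legitimately composed.
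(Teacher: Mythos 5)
Your proposal is correct and follows essentially the same route as the paper: the paper's own proof is exactly the one-line combination of Eq.~\eqref{eqproof:88} (which identifies $g_{p,\beta,\lambda}=\up{2-\beta}[g_{1-\lambda,1-p}]$ with $g_{1-\lambda,1-p}=\down{2-\lambda}[g_{p,\lambda}]$) with the inversion property of Proposition~\ref{prop:inv}, which you have simply written out with the re-indexing made explicit.
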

\begin{proof}
The proof follows from Eq. \eqref{eqproof:88} and Proposition~\ref{prop:inv}.	
\end{proof}

\begin{remark}
An analogous inequality to Eq. \eqref{ineq:a2} follows also in the mirrored domain by applying \eqref{ineq:bip_E-M_mirrored} to the density $\down2[f]$. More precisely, employing once more Eqs. \eqref{eq:Sp_down2} and \eqref{eq:Ren_down2}, we readily obtain that
$$
\left[\overline{S}_{p^*}[f]F_{1-\lambda,0}[f]^{\frac{1}{\lambda-1}}\right]^{p^*-1}\geqslant |p^*-1|^{p^*-1}\kappa^{(0)}_{p,\lambda},
$$
for $\lambda<0$ and $p$ satisfying Eq. \eqref{eqtheo:conds_EM}.
\end{remark}	

We recall that in the usual case, a biparametric extension of the Crámer-Rao inequality can be derived by simply multiplying the moment-entropy and Stam inequalities~\cite{Lutwak04}. In contrast to this, in the mirrored domain, a closer inspection of the allowed values of the parameters $p, \lambda$ in the inequalities \eqref{ineq:trip_Stam_extended} and \eqref{ineq:bip_E-M_mirrored} shows the non-existence of a mirrored version of the Crámer-Rao inequality, observing that the exponents in the above mentioned inequalities have opposite signs.

\section{Conclusions}

In this work we have introduced a pair of transformations which are mutually inverse, called the up/down transformations respectively. We have analyzed their basic mathematical properties and studied their relation with the main informational measures. On the one hand, we have shown that the absolute $p$-moment of the down transformed density turns out to be essentially the Rényi entropy (or $L_p$ norm) of the original one. On the other hand, the Rényi entropy of the down transformed density is shown to be equal to an extension of the Fisher information of the original probability density. Moreover, we have used these results together with well-established extensions of the classical Stam and moment-entropy inequalities to extend them to a mirrored domain of the parameters. We have shown that, in this mirrored domain, the minimizing densities of our new inequalities are unbounded probability density functions. The optimal bounds and the minimizing densities are explicitly computed. Furthermore, the classical Cramér-Rao inequality cannot be extended to the mirrored range of parameters.
	
A further application of the up and down transformations has been found in the framework of the Hausdorff moment problem. More precisely, these transformations allow us to connect the Hausdorff entropic moment problem with the classical one and, furthermore, we introduce a Fisher-type moment problem and investigate its connection to the Hausdorff moment problem.
	
Further investigation based on introducing new informational functionals related to the up and down transformed densities will be carried out in a forthcoming work.

%\Ver{Es claro que en general las transformaciones U/D no contienen a la identidad. ¿Existe alguna familia de funciones para las cuales U/D sí contengan a la identidad?}

\subsection*{Acknowledgements}
	
R. G. I. and D. P.-C. are partially supported by the Grants PID2020-115273GB-I00 and RED2022-134301-T funded by \text{MCIN/AEI/10.13039/501100011033}. D. P.-C. is also partially supported by the Grant PID2023-153035NB-100.

\bigskip

\noindent \textbf{Data availability} Our manuscript has no associated data.

\bigskip

\noindent \textbf{Competing interest} The authors declare that there is no competing interest.

%\appendix

\bibliographystyle{unsrt}
\bibliography{refs}
\end{document}